\newcommand{\eps}{\varepsilon}
\newcommand{\denselist}{\vspace{-.1in} \itemsep -2pt\parsep=-1pt\partopsep -2pt}
\renewcommand{\c}[1]{\ensuremath{\EuScript{#1}}}
\renewcommand{\b}[1]{\ensuremath{\mathbb{#1}}}
\newcommand{\IP}[2]{\left\langle #1, #2 \right\rangle}
\newcommand{\diam}[1]{\ensuremath{\textsf{diam(}#1\textsf{)}}}
\newcommand{\poly}[1]{\ensuremath{\textsf{poly}(#1)}}
\newcommand{\SO}[1]{\ensuremath{\textsf{SO(}#1\textsf{)}}}
\title{Comparing Distributions and Shapes using the Kernel Distance}
\author{Sarang Joshi \\ {\small\textsl{sjoshi@sci.utah.edu}} \and Raj
  Varma Kommaraju \\ {\small\textsl{rajvarma@cs.utah.edu}} \and Jeff
    M. Phillips \\ {\small\textsl{jeffp@cs.utah.edu}}\and Suresh
  Venkatasubramanian \\{\small\textsl{suresh@cs.utah.edu}}}
\date{}
\begin{document}
\pagestyle{empty}
\begin{titlepage}
\maketitle 
\begin{abstract}
Starting with a similarity function between objects, it is possible to define a distance metric on pairs of objects, and more generally on probability distributions over them. These distance metrics have a deep basis in functional analysis, measure theory and geometric measure theory, and have a rich structure that includes an isometric embedding into a (possibly infinite dimensional) Hilbert space.
They have recently been applied to numerous problems in machine learning and shape analysis. 

In this paper, we provide the first algorithmic analysis of these distance metrics.  Our
main contributions are as follows:
\begin{inparaenum}[(i)]
\item We present fast approximation algorithms for computing the kernel distance between two point sets $\c{P}$ and $\c{Q}$ that runs in near-linear time in the size of $\c{P} \cup \c{Q}$ (note that an explicit calculation would take quadratic time).
\item We present polynomial-time algorithms for approximately minimizing the kernel distance under rigid transformation; they run in time $O(n + \text{poly}(1/\eps, \log n))$. 
\item We provide several general techniques for reducing complex objects to convenient sparse representations (specifically to point sets or sets of points sets) which approximately preserve the kernel distance.  In particular, this allows us to reduce problems of computing the kernel distance between various types of objects such as curves, surfaces, and distributions to computing the kernel distance between point sets.  These take advantage of the reproducing kernel Hilbert space and a new relation linking binary range spaces to continuous range spaces with bounded fat-shattering dimension.  
\end{inparaenum}

% An interesting aspect of our work is that we can compute the kernel distance between curves, surfaces, and higher-order manifolds via a simple reduction to instances of weighted point sets, thus obviating the need for different kinds of algorithms for different kinds of shapes. 

% The \emph{current distance} was introduced by Vaillant and Glaun\`{e}s as a way of comparing shapes (point sets, curves, surfaces) \textbf{without} having to rely on computing correspondences between features in each shape. This distance measure is defined by viewing a shape as a linear operator on a $k$-form field, and constructing a (dual) norm on the space of shapes. 
% As formulated, it takes $O(nm)$ time to compute the current distance between two shapes of size $n$ and $m$, and there are no known algorithms to compute the current distance between shapes minimized under transformation groups.  
 \end{abstract}

\end{titlepage}
\pagestyle{plain}

%%%%%%%%%%%%%%%%%%%%%%%%%%%%%%%%%%%%%%%%%%%%%%%%%%%%%%%%%%%%
%%%%%%%%%%%%%%%%%%%%%%%%%%%%%%%%%%%%%%%%%%%%%%%%%%%%%%%%%%%%
\section{Introduction}
\label{sec:introduction}

Let $K : \reals^d \times \reals^d \to R$ be a kernel function, such as a Gaussian kernel; $K(p,q)$ describes how similar two points $p,q \in \reals^d$ are.  For point sets $\c{P}, \c{Q}$ we can define a similarity function $\kappa(\c{P},\c{Q}) = \sum_{p \in \c{P}} \sum_{q \in \c{Q}} K(p,q)$.  Then the \emph{kernel distance} is defined as
\begin{equation}
D_K(\c{P},\c{Q}) = \sqrt{\kappa(\c{P},\c{P}) + \kappa(\c{Q},\c{Q}) - 2 \kappa(\c{P},\c{Q})}.
\label{eq:CD-def}
\end{equation}
By altering the kernel $K$, and the weighting of elements in $\kappa$, the kernel distance can capture distance between distributions, curves, surfaces, and even more general objects.  

%This is a generalization of distance measures studied in measure theory, machine learning and shape analysis (where it is called the current distance~\cite{Vaillant2005}).
%In this paper, we present the first \emph{algorithmic} analysis of the kernel distance, with algorithms for computing the kernel distance between objects efficiently, constructing sparse representations of the input objects that approximate the kernel distance between them, and designing methods to do shape matching with the kernel distance under rigid transformations.  

\paragraph{Motivation.}
The earthmover distance (EMD) takes a metric space and two probability distributions over the space, and computes the amount of work needed to "transport" mass from one distribution to another. It has become a metric of choice in computer vision, where images are represented as intensity distributions over a Euclidean grid of pixels. It has also been applied to shape comparison~\cite{knauer}, where shapes are represented by point clouds (discrete distributions) in space. 

While the EMD is a popular way of comparing distributions over a metric space, it is also an expensive one. Computing the EMD requires solving an optimal transportation problem via the Hungarian algorithm, and while approximations exist for restricted cases like the Euclidean plane, they are still expensive, requiring either quadratic time in the size of the input or achieving at best a constant factor approximation when taking linear time~\cite{SJ08, ADIW09}. Further, it is hard to index structures using the EMD for performing near-neighbor, clustering and other data analysis operations. Indeed, there are lower bounds on our ability to embed the EMD into well-behaved normed spaces~\cite{DBLP:conf/soda/AndoniIK08}. 

The kernel distance has thus become an effective alternative to comparing distributions on a metric space. In machine learning, the kernel distance has been used to build metrics on distributions~\cite{Suq95,HB05,hilbert,smola,muller97} and to learn hidden Markov models~\cite{SBSGS10}. In the realm of shape analysis~\cite{Vaillant2005,glaunesthesis,GlaunesJoshi:MFCA:06,DBLP:conf/miccai/DurrlemanPTA08}, the kernel distance (referred to there as the \emph{current distance}) has been used to compare shapes, whether they be point sets, curves, or surfaces. 

All of these methods utilize key structural features of the kernel distance. When constructed using a \emph{positive definite}\footnote{A positive definite function generalizes the idea of a positive definite matrix; see Section~\ref{sec:definitions}.} similarity function $K$, the kernel distance can be interpreted through a lifting map $\phi : \reals^d \to \c{H}$ to a reproducing kernel Hilbert space (RKHS), $\c{H}$. This lifting map $\phi$ is isometric; the kernel distance is precisely the distance induced by the Hilbert space ($D_K(\{p\},\{q\}) = \|\phi(p) - \phi(p)\|_{\c{H}}$). 
Furthermore, a point set $\c{P}$ has an isometric lifted representation $\Phi(\c{P}) = \sum_{p \in \c{P}} \phi(p)$ as a single vector in $\c{H}$ so $D_K(\c{P}, \c{Q}) = \|\Phi(\c{P}) - \Phi(\c{Q})\|_\c{H}$.  
Moreover, by choosing an appropriately scaled basis, this becomes a simple $\ell_2$ distance, so all algorithmic tools developed for comparing points and point sets under $\ell_2$ can now be applied to distributions and shapes. 

Dealing with uncertain data provides another reason to study the kernel distance. Rather than thinking of $K(\cdot, \cdot)$ as a similarity function, we can think of it as a way of capturing spatial uncertainty;  $K(p,q)$ is the likelihood that the object claimed to be at $p$ is actually at $q$. For example, setting $K(p,q) = \exp( - \|p-q\|^2/\sigma)/(\sqrt{2\pi} \sigma)$ gives us a Gaussian blur function.  
In such settings, the kernel distance $D^2_K(\c{P},\c{Q})$ computes the symmetric difference $|\c{P} \triangle \c{Q}|$ between shapes with uncertainty described by $K$.

\paragraph{Our work.}
We present the first algorithmic analysis of the kernel distance. Our
main contributions are as follows:
\begin{inparaenum}[(i)]

\item We present fast approximation algorithms for computing the kernel distance between two point sets $\c{P}$ and $\c{Q}$ that runs in near-linear time in the size of $\c{P} \cup \c{Q}$ (note that an explicit calculation would take quadratic time).
%   \item an approximation scheme for minimizing the kernel distance under rigid transformations. 
%\end{inparaenum}
%We then turn to problems in shape analysis. 
\item We present polynomial-time algorithms for approximately minimizing the kernel distance under rigid transformation; they run in time $O(n + \text{poly}(1/\eps, \log n))$. 
\item We provide several general techniques for reducing complex objects to convenient sparse representations (specifically to point sets or sets of points sets) which approximately preserve the kernel distance.  In particular, this allows us to reduce problems of computing the kernel distance between various types of objects such as curves, surfaces, and distributions to computing the kernel distance between point sets.  
\end{inparaenum}

We build these results from two core technical tools. 
The first is a lifting map that maps objects into a finite-dimensional Euclidean space while approximately preserving the kernel distance. We believe that the analysis of lifting maps is of independent interest; indeed, these methods are popular in machine learning~\cite{hilbert,smola,SBSGS10} but (in the realm of kernels) have received less attention in algorithms. 
Our second technical tool is an theorem relating $\eps$-samples of range spaces defined with kernels to standard $\eps$-samples of range spaces on $\{0,1\}$-valued functions.  This gives a simpler algorithm than prior methods in learning theory that make use of the $\gamma$-fat shattering dimension, and yields smaller $\eps$-samples. 

\section{Preliminaries}

\paragraph{Definitions.}
\label{sec:definitions}
For the most general case of the kernel distance (that we will consider in this paper) we associate a unit vector $U(p)$ and a weighting $\mu(p)$ with every $p \in \c{P}$.  Similarly we associate a unit vector $V(p)$ and weighting $\nu(q)$ with every $q \in \c{Q}$.  Then we write 
\begin{equation}
\kappa(\c{P},\c{Q}) = \int_{p \in \c{P}} \int_{q \in \c{Q}} K(p,q) \IP{U(p)}{V(q)} \, d\mu(p) d\nu(q),
\label{eq:kappa}
\end{equation}
where $\langle \cdot, \cdot \rangle$ is the Euclidean inner product.  This becomes a distance $D_K$, defined through (\ref{eq:CD-def}).

When $\c{P}$ is a curve in $\b{R}^d$ we let $U(p)$ be the tangent vector at $p$ and $\mu(p) = 1$.  When $\c{P}$ is a surface in $\b{R}^3$ we let $U(p)$ be the normal vector at $p$ and $\mu(p)=1$.  This can be generalized to higher order surfaces through the machinery of $k$-forms and $k$-vectors~\cite{Vaillant2005,GI}.  

When $\c{P}$ is an arbitrary probability measure\footnote{We avoid the use of the term 'probability distribution' as this conflicts with the notion of a (Schwarz) distribution that itself plays an important role in the underlying theory.} in $\b{R}^d$, then all $U(p)$ are identical unit vectors and $\mu(p)$ is the probability of $p$.  
For discrete probability measures, described by a point set, we replace the integral with a sum and $\mu(p)$ can be used as a weight $\kappa(\c{P},\c{Q}) = \sum_{p \in \c{P}} \sum_{q \in \c{Q}} K(p,q) \mu(p) \nu(q)$.

\paragraph{From distances to metrics.}   
When $K$ is a symmetric similarity function (i.e. $K(p,p) = \max_{q \in \b{R}^d} K(p,q)$, \, $K(p,q) = K(q,p)$, and $K(p,q)$ decreases as $p$ and $q$ become ``less similar'') then $D_K$ (defined through (\ref{eq:kappa}) and (\ref{eq:CD-def})) is a distance function, but may not be a metric.   
However, when $K$ is positive definite, then this is sufficient for $D_K$ to not only be a metric\footnote{Technically this is not completely correct; there are a few special cases, as we will see, where it is a pseudometric~\cite{hilbert}.}, but also for $D^2_K$ to be of negative type~\cite{deza}.  

We say that a symmetric function $K : \reals^d \times \reals^d \rightarrow \reals$ is a  \emph{symmetric positive definite kernel} if for any nonzero $L_2$ function $f$ it satisfies 
$ 
\int_{p \in \c{P}} \int_{q \in \c{Q}} f(q) K(p, q) f(p) \; dp dq > 0.
$
The proof of $D_K$ being a metric follows by considering the reproducing kernel Hilbert space $\c{H}$ associated with such a $K$~\cite{Aronszajn1950}. Moreover, $D_K$ can be expressed very compactly in this space.  The lifting map $\phi : \reals^d \to \c{H}$ associated with $K$ has the ``reproducing property'' $K(p,q) = \langle \phi(p), \phi(q) \rangle_\c{H}$. 
%= \|\phi(p) - \phi(q)\|^2_\c{H}$. 
So by linearity of the inner product, $\Phi(\c{P}) = \int_{p \in \c{P}} \phi(p) \, d\mu(p)$ can be used to retrieve $D_K(\c{P}, \c{Q}) = \| \Phi(\c{P}) - \Phi(\c{Q})\|_\c{H}$ using the induced norm $\| \cdot \|_{\c{H}}$ of $\c{H}$.  Observe that this defines a norm $\|\Phi(\c{P})\|_{\c{H}} = \sqrt{\kappa(\c{P},\c{P})}$ for a shape.

\paragraph{Examples.}
If $K$ is the ``trivial'' kernel, where $K(p,p) = 1$ and $K(p,q) = 0$ for $p \ne q$, then the distance between any two sets (without multiplicity) $\c{P}, \c{Q}$ is $D^2_K(\c{P}, \c{Q}) = |\c{P} \Delta \c{Q}|$, where $\c{P} \Delta \c{Q} = \c{P} \cup \c{Q} \setminus (\c{P} \cap \c{Q})$ is the symmetric difference. In general for arbitrary probability measures, $D_K(\c{P}, \c{Q}) = \|\mu - \nu\|_2$. 
If $K(p,q) = \langle p,q \rangle$, the Euclidean dot product, then the resulting lifting map $\phi$ is the identity function, and the distance between two measures is the Euclidean distance between their means, which is a pseudometric but not a metric.

\paragraph{Gaussian properties.}
To simplify much of the presentation of this paper we focus on the case where the kernel $K$ is the Gaussian kernel; that is $K(p,q) = e^{-||p-q||^2/h}$.  Our techniques carry over to more general kernels, although the specific bounds will depend on the kernel being used. 
We now encapsulate some useful properties of Gaussian kernels in the following lemmata.  
When approximating $K(p,q)$, the first allows us to ignore pairs of points further that $\sqrt{h \ln(1/\gamma)}$ apart, the second allows us to approximate the kernel on a grid.
\begin{lemma}[Bounded Tails]
If $||p-q|| > \sqrt{h \ln (1/\gamma)}$ then $K(p,q) < \gamma$.  
\label{lem:G-dist}
\end{lemma}

\begin{lemma}[Lipschitz]
\label{lem:grid-eps}
For $\delta \in \b{R}^d$ where $\|\delta\| < \eps$, for points $p,q \in \b{R}^d$ we have $|K(p,q) - K(p,q+\delta)| \leq \eps/\sqrt{h}$.
\end{lemma}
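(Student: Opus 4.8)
The plan is to reduce the statement to a one-dimensional derivative estimate. Observe that $K(p,q) = f(p-q)$ for the radial function $f : \b{R}^d \to \b{R}$ given by $f(x) = e^{-\|x\|^2/h}$, so that $K(p,q) - K(p,q+\delta) = f(x) - f(x-\delta)$ with $x = p-q$. First I would apply the mean value theorem along the segment joining $x-\delta$ to $x$: there is a point $\xi$ on this segment with $f(x) - f(x-\delta) = \langle \nabla f(\xi), \delta \rangle$, whence by Cauchy--Schwarz $|K(p,q) - K(p,q+\delta)| \le \|\delta\| \cdot \sup_{\xi \in \b{R}^d} \|\nabla f(\xi)\|$.

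Next I would compute $\nabla f(x) = -\tfrac{2}{h} e^{-\|x\|^2/h}\, x$, so that $\|\nabla f(x)\| = \tfrac{2\|x\|}{h} e^{-\|x\|^2/h}$, a quantity depending only on $t = \|x\| \ge 0$. The only mildly substantive step is then to maximize $g(t) = \tfrac{2t}{h} e^{-t^2/h}$ over $t \ge 0$: from $g'(t) = \tfrac{2}{h} e^{-t^2/h}(1 - 2t^2/h)$ the maximizer is $t = \sqrt{h/2}$, and substituting back gives $\max_{t \ge 0} g(t) = \sqrt{2/(eh)} < 1/\sqrt{h}$. Combining this with the previous paragraph and the hypothesis $\|\delta\| < \eps$ yields $|K(p,q) - K(p,q+\delta)| < \eps/\sqrt{h}$, which implies the claimed bound.

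There is essentially no obstacle here; the argument is elementary calculus once one notices that the gradient norm of a Gaussian is globally bounded, since the decay of $e^{-t^2/h}$ dominates the linear growth of the prefactor $t$. The one point deserving a little care is that the bound on $\|\nabla f\|$ must hold \emph{uniformly} over all of $\b{R}^d$ rather than only near the origin, so one genuinely needs the global maximum of $g$ and not merely its value at $t=0$ (which is $0$) or an asymptotic estimate. For more general kernels the same scheme applies with $\sup \|\nabla K\|$ in place of $1/\sqrt{h}$, which is the sense in which the technique carries over, with the constant depending on the kernel.
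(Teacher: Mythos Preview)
Your proof is correct and follows essentially the same approach as the paper: both bound the Lipschitz constant of the Gaussian by maximizing $g(t)=\tfrac{2t}{h}e^{-t^2/h}$ at $t=\sqrt{h/2}$ to get $\sqrt{2/(eh)}<1/\sqrt{h}$. The only cosmetic difference is that the paper reduces to the one-variable function $\psi(x)=e^{-x^2/h}$ via the reverse triangle inequality on $\|p-q\|$, whereas you work directly in $\b{R}^d$ via the mean value theorem and Cauchy--Schwarz; the underlying calculus is identical.
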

\begin{proof}
The slope for $\psi(x) = e^{-x^2/h}$ is the function $\psi'(x) = -(2/h)xe^{-x^2/h}$. $\psi'(x)$ is maximized when $x = \sqrt{h/2}$, which yields $\psi'(\sqrt{h/2}) = -\sqrt{2/he} < 1/\sqrt{h}$. Thus $|\psi(x) - \psi(x+\eps)| < \eps/\sqrt{h}$.  And since translating by $\delta$ changes $\|p-q\|$ by at most $\eps$, the lemma holds. 
\end{proof}

%%%%%%%%%%%%%%%%%%%%%%%%%%%%%%%%%%%%%%%%%%%%%%%%%%%%%%%%%%%%

\subsection{Problem Transformations}
In prior research employing the kernel distance, \emph{ad hoc} discretizations are used to convert the input objects (whether they be distributions, point clouds, curves or surfaces) to weighted discrete point sets. This process introduces error in the distance computations that usually go unaccounted for. In this subsection, we provide algorithms and analysis for \emph{rigorously} discretizing input objects with guarantees on the resulting error. These algorithms, as a side benefit, provide a formal error-preserving reduction from kernel distance computation on curves and surfaces to the corresponding computations on discrete point sets. 

After this section, we will assume that all data sets considered $\c{P}, \c{Q}$ are discrete point sets of size $n$ with weight functions $\mu : \c{P} \to \reals$ and $\nu : \c{Q} \to \reals$. 
The weights need not sum to $1$ (i.e. need not be \emph{probability} measures), nor be the same for $\c{P}$ and $\c{Q}$; we will set $W = \max(\sum_{p \in \c{P}} \mu(p), \sum_{q \in \c{Q}} \nu(q))$ to denote the total measure.  This implies (since $K(p,p)=1$) that $\kappa(\c{P},\c{P}) \leq W^2$.  
All our algorithms will provide approximations of $\kappa(\c{P},\c{Q})$ within additive error $\eps W^2$. Since without loss of generality we can always normalize so that $W = 1$, our algorithms all provide an additive error of $\eps$.  
We also set $\Delta = (1/h)\max_{u,v \in \c{P} \cup \c{Q}} \|u-v\|$ to capture the normalized diameter of the data.

\paragraph{Reducing orientation to weights.}
\label{sec:from-2-surfaces}
The kernel distance between two oriented curves or surfaces can be reduced to a set of distance computations on appropriately weighted point sets.  
We illustrate this in the case of surfaces in $\reals^3$.  The same construction will also work for curves in $\reals^d$.  

For each point $p \in \c{P}$ we can decompose $U(p) \triangleq (U_1(p), U_2(p), U_3(p))$ into three fixed orthogonal components such as the coordinate axes $\{e_1, e_2, e_3\}$.  Now
\vspace{-.1in}
\begin{eqnarray*}
\kappa(\c{P},\c{Q}) 
= & 
\displaystyle{\int_{p \in \c{P}} \int_{q \in \c{Q}} K(p,q) \IP{U(p)}{V(q)} \, d\mu(p) d\nu(q)}
&=
\int_{p \in \c{P}} \int_{q \in \c{Q}} K(p,q) \sum_{i=1}^3 (U_i(p) V_i(q)) \, d\mu(p) d\nu(q)
\\ =&
\displaystyle{\sum_{i=1}^3 \int_{p \in \c{P}} \int_{q \in \c{Q}} K(p,q) (U_i(p) V_i(q)) \, d\mu(p) d\nu(q)}
&=
\sum_{i=1}^3 \kappa(\c{P}_i, \c{Q}_i),  
\end{eqnarray*}
where each $p \in \c{P}_i$ has measure $\mu_i(p) = \mu(p) \|U_i(p)\|$.  
When the problem specifies $U$ as a unit vector in $\b{R}^d$, this approach reduces to  $d$ independent problems without unit vectors.

\paragraph{Reducing continuous to discrete.}
\label{sec:surface kernel distance}
We now present two simple techniques (gridding and sampling) to reduce a continuous $\c{P}$ to a discrete point set, incurring at most $\eps W^2$ error.

%\paragraph{Gridding smooth surfaces.}
We construct a grid $G_\eps$ (of size $O((\Delta/\eps)^d)$) on a smooth shape $\c{P}$, so no point\footnote{For distributions with decaying but infinite tails, we can truncate to ignore tails such that the integral of the ignored area is at most $(1-\eps/2)W^2$ and proceed with this approach using $\eps/2$ instead of $\eps$.}  $p \in \c{P}$ is further than $\eps \sqrt{h}$ from a point $g \in G_\eps$.  
Let $P_g$ be all points in $\c{P}$ closer to $g$ than any other point in $G_\eps$.  
%Let $\{e_1, \ldots, e_r\}$ be $r = {d \choose k}$ orthogonal $k$-vectors.  
%For each $g \in G_\eps$ we create $2r$ points $\{p_1, \ldots, p_{2r}\}$ at $g$ for $0$-surfaces $\c{S}$ where the direction of $\mu(p_i) = e_i$ for $i\leq r$ and $\eta(p_i) = e_{i-r}$ for $i > r$.  
Each point $g$ is assigned a weight $\mu(g) = \int_{p \in P_g} 1 \, d\mu(p)$.  
The correctness of this technique follows by Lemma \ref{lem:grid-eps}.

%\paragraph{Sampling smooth surfaces.}
Alternatively, we can sample $n = O((1/\eps^2) (d + \log(1/\delta))$ points at random from $\c{P}$.  
If we have not yet reduced the orientation information to weights, we can generate $d$ points each with weight $U_i(p)$.
This works with probability at least $1-\delta$ by invoking a coreset technique summarized in Theorem \ref{thm:random-coreset}.

For the remainder of the paper, we assume our input dataset $\c{P}$ is a weighted point set in $\b{R}^d$ of size $n$.

%%%%%%%%%%%%%%%%%%%%%%%%%%%%%%%%%%%%%%%%%%%%%%%%%%%%%%%%%%%%
%%%%%%%%%%%%%%%%%%%%%%%%%%%%%%%%%%%%%%%%%%%%%%%%%%%%%%%%%%%%
\section{Computing the Kernel Distance I: WSPDs}
\label{sec:WSPD}

The well-separated pair decomposition (WSPD)~\cite{CK95,HP} is a standard data structure to approximately compute pairwise sums of distances in near-linear time. A consequence of Lemma~\ref{lem:grid-eps} is that we can upper bound the error of estimating $K(p,q)$ by a nearby pair $K(\tilde p,\tilde q)$. Putting these observations together yields (with some work) an approximation for the kernel distance.
Since $D_K^2(\c{P},\c{Q}) =  \kappa(\c{P},\c{P}) +  \kappa(\c{Q},\c{Q}) - 2\kappa(\c{P},\c{Q})$, the problem reduces to computing $\kappa(\c{P},\c{Q})$ efficiently and with an error of at most $(\eps/4)W^2$.  

Two sets $A$ and $B$ are said to be \emph{$\alpha$-separated}~\cite{CK95} if 
$
\max\{\diam{A},\diam{B}\} \leq \alpha \min_{a \in A, b \in B} ||a-b||
$. Let $A \otimes B = \{ \{x,y\} \mid x\in A, y \in B\}$ denote the set of all unordered pairs of elements formed by $A$ and $B$. 
An \emph{$\alpha$-WSPD} of a point set $P$ is a set of pairs $\c{W} = \left\{ \{A_1, B_1\}, \ldots, \{A_s, B_s\}\right\}$ such that 
\begin{itemize}[(i)] \denselist
\item[(i)] $A_i, B_i \subset P$ for all $i$,
\item[(ii)] $A_i \cap B_i = \emptyset$  for all $i$,
\item[(iii)] disjointly $\bigcup_{i=1}^s A_i \otimes B_i = P \otimes P$, and
\item[(iv)] $A_i$ and $B_i$ are $\alpha$-separated for all $i$.  
\end{itemize}
For a point set $P \subset \b{R}^d$ of size $n$, we can construct an $\alpha$-WSPD of size $O(n/\alpha^d)$ in time $O(n \log n + n/\alpha^d)$~\cite{HP,Cla83}.  

We can use the WSPD construction to compute $D_K^2(\c{P},\c{Q})$ as follows.  We first create an $\alpha$-WSPD of $\c{P} \cup \c{Q}$ in $O(n \log n + n/\alpha^d)$ time.  Then for each pair $\{A_i, B_i\}$ we also store four sets $A_{i,\c{P}} = \c{P} \cap A_i$, $A_{i,\c{Q}} = \c{Q} \cap A_i$, $B_{i,\c{P}} = \c{P} \cap B_i$, and $B_{i,\c{Q}} = \c{Q} \cap B_i$.  Let $a_i \in A_i$ and $b_i \in B_i$ be arbitrary elements, and let $D_i = \|a_i - b_i\|$. By construction, $D_i$ approximates the distance between any pair of elements in $A_i \times B_i$ with error at most $2\alpha D_i$.

In each pair $\{A_i, B_i\}$, we can compute the weight of the edges from $\c{P}$ to $\c{Q}$: 
\[
W_i = \bigg(\sum_{p \in A_{i,\c{P}}}\mu(p)\bigg) \bigg(\sum_{q \in B_{i,\c{Q}}}\nu(q)\bigg) + 
\bigg(\sum_{q \in A_{i,\c{Q}}} \nu(q) \bigg) \bigg(\sum_{p \in B_{i,\c{P}}} \mu(p)\bigg).
\]
We estimate the contribution of the edges in pair $(A_i, B_i)$ to $\kappa(\c{P}, \c{Q})$ as 
\[
\sum_{(a,b) \in A_{i,P} \times B_{i,Q}} \mu(a) \nu(b) e^{-D_i^2/h} + \sum_{(a,b) \in A_{i,Q} \times B_{i,P}} \mu(b) \nu(a) e^{-D_i^2/h}
=
W_i e^{-D_i^2 /h}.
\]
Since $D_i$ has error at most $2\alpha D_i$ for each pair of points, Lemma \ref{lem:grid-eps} bounds the error as at most $W_i (2\alpha D_i/\sqrt{h})$.

In order to bound the total error to $(\eps/4) W^2$, we bound the error for each pair by $(\eps/4) W_i$ since $\sum_i W_i = \sum_{p \in P} \sum_{q \in Q} \mu(p) \nu(q) = W^2$.  By Lemma \ref{lem:G-dist}, if $D_i > \sqrt{h \ln(1/\gamma)}$, then $e^{- D_i^2/h} < \gamma$.  So for any pair with $D_i > 2\sqrt{h \ln(4/\eps)}$, (for $\alpha < 1/2$) we can ignore, because they cannot have an effect on $\kappa(\c{P}, \c{Q})$ of more than $(1/4) \eps W_i$, and thus cannot have error more than $(1/4) \eps W_i$.  

Since we can ignore pairs with $D_i > 2 \sqrt{h \ln(4/\eps)}$, each pair will have error at most $W_i(2\alpha(2 \sqrt{h \ln(4/\eps)}/\sqrt{h})$ $=$ $W_i(4\alpha \sqrt{\ln(4/\eps)})$.  We can set this equal to $(\eps/4)W_i$ and solve for $\alpha < (1/4) \eps /\sqrt{\ln(4/\eps)}$.  This ensures that each pair with $D_i \leq 2 \sqrt{h \ln(4/\eps)}$ will have error less than $(\eps/4)W_i$, as desired.  
%
%We summarize in the following theorem.

\begin{theorem}
By building and using an $((\eps/4)/\sqrt{\ln(4/\eps)})$-WSPD, we can compute a value $U$ in time $O(n \log n + (n/\eps^d) \log^{d/2}(1/\eps))$, such that 
\[
\left| U - D_K^2(\c{P},\c{Q}) \right| \leq \eps W^2.
\] 
\label{thm:WSPD}
\end{theorem}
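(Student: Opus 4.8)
The plan is to assemble the pieces already laid out in the text into a single error-and-runtime accounting. The target is $D_K^2(\c{P},\c{Q}) = \kappa(\c{P},\c{P}) + \kappa(\c{Q},\c{Q}) - 2\kappa(\c{P},\c{Q})$, so it suffices to approximate each of the three $\kappa$-terms to additive error $(\eps/4)W^2$: then the triangle-inequality-free linear combination has error at most $(\eps/4 + \eps/4 + 2\cdot\eps/4)W^2 = \eps W^2$. Actually I'd be slightly more careful and just show how to estimate $\kappa(\c{P},\c{Q})$ to error $(\eps/4)W^2$ (as set up in the excerpt), note the self-terms $\kappa(\c{P},\c{P})$ and $\kappa(\c{Q},\c{Q})$ are handled identically — they are just the special case $\c{Q}=\c{P}$ — and then observe the total error bound. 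Each of the three estimators reuses the \emph{same} $\alpha$-WSPD of $\c{P}\cup\c{Q}$, so we build that structure once.

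First I would fix $\alpha = (\eps/4)/\sqrt{\ln(4/\eps)}$ (enforcing $\alpha < 1/2$ for small $\eps$, which is where the "$\alpha<1/2$" caveat in the text is used), build an $\alpha$-WSPD $\c{W} = \{\{A_i,B_i\}\}_{i=1}^s$ of $\c{P}\cup\c{Q}$ in time $O(n\log n + n/\alpha^d)$ with $s = O(n/\alpha^d)$ pairs, and for each pair precompute the four intersection sets $A_{i,\c{P}}, A_{i,\c{Q}}, B_{i,\c{P}}, B_{i,\c{Q}}$ and the four partial weight sums, in total time linear in $\sum_i(|A_i|+|B_i|)$, which is $O(n\log n)$ for a standard WSPD. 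Then set $U = \sum_i W_i e^{-D_i^2/h}$ as the estimate of $\kappa(\c{P},\c{Q})$ (and analogously for the two self-terms), computable in $O(s)$ additional time. Next I would carry out the error analysis exactly as in the paragraphs preceding the theorem: by property (iii) the pairs partition $P\otimes P$, so $\sum_i W_i = W^2$; for pairs with $D_i > 2\sqrt{h\ln(4/\eps)}$ the true contribution is at most $\gamma W_i$ with $\gamma = \eps/4$ by Lemma~\ref{lem:G-dist}, so dropping them costs at most $(\eps/4)W_i$ each; for the remaining pairs, $\alpha$-separation gives $|\|a-b\| - D_i| \le 2\alpha D_i$, and since $e^{-x^2/h}$ has Lipschitz constant $1/\sqrt h$ (Lemma~\ref{lem:grid-eps}), the per-pair error is at most $W_i\cdot 2\alpha D_i/\sqrt h \le W_i \cdot 4\alpha\sqrt{\ln(4/\eps)} = (\eps/4)W_i$ by our choice of $\alpha$. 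Summing over all pairs gives total error $\le (\eps/4)\sum_i W_i = (\eps/4)W^2$ for $\kappa(\c{P},\c{Q})$, and hence $\le \eps W^2$ for $D_K^2$.

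For the running time, substitute $\alpha = \Theta(\eps/\sqrt{\ln(1/\eps)})$ into $O(n\log n + n/\alpha^d)$: the second term becomes $O((n/\eps^d)\log^{d/2}(1/\eps))$, matching the claimed bound, and evaluating the $s = O((n/\eps^d)\log^{d/2}(1/\eps))$ kernel values adds no more. The main obstacle — though it is conceptual rather than technical — is being careful that the $\pm 2\alpha D_i$ distance error translates correctly into a \emph{kernel-value} error: one must use $D_i$ (not $\|a-b\|$) as the argument fed to the Lipschitz bound, and one must have already discarded the far pairs so that $D_i$ is bounded by $2\sqrt{h\ln(4/\eps)}$ before multiplying by the slope $1/\sqrt h$; otherwise the $2\alpha D_i/\sqrt h$ term is not uniformly small. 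Everything else is bookkeeping, and I would present it as: (1) build WSPD once; (2) define the estimator; (3) bound error of far pairs via Lemma~\ref{lem:G-dist}; (4) bound error of near pairs via $\alpha$-separation plus Lemma~\ref{lem:grid-eps}; (5) sum using $\sum_i W_i = W^2$; (6) repeat for the two self-terms and combine; (7) tally the runtime.
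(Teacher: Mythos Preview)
Your proposal is correct and follows essentially the same approach as the paper: the argument preceding the theorem statement \emph{is} the paper's proof, and you have accurately reconstructed it --- decompose $D_K^2$ into three $\kappa$-terms, build one $\alpha$-WSPD on $\c{P}\cup\c{Q}$, use Lemma~\ref{lem:G-dist} to dispose of far pairs and Lemma~\ref{lem:grid-eps} plus $\alpha$-separation for near pairs, then sum via $\sum_i W_i = W^2$ and substitute $\alpha$ into the WSPD construction time. Your step-by-step outline (1)--(7) matches the paper's reasoning essentially line for line.
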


%%%%%%%%%%%%%%%%%%%%%%%%%%%%%%%%%%%%%%%%%%%%%%%%%%%%%%%%%%%%
\section{Computing the Kernel Distance II: Approximate Feature Maps}
%In this section we describe how to compute the kernel distance approximately and efficiently.  We first note that a technique based on well-separated pair decompositions (WSPDs)~\cite{CK95} yields an algorithm that runs in $O(n \log n + (n/\eps^d) \log^{d/2}(1/\eps))$ time and produces a value $U$ such that $|U - D_K^2(\c{P},\c{Q})| \leq \eps$.  The details are in Appendix \ref{sec:WSPD}.

%For the rest of the 
In this section, we describe (approximate) feature representations $\Phi(\c{P}) = \sum_{p \in \c{P}} \phi(p) \mu(p)$ for shapes and distributions that reduce the kernel distance computation to an $\ell_2$ distance calculation $\|\Phi(\c{P}) - \Phi(\c{Q}) \|_\c{H}$ in an RKHS, $\c{H}$.  This mapping immediately yields algorithms for a host of analysis problems on shapes and distributions, by simply applying Euclidean space algorithms to the resulting feature vectors. 

The feature map $\phi$ allows us to translate the kernel distance (and norm) computations into operations in a RKHS that take time $O(n \rho)$ if $\c{H}$ has dimension $\rho$, rather than the brute force time $O(n^2)$. Unfortunately, $\c{H}$ is in general infinite dimensional, including the case of the Gaussian kernel. Thus, we use dimensionality reduction to find an approximate mapping $\tilde{\phi} : \reals^d \to \reals^\rho$ (where $\tilde \Phi(\c{P}) = \sum_{p \in \c{P}} \tilde \phi(p)$) that approximates $\kappa(\c{P}, \c{Q})$:
%\\
%$\displaystyle{\hspace{1.4in}
\[
\bigg| \sum_{p \in P}\sum_{q \in Q} K(p,q)\mu(p)\nu(q) - \sum_{p \in P}\sum_{q \in Q}\IP{\tilde \phi(p)}{\tilde \phi(q)} \bigg| \leq \eps W^2.
\]
%}$
%Such a feature map $\tilde \phi$ is called an \emph{$\eps$-approximate feature map of $\phi$}. 

The analysis in the existing literature on approximating feature space does not directly bound the dimension $\rho$ required for a specific error bound\footnote{Explicit matrix versions of the Johnson-Lindenstraus lemma~\cite{JL84} cannot be directly applied because the source space is itself infinite dimensional, rather than  $\reals^d$.}.
We derive bounds from two known techniques:  
random projections~\cite{Rahimi2007} (for shift-invariant kernels, includes Gaussians)  
and 
the Fast Gauss Transform~\cite{Yang2003,Greengard1991} (for Gaussian kernel). 
We produce three different features maps, with different bounds on the number of dimensions $\rho$ depending on $\log n$ ($n$ is the number of points), $\eps$ (the error), $\delta$ (the probability of failure), $\Delta$ (the normalized diameter of the points), and/or $d$ (the ambient dimension of the data \emph{before} the map).

%%%%%%%%%%%%%%%%%%%%%%%%%%%%%%
\subsection{Random Projections Feature Space}
\label{sec:rahimi-recht-appr}

Rahimi and Recht~\cite{Rahimi2007} proposed a feature mapping that essentially applies an implicit Johnson-Lindenstrauss projection from $\c{H} \to \b{R}^\rho$.
The approach works for any shift invariant kernel (i.e one that can be written as $K(p,q) = k(p-q)$). %We now outline their central argument.
For the Gaussian kernel, $k(z) = e^{-\|z\|^2/2}$, where $z \in \b{R}^d$. 
Let the Fourier transform of $k : \reals^d \to \reals^+$ is $g(\omega) = (2\pi)^{-d/2} e^{-\|\omega\|^2/2}$.  
A basic result in harmonic analysis \cite{bochner} is that $k$ is a kernel if and only if $g$ is a measure (and after scaling, is a probability distribution).  
Let $\omega$ be drawn randomly from the distribution defined by $g$:
\[
k(x-y) 
= 
\int_{\omega \in \reals^d} g(\omega) e^{\iota \IP{\omega}{x-y}} \; d\omega
=
E_\omega[\IP{\psi_{\omega}(x)}{\psi_{\omega}(y)}],
\]
where $\psi_{\omega}(z) = (\cos (\IP{\omega}{z}), \sin(\IP{\omega}{z}))$ are the real and imaginary components of $e^{\iota \IP{\omega}{z}}$.  This implies that $\IP{\psi_\omega(x)}{\psi_\omega(y)}$ is an unbiased estimator of $k(x-y)$. 

We now consider a $\rho$-dimensional feature vector $\phi_{\Upsilon} : P \to \b{R}^{\rho}$ where the $(2i-1)$th and $(2i)$th coordinates are described by $\mu(p) \psi_{\omega_i}(p)/(\rho/2) = (2\mu(p) \cos(\IP{\omega_i}{z})/\rho, 2\mu(p) \sin(\IP{\omega_i}{z})/\rho)$ for some $\omega_i \in \Upsilon = \{\omega_1, \ldots, \omega_{\rho/2}\}$ drawn randomly from $g$.  Next we prove a bound on $\rho$ using this construction.  

\begin{lemma}\label{lem:rand-feat}
For $\phi_\Upsilon : \c{P} \cup \c{Q} \to \reals^\rho$ with $\rho = O((1/\eps^2) \log(n/\delta))$, with probability $\geq 1-\delta$
\[
\bigg|\sum_{p \in \c{P}} \sum_{q \in \c{Q}} K(p,q) \mu(p) \nu(q) - \sum_{p \in \c{P}} \sum_{q  \in \c{Q}} \IP{\phi_\Upsilon(p)}{\phi_\Upsilon(q)} \bigg| \leq \eps W^2.
\]
\end{lemma}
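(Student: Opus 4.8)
The plan is to fix the feature map $\phi_\Upsilon$ defined above and show that the double sum $\sum_{p\in\c{P}}\sum_{q\in\c{Q}}\IP{\phi_\Upsilon(p)}{\phi_\Upsilon(q)}$ is, up to rescaling, an average of $\rho/2$ i.i.d.\ random variables whose expectation is exactly the target quantity $\kappa(\c{P},\c{Q}) = \sum_{p}\sum_{q} K(p,q)\mu(p)\nu(q)$, and then apply a concentration (Hoeffding-type) bound. First I would expand the inner product of feature vectors: by the definition of $\phi_\Upsilon$, the contribution of frequency $\omega_i$ to $\IP{\phi_\Upsilon(p)}{\phi_\Upsilon(q)}$ is $(4/\rho^2)\mu(p)\nu(q)\IP{\psi_{\omega_i}(p)}{\psi_{\omega_i}(q)}$, so summing over $p,q$ and over the $\rho/2$ frequencies gives
\[
\sum_{p\in\c{P}}\sum_{q\in\c{Q}}\IP{\phi_\Upsilon(p)}{\phi_\Upsilon(q)}
= \frac{1}{\rho/2}\sum_{i=1}^{\rho/2} \frac{2}{\rho}\sum_{p\in\c{P}}\sum_{q\in\c{Q}}\mu(p)\nu(q)\IP{\psi_{\omega_i}(p)}{\psi_{\omega_i}(q)}.
\]
Wait — I need to be careful with the normalization constants; the cleaner way is to define, for each $i$, the random variable
\[
Y_i = \sum_{p\in\c{P}}\sum_{q\in\c{Q}}\mu(p)\nu(q)\,\IP{\psi_{\omega_i}(p)}{\psi_{\omega_i}(q)},
\]
so that the feature-map double sum equals $\frac{1}{\rho/2}\sum_{i=1}^{\rho/2} Y_i$ once the $2/\rho$ factors are tracked correctly. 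Using the harmonic-analysis identity quoted above, $E_{\omega_i}[\IP{\psi_{\omega_i}(p)}{\psi_{\omega_i}(q)}] = k(p-q) = K(p,q)$, and hence by linearity $E[Y_i] = \kappa(\c{P},\c{Q})$ for every $i$.

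Next I would bound the range of $Y_i$ so that Hoeffding's inequality applies. Since $\psi_\omega(z) = (\cos\IP{\omega}{z}, \sin\IP{\omega}{z})$ is a unit vector, $|\IP{\psi_{\omega_i}(p)}{\psi_{\omega_i}(q)}| \le 1$, and therefore
\[
|Y_i| \le \sum_{p\in\c{P}}\sum_{q\in\c{Q}} \mu(p)\nu(q) = \Big(\sum_{p\in\c{P}}\mu(p)\Big)\Big(\sum_{q\in\c{Q}}\nu(q)\Big) \le W^2,
\]
using the definition $W = \max(\sum_p \mu(p), \sum_q \nu(q))$. So each $Y_i$ lies in an interval of width at most $2W^2$. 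Applying Hoeffding's inequality to the average of $m = \rho/2$ independent copies,
\[
\Pr\!\left[\Big|\frac{1}{m}\sum_{i=1}^m Y_i - \kappa(\c{P},\c{Q})\Big| > \eps W^2\right] \le 2\exp\!\big(-m\eps^2/2\big),
\]
and setting the right side to at most $\delta$ gives $m = O((1/\eps^2)\log(1/\delta))$, i.e.\ $\rho = O((1/\eps^2)\log(1/\delta))$.

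The one subtlety — and the place where the stated bound $\rho = O((1/\eps^2)\log(n/\delta))$ (rather than $\log(1/\delta)$) comes from — is that the above argument, taken at face value, already gives the result with only a $\log(1/\delta)$ dependence, so I should double-check whether the paper actually wants a stronger \emph{uniform} guarantee. The likely intent is that a single draw of $\Upsilon$ should simultaneously preserve inner products $\IP{\phi_\Upsilon(p)}{\phi_\Upsilon(q)}$ for \emph{all} $O(n^2)$ pairs $p,q \in \c{P}\cup\c{Q}$ (this is what makes $\tilde\phi$ a genuine approximate feature map, usable downstream for arbitrary subsets and transformations, not just for this one pair of sets). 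In that case I would instead bound $|\IP{\psi_{\omega_i}(p)}{\psi_{\omega_i}(q)} - K(p,q)|$ pointwise for each pair via Hoeffding with error parameter $\eps$, take a union bound over the $\le n^2$ pairs, picking up the $\log(n^2/\delta) = O(\log(n/\delta))$ factor, and then sum the per-pair errors weighted by $\mu(p)\nu(q)$ (which total $W^2$) to conclude. The main obstacle is thus not any hard inequality but getting the normalization constants in $\phi_\Upsilon$ exactly right and deciding which of these two routes matches the claimed bound; I would go with the union-bound route since it yields the stated $\log(n/\delta)$ and gives the more robust statement.
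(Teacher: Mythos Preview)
Your chosen route---apply Hoeffding to each of the $\le n^2$ pairs $(p,q)$ to get $|\mu(p)\nu(q)K(p,q) - \IP{\phi_\Upsilon(p)}{\phi_\Upsilon(q)}| \le \eps\,\mu(p)\nu(q)$, union-bound to pick up the $\log(n/\delta)$, then sum the weighted per-pair errors to $\eps W^2$---is exactly the paper's proof. Your first route (Hoeffding applied directly to the aggregate variables $Y_i$, each bounded by $W^2$) is also correct and in fact yields the tighter $\rho = O((1/\eps^2)\log(1/\delta))$ for the lemma as literally stated; the paper's per-pair argument is looser here but, as you correctly diagnose, establishes the stronger uniform guarantee that the same $\Upsilon$ approximates every individual inner product, which is what the paper reuses downstream (and is the source of the $\log n$ in the stated bound).
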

\begin{proof}
We make use of the following Chernoff-Hoeffding bound.   Given a set $\{X_1, \ldots, X_n\}$ of independent random variables, such that $| X_i - E[X_i] | \leq \Lambda$, then for $M = \sum_{i=1}^n X_i$ we can bound $\Pr[| M - E[M] | \geq \alpha] \leq 2 e^{-2\alpha^2 / (n\Lambda^2)}$.  
We can now bound the error of using $\phi_{\Upsilon}$ for any pair $(p,q) \in \c{P} \times \c{Q}$ as follows:

\begin{eqnarray*}
\lefteqn{\Pr\left[\left| \IP{\phi_{\Upsilon}(p)}{\phi_{\Upsilon}(q)} - \mu(p)\nu(q)k(p-q)\right| \geq \eps \mu(p) \nu(q)\right] }  
\\ &= &
\Pr\left[\left| \IP{\phi_{\Upsilon}(p)}{\phi_{\Upsilon}(q)} - E_\Upsilon \left[\IP{\phi_{\Upsilon}(p)}{\phi_{\Upsilon}(q)}\right] \right| \geq \eps \mu(p)\nu(q)\right]
\\ &\leq&
\Pr\left[\left| \sum_i \frac{2}{\rho} \mu(p)\nu(q)\IP{\psi_{\omega_i}(p)}{\psi_{\omega_i}(q)} - E_\Upsilon\left[\sum_i \frac{2}{\rho} \mu(p) \nu(q) \IP{\psi_{\omega_i}(p)}{\psi_{\omega_i}(q)}\right] \right| \geq \eps \mu(p)\nu(q)\right]
\\ &\leq&
2 e^{-2 (\eps \mu(p)\nu(q))^2 / (\rho \Lambda^2/2)}
%\\ &=& 
%2 e^{-2 \eps^2 \mu(p)^2\nu(q)^2/ (\rho \Delta^2/2)} 
\leq 
2 e^{- \rho \eps^2/64},
\end{eqnarray*}
where the last inequality follows by $\Lambda \leq 2 \max_{p,q} (2/\rho) \mu(p)\nu(q) \IP{\psi_\omega(p)}{\psi_\omega(q)} \leq 8(2/\rho) \mu(p)\nu(q)$ since for each pair of coordinates $\|\psi_\omega(p)\| \leq 2$ for all $p \in \c{P}$ (or $q \in \c{Q}$).  
By the union bound, the probability that this holds for all pairs of points $(p,q) \in \c{P} \times \c{Q}$ is given by
\[
\Pr\left[\forall_{(p,q) \in \c{P} \times \c{Q}}\left| \IP{\phi_{\Upsilon}(p)}{\phi_{\Upsilon}(q)} - \mu(p)\nu(q)k(p-q)\right| \geq \eps \mu(p) \nu(q)\right]
 \leq (n^2) 2 e^{-\rho \eps^2/64}.
\]
Setting $\delta \geq n^2 2 e^{-\rho\eps^2/64}$ and solving for $\rho$ yields that for $\rho = O((1/\eps^2) \log (n/\delta))$, with probability at least $1-\delta$, for all $(p,q) \in \c{P} \times \c{Q}$ we have $|\mu(p)\nu(q)k(p-q) - \IP{\phi_\Upsilon(x)}{\phi_\Upsilon(y)}| \leq \eps \mu(p)\nu(q)$.  
It follows that with probability at least $1-\delta$
\[
\bigg| \sum_{p \in \c{P}}\sum_{q \in \c{Q}} \mu(p)\nu(q) K(p,q) - \sum_{p \in \c{P}}\sum_{q \in \c{Q}} \IP{\phi_\Upsilon(p)}{\phi_\Upsilon(q)} \bigg| \leq \eps \sum_{p\in \c{P}} \sum_{q \in \c{Q}}\mu(p) \nu(q) \leq \eps W^2.  \qedhere
\]
\end{proof}

Note that the analysis of Rahimi and Recht~\cite{Rahimi2007} is done for unweighted point sets (i.e. $\mu(p) = 1$) and actually goes further, in that it yields a guarantee for any pair of points taken from a manifold $\c{M}$ having diameter $\Delta$. They do this by building an $\eps$-net over the domain and applying the above tail bounds to the $\eps$-net. We can adapt this trick to replace the  $(\log n)$ term in $\rho$ by a $(d \log (\Delta/\eps))$ term, recalling $\Delta = (1/h) \max_{p,p' \in \c{P} \cup \c{Q}} \|p - p'\|$. This leads to the same guarantees as above with a dimension of $\rho = O((d/\eps^2) \log (\Delta/ \eps \delta))$.

%%%%%%%%%%%%%%%%%%%%%%%%%%%%%%%%%%%%%%%%%%%%%%%%%%%%%%%%%%%%
\subsection{Fast Gauss Transform Feature Space}
\label{ssec:ifgt}
The above approach works by constructing features in the frequency domain. In what follows, we present an alternative approach that operates in the spatial domain directly. We base our analysis on the Improved Fast Gauss Transform (IFGT)~\cite{Yang2003}, an improvement on the Fast Gauss Transform.  We start with a brief review of the IFGT (see the original work~\cite{Yang2003} for full details). 

%%%%%%%%%%%%%%%%%%%%%%%%%%%%%%%%%
\paragraph{IFGT feature space construction.}
The goal of the IFGT is to approximate $\kappa(\c{P},\c{Q})$.  
First we rewrite $\kappa(\c{P},\c{Q})$ as the summation $\sum_{q \in \c{Q}}G(q)$ where $G(q) = \nu(q) \sum_{p \in \c{P}} e^{-\|p-q\|^2/h^2} \mu(p)$.  
Next, we approximate $G(q)$ in two steps.  First we rewrite 
\[
G(q) = \nu(q) \sum_{p \in P} \mu(p) e^{-\frac{\|q - x_*\|^2}{h^2}} e^{-\frac{\|p - x_*\|^2}{h^2}} e^{\frac{2\|q - x_*\| \cdot \|p - x_*\|}{h^2}},
\]
where the quantity $x_*$ is a fixed vector that is usually the centroid of $\c{P}$.  
The first two exponential terms can be computed for each $p$ and $q$ once.  
Second, we approximate the remaining exponential term by its Taylor expansion 
$e^v = \sum_{i\ge 0} \frac{v^i}{i!}$. 
%Cutting off the Taylor expansion at terms of total degree $\tau -1$ yields an expansion with 
%$\rho =  \binom{\tau+d-1}{d} = O(\tau^d)$
%terms.  The coefficients of this truncated expansion represent the coordinates in an approximate feature map $\tilde \phi : \reals^d \to \reals^\rho$, yielding $\tilde G(q) = \sum_{p \in \c{P}} \IP{\tilde \phi(q)}{\tilde \phi(p)}$ which we then use to construct the estimate $\tilde S = \sum_{q \in \c{Q}} \tilde G(q)$.  
%Constructions and further details are in Appendix \ref{sec:IFGT-app}.
%
After a series of algebraic manipulations, the following expression emerges:
\[ 
G(q) = \nu(q) e^{- \frac{\|q - x_*\|^2}{h^2}}  \sum_{\alpha \ge 0} C_\alpha \Bigl(\frac{q - x_*}{h}\Bigr)^\alpha
\]
where $C_\alpha$ is given by
\[ 
C_\alpha = \frac{2^{|\alpha|}}{\alpha !}\sum_{p \in P} \mu(p) e^{- \frac{\|p - x_*\|^2}{h^2}}\Bigl(\frac{p - x_*}{h}\Bigr)^\alpha .  
  \]
The parameter $\alpha$ is a \emph{multiindex}, and is actually a vector $\alpha = (\alpha_1, \alpha_2, \ldots, \alpha_d)$ of dimension $d$. 
The expression $z^\alpha$, for $z \in \reals^d$, denotes the monomial $z_1^{\alpha_1} z_2^{\alpha_2} \ldots z_d^{\alpha_d}$, the quantity $|\alpha|$ is the \emph{total degree} $\sum \alpha_i$, and the quantity $\alpha! = \Pi_i (\alpha_i !)$. The multiindices are sorted in \emph{graded lexicographic order}, which means that $\alpha$ comes before $\alpha'$ if $|\alpha| < |\alpha'|$, and two multiindices of the same degree are ordered lexicographically.

The above expression for $G(q)$ is an exact infinite sum, and is approximated by truncating the summation at multiindices of total degree $\tau-1$. Note that there are at most $\rho = \binom{\tau+d-1}{d} = O(\tau^d)$ such multiindices. We now construct a mapping $\tilde{\phi} : \reals^d \rightarrow \reals^\rho$. 
Let 
$
\tilde{\phi}(p)_\alpha = \sqrt{\frac{2^{|\alpha|}}{\alpha!}} \mu(p) e^{-\frac{\|p - x_*\|^2}{h^2}}\Bigl(\frac{p - x_*}{h}\Bigr)^\alpha.
$ 
Then 
\[ 
G(q)= \sum_\alpha \tilde{\phi}(q)_\alpha \sum_{p \in P}\tilde{\phi}(p)_\alpha 
\]
and $S = \sum_{q \in Q} G(q)$ is then given by 
\[
 S = 
 \sum_{p \in P}\sum_{q \in Q} \sum_\alpha \tilde{\phi}(q)_\alpha \tilde{\phi}(p)_\alpha  
= 
\sum_{p \in P} \sum_{q \in Q} \langle \tilde{\phi}(q), \tilde{\phi}(p) \rangle.
\]

%%%%%%%%%%%%%%%%%%%%%%%%%%%%%%%%%
\paragraph{IFGT error analysis.}
%\label{sec:ifgt-analysis}

The error incurred by truncating the sum at degree $\tau-1$ is given by 
\[
\textsf{Err}(\tau) 
= 
\big| \sum_{p \in P} \sum_{q \in Q} K(p,q) \mu(p) \nu(q) - \sum_{p \in P} \sum_{q \in Q} \IP{\tilde \phi(p)}{\tilde \phi(q)} \big| 
\leq
\sum_{p \in P} \sum_{q \in Q} \mu(p) \nu(q) \frac{2^\tau}{\tau!} \Delta^{2\tau}
=
W^2 \frac{2^\tau}{\tau!} \Delta^{2 \tau}.
\]

Set $\eps W^2= \textsf{Err}(\tau)$. Applying Stirling's approximation, we solve for $\tau$ in $\log (1/\eps) \geq \tau \log (\tau/4 \Delta^2)$. This yields the bounds $\tau = O(\Delta^2)$ and $\tau = O(\log (1/\eps))$.  Thus our error bound holds for $\tau = O(\Delta^2 + \log (1/\eps))$.  Using $\rho = O(\tau^d)$, we obtain the following result.

\begin{lemma} \label{lem:IFGT-feat}
There exists a mapping $\tilde \phi : \c{P} \cup \c{Q} \to \reals^\rho$ with $\rho = O(\Delta^{2d} + \log^d (1/\eps))$ so
\[
\bigg|\sum_{p \in \c{P}} \sum_{q \in \c{Q}} K(p,q) \mu(p) \nu(q) - \sum_{p \in \c{P}} \sum_{q  \in \c{Q}} \IP{\tilde \phi(p)}{\tilde \phi(q)} \bigg| \leq \eps W^2.
\]
\end{lemma}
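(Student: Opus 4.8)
The plan is to carry out rigorously the three-step truncation analysis sketched just above the statement and then optimize the truncation degree $\tau$. First I would fix the feature map $\tilde\phi:\reals^d\to\reals^\rho$ exactly as constructed in the IFGT paragraph, with the $\rho=\binom{\tau+d-1}{d}$ coordinates indexed by the multiindices $\alpha$ with $|\alpha|\le\tau-1$, and record the algebraic identity $\sum_{p\in\c{P}}\sum_{q\in\c{Q}}\IP{\tilde\phi(p)}{\tilde\phi(q)}=\sum_{q\in\c{Q}}\tilde G(q)$, where $\tilde G(q)$ is the degree-$(\tau-1)$ truncation of the exact series for $G(q)$. Then the quantity inside the claimed inequality is exactly $\textsf{Err}(\tau)=\big|\sum_{q\in\c{Q}}\big(G(q)-\tilde G(q)\big)\big|$, since $\sum_{p}\sum_{q}K(p,q)\mu(p)\nu(q)=\kappa(\c{P},\c{Q})=\sum_{q}G(q)$.

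The only step that needs real care is bounding the per-pair truncation error. Expanding $-\|p-q\|^2/h^2$ about $x_*$ isolates a factor $e^{v}$ with $v=2\IP{p-x_*}{q-x_*}/h^2$, and the multinomial theorem shows that truncating the sum over $\alpha$ at total degree $\tau-1$ is exactly truncating the scalar series $e^v=\sum_{i\ge0}v^i/i!$ after the $v^{\tau-1}$ term; the omitted part therefore has absolute value at most $\sum_{i\ge\tau}|v|^i/i!$. Since $x_*$ lies in the convex hull of $\c{P}\cup\c{Q}$, each of $\|p-x_*\|$ and $\|q-x_*\|$ is at most $\max_{u,w\in\c{P}\cup\c{Q}}\|u-w\|$, so Cauchy--Schwarz gives $|v|\le 2\Delta^2$; combined with $\mu,\nu\ge0$ and the fact that the surviving Gaussian prefactor $e^{-(\|p-x_*\|^2+\|q-x_*\|^2)/h^2}$ is at most $1$, the contribution of the pair $(p,q)$ to $\textsf{Err}(\tau)$ is at most $\mu(p)\nu(q)\sum_{i\ge\tau}(2\Delta^2)^i/i!$. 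Once $\tau\ge 4\Delta^2$ the geometric tail here is $\le\frac{(2\Delta^2)^\tau}{\tau!}\cdot\frac{1}{1-2\Delta^2/\tau}\le\frac{2\,(2\Delta^2)^\tau}{\tau!}$, so absorbing the constant, summing over all pairs, and using $\big(\sum_p\mu(p)\big)\big(\sum_q\nu(q)\big)\le W^2$ gives $\textsf{Err}(\tau)\le W^2\frac{2^\tau}{\tau!}\Delta^{2\tau}$ --- precisely the bound displayed before the lemma.

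It then remains to choose $\tau$. Setting $\frac{2^\tau}{\tau!}\Delta^{2\tau}\le\eps$, taking logarithms, and applying Stirling ($\log\tau!=\tau\log\tau-\tau+O(\log\tau)$) reduces the requirement to $\tau\log(\tau/(4\Delta^2))\ge\log(1/\eps)$ up to absolute constants. This is met once $\tau\ge c\Delta^2$ forces the logarithmic factor above a positive constant and, on top of that, $\tau\ge c'\log(1/\eps)$; hence $\tau=O(\Delta^2+\log(1/\eps))$ suffices. Substituting into $\rho=\binom{\tau+d-1}{d}=O(\tau^d)$ yields $\rho=O\big((\Delta^2+\log(1/\eps))^d\big)=O(\Delta^{2d}+\log^d(1/\eps))$, as claimed.

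I expect the crux to be the second paragraph: one must confirm both that the degree-$(\tau-1)$ multiindex truncation coincides with the scalar Taylor truncation of $e^v$ (via the multinomial identity) and that $|v|=O(\Delta^2)$ rather than something larger. The latter is exactly what the choice of $x_*$ in (or near) the convex hull of the data buys, and it is also what makes absorbing the geometric-series factor harmless, since we are in any case forced to take $\tau=\Omega(\Delta^2)$. The first and third steps are routine bookkeeping.
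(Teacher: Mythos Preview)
Your proposal is correct and follows essentially the same route as the paper: both build the feature map from the IFGT multiindex expansion, bound the per-pair truncation error by $\frac{2^{\tau}}{\tau!}\Delta^{2\tau}$, sum to get $W^2\frac{2^{\tau}}{\tau!}\Delta^{2\tau}$, and then invert via Stirling to obtain $\tau=O(\Delta^2+\log(1/\eps))$ and hence $\rho=O(\tau^d)$. Your write-up is in fact more careful than the paper's on two points the paper glosses over---the multinomial identity that identifies the multiindex truncation with the scalar Taylor truncation of $e^{v}$, and the fact that the displayed bound is really only the leading term of the tail, which you handle by the geometric-series argument once $\tau\ge 4\Delta^2$---so there is nothing to correct.
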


%%%%%%%%%%%%%%%%%%%%%%%%%%%%
%\paragraph{Summary of feature maps.}
\subsection{Summary of Feature Maps}
We have developed three different bounds on the dimension required for feature maps that approximate $\kappa(\c{P}, \c{Q})$ to within $\eps W^2$.  %This dimension $\rho$ depends on different combinations of the parameters $d$ (the data dimension), $\Delta$ (the normalized diameter), $n$ (the number of points), $\eps$ (the error), and $\delta$ (the probability of failure).  
\begin{itemize} \denselist
\item[\textsf{IFGT}:] $\rho = O(\Delta^{2d} + \log^d(1/\eps))$.  Lemma \ref{lem:IFGT-feat}.
Advantages: deterministic, independent of $n$, logarithmic dependence on $1/\eps$.  
Disadvantages:  polynomial dependence on $\Delta$, exponential dependence on $d$.  
\item[\textsf{Random-points}:] $\rho = O((1/\eps^2) \log(n/\delta))$.  Lemma \ref{lem:rand-feat}.
Advantages: independent of $\Delta$ and $d$.  
Disadvantages: randomized, dependent on $n$, polynomial dependence on $1/\eps$.
\item[\textsf{Random-domain}:] $\rho = O((d/\eps^2) \log(\Delta/\eps \delta))$.  (above)
Advantages: independent of $n$, logarithmic dependence on $\Delta$, polynomial dependence on $d$.  
Disadvantages:  randomized, dependence on $\Delta$ and $d$, polynomial dependence on $1/\eps$.  
\end{itemize}

For simplicity, we (mainly) use the \textsf{Random-points} based result from Lemma \ref{lem:rand-feat} in what follows. If appropriate in a particular application, the other bounds may be employed.

%%%%%%%%%%%%%%%%%%%%%%%%%%%%%%%%%%%%%%%%%%%%%%%%%%%%%%%%%%%%
%%%%%%%%%%%%%%%%%%%%%%%%%%%%%%%%%%%%%%%%%%%%%%%%%%%%%%%%%%%%
\paragraph{Feature-based computation of $D_K$.}
\label{sec:fast-current-norm}

As before, we can decompose $
D_K^2(\c{P},\c{Q}) = \kappa(\c{P},\c{P}) + \kappa(\c{Q},\c{Q}) - 2\kappa(\c{P},\c{Q})$
and use Lemma \ref{lem:rand-feat} to approximate each of $\kappa(\c{P},\c{P}), \kappa(\c{Q},\c{Q})$, and $\kappa(\c{P},\c{Q})$ with error $\eps W^2/4$.  
% This brings the total error to at most $\eps W^2$.    

\begin{theorem} \label{thm:fast-CN}
We can compute a value $U$ in time 
$O((n/\eps^2) \log(n/\delta))$ such that $|U - D_K^2(\c{P},\c{Q})| \leq \eps W^2$, with probability at least $1-\delta$. 
\end{theorem}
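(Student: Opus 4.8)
The plan is to combine the feature-map approximation of Lemma~\ref{lem:rand-feat} with the standard decomposition $D_K^2(\c{P},\c{Q}) = \kappa(\c{P},\c{P}) + \kappa(\c{Q},\c{Q}) - 2\kappa(\c{P},\c{Q})$, replacing each of the three $\kappa$ terms by its approximation $\langle \tilde\Phi(\cdot), \tilde\Phi(\cdot)\rangle$ and then bounding the total error and running time. First I would invoke Lemma~\ref{lem:rand-feat} three times --- once for the pair $(\c{P},\c{P})$, once for $(\c{Q},\c{Q})$, and once for $(\c{P},\c{Q})$ --- each time with error parameter $\eps/4$ in place of $\eps$ and failure parameter $\delta/3$ in place of $\delta$. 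Since $\kappa(\c{P},\c{P}) \le W^2$ and similarly for $\c{Q}$, and the cross term is also controlled by $W^2$, each application yields an approximation to the corresponding $\kappa$ value with additive error at most $(\eps/4) W^2$, each holding with probability at least $1 - \delta/3$.

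**Next** I would assemble the estimator. Define $U = \langle \tilde\Phi(\c{P}), \tilde\Phi(\c{P})\rangle + \langle \tilde\Phi(\c{Q}), \tilde\Phi(\c{Q})\rangle - 2\langle \tilde\Phi(\c{P}), \tilde\Phi(\c{Q})\rangle$, where $\tilde\Phi(\c{P}) = \sum_{p \in \c{P}} \phi_\Upsilon(p)$ and similarly for $\c{Q}$ (using the same random sample $\Upsilon$, or three independent ones --- either works, since we only need a union bound over the three events). By the triangle inequality, the total error is at most the sum of the three individual errors, namely $(\eps/4)W^2 + (\eps/4)W^2 + 2\cdot(\eps/4)W^2$. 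Here one must be slightly careful: the factor of $2$ in front of the cross term doubles its error contribution, so the naive sum gives $\eps W^2$ exactly if we use $\eps/4$ on each of the three terms --- indeed $\tfrac14 + \tfrac14 + 2\cdot\tfrac14 = 1$. So this is already tight; alternatively one can use $\eps/8$ on each to be safe, which only changes constants. By a union bound over the three failure events, the combined bound $|U - D_K^2(\c{P},\c{Q})| \le \eps W^2$ holds with probability at least $1 - \delta$.

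**For the running time**, observe that computing $\phi_\Upsilon(p)$ for a single point $p$ takes $O(\rho) = O((1/\eps^2)\log(n/\delta))$ time (evaluating $\rho/2$ cosines and sines of inner products $\langle \omega_i, p\rangle$, each inner product costing $O(d)$ --- I would absorb the $d$ factor or note it explicitly). Summing over all $n$ points in $\c{P} \cup \c{Q}$ gives $\tilde\Phi(\c{P})$ and $\tilde\Phi(\c{Q})$ as explicit vectors in $\reals^\rho$ in total time $O(n\rho) = O((n/\eps^2)\log(n/\delta))$. Once these two vectors are in hand, the three inner products $\langle \tilde\Phi(\c{P}),\tilde\Phi(\c{P})\rangle$, $\langle \tilde\Phi(\c{Q}),\tilde\Phi(\c{Q})\rangle$, $\langle \tilde\Phi(\c{P}),\tilde\Phi(\c{Q})\rangle$ each cost $O(\rho)$, which is dominated by the $O(n\rho)$ term. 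Drawing the $\rho/2$ random frequencies $\omega_i$ from the Gaussian $g$ takes $O(\rho d)$ time, again dominated. Hence the overall running time is $O((n/\eps^2)\log(n/\delta))$ as claimed.

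**The main obstacle** --- really more of a subtlety than a genuine difficulty --- is the bookkeeping of the error budget across the three terms, and in particular making sure the coefficient $2$ on the cross term is accounted for when splitting $\eps$; this is exactly why one approximates each $\kappa$ term to within $\eps W^2/4$ rather than $\eps W^2/2$ or $\eps W^2/3$, as already flagged in the paragraph preceding the theorem statement. A second minor point is that Lemma~\ref{lem:rand-feat} as stated bounds the error of $\sum_p \sum_q \langle \phi_\Upsilon(p), \phi_\Upsilon(q)\rangle$ against $\sum_p \sum_q K(p,q)\mu(p)\nu(q)$, which is precisely $\langle \tilde\Phi(\cdot), \tilde\Phi(\cdot)\rangle$ against $\kappa(\cdot,\cdot)$ by bilinearity of the inner product, so no extra argument is needed there --- one just has to observe the identity $\sum_p\sum_q \langle \phi_\Upsilon(p),\phi_\Upsilon(q)\rangle = \langle \sum_p \phi_\Upsilon(p), \sum_q \phi_\Upsilon(q)\rangle$. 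Everything else is routine.
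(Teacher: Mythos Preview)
Your proposal is correct and follows essentially the same approach as the paper: decompose $D_K^2(\c{P},\c{Q}) = \kappa(\c{P},\c{P}) + \kappa(\c{Q},\c{Q}) - 2\kappa(\c{P},\c{Q})$, apply Lemma~\ref{lem:rand-feat} to each of the three $\kappa$ terms with error budget $\eps W^2/4$, and observe that the total cost is $O(n\rho) = O((n/\eps^2)\log(n/\delta))$. The paper states this in one sentence before the theorem; you have simply (and correctly) filled in the error bookkeeping, the union bound over the three failure events, and the running-time accounting.
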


\paragraph{A nearest-neighbor algorithm.}
The feature map does more than yield efficient algorithms for the kernel distance. As a representation for shapes and distributions, it allows us to solve other data analysis problems on shape spaces using off-the-shelf methods that apply to points in Euclidean space. As a simple example of this, we can combine the \textsf{Random-points} feature map with known results on approximate nearest-neighbor search in Euclidean space~\cite{DBLP:conf/focs/AndoniI06} to obtain the following result.

\begin{lemma}
 Given a collection of $m$ point sets $\c{C} = \{\c{P}_1, \c{P}_2, \ldots, \c{P}_m\}$, and a query surface $\c{Q}$, we can compute the $c$-approximate nearest neighbor to $\c{Q}$ in $\c{C}$ under the kernel distance in time $O(\rho m^{1/c^{2}+o(1)})$ query time using $O(\rho m^{1+1/c^{2}+o(1)})$ space and preprocessing.
\end{lemma}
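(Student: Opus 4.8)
The plan is to reduce the query to approximate nearest-neighbor search for the $\ell_2$ norm in $\reals^\rho$ and then invoke the locality-sensitive hashing data structure of Andoni and Indyk~\cite{DBLP:conf/focs/AndoniI06}. First I would fix a single random frequency set $\Upsilon = \{\omega_1,\ldots,\omega_{\rho/2}\}$ drawn from $g$ and use the corresponding \textsf{Random-points} map $\phi_\Upsilon$ of Section~\ref{sec:rahimi-recht-appr} to represent every input object by one vector: set $\tilde\Phi(\c{P}_j) = \sum_{p \in \c{P}_j} \phi_\Upsilon(p) \in \reals^\rho$ for each $j$, and $\tilde\Phi(\c{Q}) = \sum_{q \in \c{Q}} \phi_\Upsilon(q)$. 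By linearity of the inner product, $\|\tilde\Phi(\c{P}_j) - \tilde\Phi(\c{Q})\|_2^2 = \IP{\tilde\Phi(\c{P}_j)}{\tilde\Phi(\c{P}_j)} + \IP{\tilde\Phi(\c{Q})}{\tilde\Phi(\c{Q})} - 2\IP{\tilde\Phi(\c{P}_j)}{\tilde\Phi(\c{Q})}$, and each of the three inner products is exactly the double sum $\sum\sum \IP{\phi_\Upsilon(\cdot)}{\phi_\Upsilon(\cdot)}$ that Lemma~\ref{lem:rand-feat} approximates.

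Next I would apply Lemma~\ref{lem:rand-feat} with failure probability $\delta' = \delta/(3m)$ to each of the $m$ pairs $(\c{P}_j,\c{Q})$ as well as to the self-products $\kappa(\c{P}_j,\c{P}_j)$ and $\kappa(\c{Q},\c{Q})$, and take a union bound over all $j$. Since $\log(1/\delta') = \log(3m/\delta)$, choosing $\rho = O((1/\eps^2)\log(nm/\delta))$ suffices, and the extra $\log m$ is absorbed into the quantity denoted $\rho$ in the statement. This guarantees that, with probability at least $1-\delta$, simultaneously for all $j$ we have $\bigl| \, \|\tilde\Phi(\c{P}_j) - \tilde\Phi(\c{Q})\|_2^2 - D_K^2(\c{P}_j,\c{Q}) \, \bigr| \le \eps W^2$; that is, the Euclidean geometry of the $m+1$ feature vectors faithfully reproduces the kernel-distance geometry of the original objects. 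Crucially, this union bound is taken over the randomness of $\Upsilon$ \emph{before} the data structure is built, so it holds even for an adversarial query $\c{Q}$.

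I would then build the Andoni--Indyk $c$-approximate nearest-neighbor structure~\cite{DBLP:conf/focs/AndoniI06} over the point set $\{\tilde\Phi(\c{P}_1),\ldots,\tilde\Phi(\c{P}_m)\} \subset \reals^\rho$; for $n$ points in $\reals^d$ it uses $O(d n^{1+1/c^2+o(1)})$ space and preprocessing and answers a $c$-ANN query in $O(d n^{1/c^2+o(1)})$ time. Substituting $d = \rho$ and $n = m$, and querying with $\tilde\Phi(\c{Q})$, returns a $\c{P}_j$ whose feature-space distance to $\tilde\Phi(\c{Q})$ is within a factor $c$ of the minimum, which by the previous paragraph is a $c$-approximate nearest neighbor of $\c{Q}$ in $\c{C}$ under $D_K$.

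The main obstacle is that Lemma~\ref{lem:rand-feat} controls the \emph{squared} kernel distance only up to an \emph{additive} $\eps W^2$, whereas approximate nearest-neighbor search is a multiplicative guarantee. Converting the additive slack into a clean multiplicative one requires either a lower bound on $\min_j D_K(\c{P}_j,\c{Q})$ relative to $W$, or settling for a $(1+\eps)c$-approximation (one takes $\eps$ a sufficiently small constant so that the additive $\eps W^2$ error is dominated by the gap the LSH structure already tolerates). Once this point is handled, the remainder is routine bookkeeping of the parameters $\rho$, $m$, and $\delta$.
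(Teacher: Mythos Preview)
Your proposal is correct and matches the paper's approach exactly: the paper gives no proof for this lemma but simply states it as an immediate consequence of combining the \textsf{Random-points} feature map with the Andoni--Indyk $c$-ANN data structure~\cite{DBLP:conf/focs/AndoniI06}, which is precisely the reduction you spell out. The additive-versus-multiplicative slack you flag as the main obstacle is glossed over entirely in the paper, so your treatment is, if anything, more careful than the original.
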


%%%%%%%%%%%%%%%%%%%%%%%%%%%%%%%%%%%%%%%%%%%%%%%%%%%%%%%%%%%%
%%%%%%%%%%%%%%%%%%%%%%%%%%%%%%%%%%%%%%%%%%%%%%%%%%%%%%%%%%%%
\section{Coresets for the Kernel Distance}
\label{sec:core-set-current}

The kernel norm (and distance) can be approximated in near-linear time; however, this may be excessive for large data sets.  Rather, we extract a small subset (a coreset) $\c{S}$ from the input $\c{P}$ such that the kernel distance between $\c{S}$ and $\c{P}$ is small. By triangle inequality, $\c{S}$ can be used as a proxy for $\c{P}$. Specifically, we extend the notion of $\eps$-samples for range spaces to handle non-binary range spaces defined by kernels.

\paragraph{Background on range spaces.}
Let $\xi(P)$ denote the total weight of a set of points $P$, or cardinality if no weights are assigned.
Let $P \subset \b{R}^d$ be a set of points and let $\c{A}$ be a family of subsets of $P$.   For examples of $\c{A}$, let $\c{B}$ denote the set of all subsets defined by containment in a ball and let $\c{E}$ denote the set of all subsets defined by containment in ellipses.  
We say $(P, \c{A})$ is a \emph{range space}.  Let $\bar \xi_P(A) = \xi(A)/\xi(P)$.  
An \emph{$\eps$-sample} (or \emph{$\eps$-approximation}) of $(P, \c{A})$ is a subset $Q \subset P$ such that 
\[
%\max_{A \in \c{A}} \left| \frac{\xi(A \cap Q)}{\xi(Q)} - \frac{\xi(A)}{\xi(P)} \right| \leq \eps.
\max_{A \in \c{A}} \left| \bar \xi_Q(Q \cap A) - \bar \xi_P(P \cap A)\right| \leq \eps.
\]

%Given a range space $(P, \c{A})$ a subset $Y \subset P$ is said to be \emph{shattered} by $\c{A}$ if all subsets of $Z \subset Y$ can be realized as $Z = Y \cap R$ for $R \in \c{A}$.  Then the \emph{VC-dimension} of a range space $(P,\c{A})$ is the cardinality of the largest subset $Y \subset P$ that can be shattered by $\c{A}$.  
%A random sample of $k = O((1/\eps^2) (\nu + \log 1/\delta))$ points from $P$ (by weight if weighted), where $\nu$ is the VC-dimension, is an $\eps$-sample with probability at least $1-\delta$~\cite{VC71,Tal94}.  
%Smaller $\eps$-samples exist of size $O((1/\eps)^{2-2/(\nu+1)}\log(\nu/\eps)^{2-1/{d+1}})$~\cite{MWW93}, and can be constructed deterministically~\cite{Mat99,CW89,Hau95} (as far as we are aware, the full algorithm is never explicitly presented except for halfspaces~\cite{STZ04}).  
%Even smaller $\eps$-samples exist for axis-aligned rectangles~\cite{Phi08} (size $O((1/\eps) \log^{2d} (1/\eps))$ in $\b{R}^d$) and intervals~\cite{LP09} (size $O(1/\eps)$).  

To create a coreset for the kernel norm, we want to generalize these notions of $\eps$-samples to non-binary ($(0,1)$-valued instead of $\{0,1\}$-valued) functions, specifically to kernels.  
For two point sets $P,Q$, define $\bar \kappa(P,Q) = (1/\xi(P))(1/\xi(Q)) \sum_{p \in P} \sum_{q \in Q} K(p,q)$, and when we have a singleton set $Q = \{q\}$ and a subset $P' \subseteq P$ then we write $\bar \kappa_P(P',q) = (1/\xi(P)) \sum_{p \in P'} K(p,q)$.  Let $K^+ = \max_{p,q \in P} K(p,q)$ be the maximum value a kernel can take on a dataset $P$, which can be normalized to $K^+=1$.  
We say a subset of $S \subset P$ is an \emph{$\eps$-sample of $(P,K)$} if 
%for all $q$, 
\[
\max_q \left| \bar \kappa_P(P,q) - \bar \kappa_S(S,q) \right| \leq \eps K^+.
\]

The standard notion of VC-dimenion~\cite{VC71} (and related notion of shattering dimension) is fundamentally tied to the binary ($\{0,1\}$-valued) nature of ranges, and as such, it does not directly apply to $\eps$-samples of $(P,K)$.  
Other researchers have defined different combinatorial dimensions that can be applied to kernels~\cite{DGL96,KS94,ABCH97,Vap89}.  The best result is based on $\gamma$-fat shattering dimension $\textsc{f}_\gamma$~\cite{KS94}, defined for a family of $(0,1)$-valued functions $\c{F}$ and a ground set $P$.  A set $Y \subset P$ is \emph{$\gamma$-fat shattered} by $\c{F}$ if there exists a function $\alpha : Y \to [0,1]$ such that for all subsets $Z \subseteq Y$ there exists some $F_Z \in \c{F}$ such that for every $x \in Z$ $F_Z(x) \geq \alpha(x) + \gamma$ and for every $x \in Y \setminus Z$ $F_Z(x) \leq \alpha(x) - \gamma$.  Then $\textsc{f}_\gamma = \xi(Y)$ for the largest cardinality set $Y \subset P$ that can be $\gamma$-fat shattered.  
Bartlett \etal~\cite{BLW96} show that a random sample of $O((1/\eps^2) (\textsc{f}_\gamma \log^2(\textsc{f}_\gamma/\eps) + \log(1/\delta))$ elements creates an $\eps$-sample (with probability at least $1-\delta$) with respect to $(P,\c{F})$ for $\gamma = \Omega(\eps)$.  
Note that the $\gamma$-fat shattering dimension of Gaussian and other symmetric kernels in $\b{R}^d$ is $d+1$ (by setting $\alpha(x) = .5$ for all $x$), the same as balls $\c{B}$ in $\b{R}^d$, so this gives a random-sample construction for $\eps$-samples of $(P,K)$ of size $O((d/\eps^2) (\log^2(1/\eps) + \log(1/\delta))$.

%Devroye \etal~\cite[Thm 24.6]{DGL96} do provide a similar random sampling result for kernels of bounded support, which applies to balls (as does the VC-dimension result) but not Gaussian kernels.  
%There has also been some work on generalizing the combinatorial dimension of non-binary ranges associated with sample complexity.  
%Devroye \etal~\cite[Chapter 25]{DGL96} discuss kernel complexity as the number of times $K(p,q)$ crosses $0$ as $\|p-q\|$ grows, in relation to bandwidth selection.  
%Kearns and Shapire~\cite{KS94} (see also \cite{ABCH97,Vap89}) introduce the notion of $\gamma$-fat shattering dimension \textsc{f}, which measures the exponent of the number of points that can be separated by at least $\gamma \leq 1$ (as opposed to $1$ for binary classifiers) by a function class $\c{F}$ (which could be a family of kernels).  Bartlett \etal~\cite{BLW96} show that a random sample of $O((1/\eps^2) (\textsc{f} \log^2(\textsc{f}/\eps) + \log(1/\delta))$ creates an $\eps$-sample with respect to $(P,\c{F})$.  
%Note that the $\gamma$-fat shattering dimension of Gaussian and other symmetric kernels in $\b{R}^d$ is $d+1$, the same balls $\c{B}$ in $\b{R}^d$, so this immediately gives a random-sample construction for $\eps$-samples of $(P,K)$.  

In this paper, we improve this result in two ways by directly relating a kernel range space $(P,K)$ to a similar (binary) range space $(P,\c{A})$.  
First, this improves the random-sample bound because it uses sample-complexity results for binary range spaces that have been heavily optimized.  
Second, this allows for all deterministic $\eps$-sample constructions %~\cite{Mat91,CM96,MWW93,Mat95,STZ04,BCEG04,Phi08} 
 (which have no probability of failure) and can have much smaller size.  

%\jeff{Add more detailed history on deterministic (and randomized) $\eps$-sample constructions?}

%%%%%%%%%%%%%%%%%%%%%%%%%%%%
\paragraph{Constructions for $\eps$-samples.}

Vapnik and Chervonenkis~\cite{VC71} showed that the complexity of $\eps$-samples is tied to the VC-dimension of the range space.  That is, given a range space $(X, \c{A})$ a subset $Y \subset X$ is said to be \emph{shattered} by $\c{A}$ if all subsets of $Z \subset Y$ can be realized as $Z = Y \cap R$ for $R \in \c{A}$.  Then the \emph{VC-dimension} of a range space $(X,\c{A})$ is the cardinality of the largest subset $Y \subset X$ that can be shattered by $\c{A}$.  
Vapnik and Chervonenkis~\cite{VC71} showed that if the VC-dimension of a range space $(X,\c{A})$ is $\nu$, then a random sample $Y$ of $O((1/\eps^2)(\nu \log (1/\eps) + \log(1/\delta))$ points from $X$ is an $\eps$-sample with probability at least $1-\delta$.  
This bound was improved to $O((1/\eps^2) (\nu + \log 1/\delta))$ by Talagrand~\cite{Tal94,LLS01}.  

Alternatively, Matousek~\cite{Mat91} showed that $\eps$-samples of size $O((\nu/\eps^2) \log (\nu/\eps))$ could be constructed deterministically, that is there is no probability of failure.  A simpler algorithm with more thorough runtime analysis is presented in Chazelle and Matousek~\cite{CM96}, which runs in $O(d)^{3d} |X|(1/\eps)^{2\nu}\log^{\nu}(1/\eps)$ time.  
Smaller $\eps$-samples exist; in particular Matousek, Welzl, and Wernisch~\cite{MWW93} and improved by Matousek~\cite{Mat95} show that $\eps$-samples exist of size $O((1/\eps)^{2-2/(\nu+1)})$, based on a discrepancy result that says there exists a labeling $\chi : X \to \{-1,+1\}$ such that $\max_{R \in \c{A}} \sum_{x \in R \cap X} \chi(x) \leq O(|X|^{1/2 - 1/2\nu} \log^{1/2} |X|)$.  %This is improved to $O(1/\eps^{2-2/(d+1)})$ for halfspaces in $\b{R}^d$ by Matousek~\cite{Mat95}.
It is alluded by Chazelle~\cite{Cha00} that if an efficient construction for such a labeling existed, then an algorithm for creating $\eps$-samples of size $O((1/\eps)^{2-2/(\nu+1)}\log(\nu/\eps)^{2-1/{d+1}})$ would follow, see also Phillips~\cite{Phi08}.  
Recently, Bansal~\cite{Ban10} provided a randomized polynomial algorithm for the entropy method, which is central in proving these existential coloring bounds.  This leads to an algorithm that runs in time $O(|X| \cdot \poly{1/\eps})$, as claimed by Charikar \etal~\cite{CNN11,N11}.

An alternate approach is through the VC-dimension of the \emph{dual range space} $(\c{A},\bar{\c{A}})$, of (primal) range space $(X,\c{A})$, where $\bar{\c{A}}$ is the set of subsets of ranges $\c{A}$ defined by containing the same element of $X$.  
In our context, for range spaces defined by balls $(\b{R}^d,\c{B})$ and ellipses of fixed orientation $(\b{R}^d, \c{E})$ their dual range spaces have VC-dimension $\bar \nu = d$.  
Matousek~\cite{Mat99} shows that a technique of matching with low-crossing number~\cite{CW89} along with Haussler's packing lemma~\cite{Hau95} can be used to construct a low discrepancy coloring for range spaces where $\bar \nu$, the VC-dimension of the dual range space is bounded.  This technique can be made deterministic and runs in $\poly{|X|}$ time.  Invoking this technique in the Chazelle and Matousek~\cite{CM96} framework yields an $\eps$-sample of size $O((1/\eps)^{2-2/(\bar \nu +1)} (\log(1/\eps))^{2-1/(\bar \nu+1)})$ in $O(|X| \cdot \poly{1/\eps})$ time. 
%such a coloring (based on the VC-dimension of the dual range space), and that this randomized technique can be made deterministic.  Suri \etal~\cite{STZ04} combines these results to create an $\eps$-sample for halfspaces in $\b{R}^d$ of size $O((1/\eps^{2d/(d+1)}) \log^{d+1}(1/\eps))$ (in the streaming setting) and requires $O(n^\nu)$ space, but does not formally analyze the runtime.  
Specifically, we attain the following result:
\begin{lemma}
\label{lem:e-samp-ball}
For discrete ranges spaces $(X,\c{B})$ and $(X,\c{E})$ for $X \in \b{R}^d$ of size $n$, we can construct an $\eps$-sample of size $O((1/\eps)^{2-2/(d+1)}(\log (1/\eps))^{2-1/d+1})$ in $O(n \cdot \emph{\poly{1/\eps}})$ time.  
\end{lemma}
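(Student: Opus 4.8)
The plan is to obtain the $\eps$-sample through the discrepancy-based route of Chazelle and Matousek~\cite{CM96,Mat99}; by the discussion preceding the statement, the only genuinely new fact that needs checking is that the \emph{dual} range spaces of $(\b{R}^d,\c{B})$ and of $(\b{R}^d,\c{E})$ (with $\c{E}$ the ellipses of one fixed orientation) both have VC-dimension $\bar\nu = d$. For balls this is classical: no family of $d+1$ balls in $\b{R}^d$ realizes all $2^{d+1}$ point-incidence patterns. For fixed-orientation ellipses I would use the standard linearization: after applying the fixed rotation the ellipses may be taken axis-aligned, $\{x : \sum_i (x_i-c_i)^2/a_i^2 \le 1\}$, and under the lifting $x \mapsto (x_1,\dots,x_d,x_1^2,\dots,x_d^2)$ this becomes a halfspace whose coefficients depend only on the $2d$ numbers $c_i/a_i^2$ and $1/a_i^2$; since the orientation is frozen, no cross-terms $x_ix_j$ appear, so the dual range space embeds into a halfspace range space of bounded dimension and $\bar\nu = d$. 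Pinning this dimension down exactly --- so that the final exponent is $2-2/(d+1)$ and not something weaker --- is the place I expect the most care is needed.

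With $\bar\nu$ bounded, the rest is assembly of known machinery. First, by Haussler's packing lemma~\cite{Hau95} and the spanning-path construction of Chazelle and Welzl~\cite{CW89}, in the deterministic form of Matousek~\cite{Mat99}: for any $m$-point subset $Y \subseteq X$ one builds in $\poly{m}$ time a perfect matching on $Y$ that every range of $\c{B}$ (resp.\ $\c{E}$) crosses in only $O(m^{1-1/\bar\nu})$ edges. Alternating $\pm 1$ within matched pairs, only crossing pairs contribute to a range's discrepancy, and since there are $m^{O(\bar\nu)}$ distinct ranges a Chernoff bound (derandomized by the method of conditional expectations) gives a coloring $\chi : Y \to \{-1,+1\}$ with $\max_R |\sum_{x \in R \cap Y}\chi(x)| = O(m^{1/2-1/(2\bar\nu)}\log^{1/2}m)$.

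Finally I would plug this coloring into the iterated-halving framework: starting from $X$, repeatedly split the current set by $\chi$ and keep the heavier color class (halving the size while rescaling weights), charging the error introduced at each step; the accumulated error is a geometric series dominated by its last and smallest term, so halting once the size reaches $t = O((1/\eps)^{2-2/(\bar\nu+1)}(\log(1/\eps))^{2-1/(\bar\nu+1)})$ leaves total error below $\eps$. Substituting $\bar\nu = d$ yields the claimed size. For the running time, the costly low-discrepancy coloring is applied only to point sets of size $\poly{1/\eps}$: one first coarsens $X$ to an $(\eps/2)$-sample of size $\poly{1/\eps}$ using the deterministic construction of Chazelle and Matousek~\cite{CM96} (which runs in $O(n \cdot \poly{1/\eps})$ time for fixed $d$), and then runs the matching-based iteration above on that small set with the remaining $\eps/2$ of error budget, at cost $\poly{1/\eps}$; the total is $O(n \cdot \poly{1/\eps})$. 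The main obstacle is bookkeeping rather than a new idea --- getting the ellipse dimension count exact, and dividing the error between the coarsening pass and the discrepancy iteration so that the output is a genuine $\eps$-sample of the stated size.
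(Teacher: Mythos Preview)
Your proposal is correct and follows essentially the same route as the paper. The paper does not give a separate proof of this lemma; it is stated as an immediate consequence of the preceding paragraph, which asserts that the dual range spaces of $(\b{R}^d,\c{B})$ and $(\b{R}^d,\c{E})$ have $\bar\nu = d$, cites Matousek's low-crossing-matching construction~\cite{Mat99,CW89} together with Haussler's packing lemma~\cite{Hau95} for the discrepancy bound, and then invokes the Chazelle--Matousek halving framework~\cite{CM96} for the size and runtime --- precisely the assembly you outline, only you spell out the coarsen-then-refine split of the error budget and flag the ellipse dimension count as the delicate step, which the paper leaves as a bare assertion.
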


For specific range spaces, the size of $\eps$-samples can be improved beyond the VC-dimension bound.  Phillips~\cite{Phi08} showed for ranges $\c{R}_d$ consisting of axis-aligned boxes in $\b{R}^d$, that $\eps$-samples can be created of size $O((1/\eps) \cdot \log^{2d} (1/\eps))$.  This can be generalized to ranges defined by $k$ predefined normal directions of size $O((1/\eps) \cdot \log^{2k}(1/\eps))$.  These algorithms run in time $O(|X| (1/\eps^3) \textrm{poly}\log (1/\eps))$.
And for intervals $\c{I}$ over $\b{R}$, $\eps$-samples of $(X, \c{I})$ can be created of size $O(1/\eps)$ by sorting points and retaining every $\eps|X|$th point in the sorted order~\cite{LP09}.

%%%%%%%%%%%%%%%%%%%%%%%%%%%%%%
\paragraph{$\eps$-Samples for kernels.}
The \emph{super-level set} of a kernel given one input $q \in \b{R}^d$ and a value $v \in \b{R}^+$, is the set of all points $p \in \b{R}^d$ such that $K(p,q) \geq v$.  We say that a kernel is \emph{linked} to a range space $(\b{R}^d, \c{A})$ if for every possible input point $q \in \b{R}^d$ and any value $v \in \b{R}^+$ that the super-level set of $K(\cdot,q)$ defined by $v$ is equal to some $H \in \c{A}$.  For instance multi-variate Gaussian kernels with no skew are linked to $(\b{R}^d, \c{B})$ since all super-level sets are balls, and multi-variate Gaussian kernels with non-trivial covariance are linked to $(\b{R}^d, \c{E})$ since all super-level sets are ellipses.

\begin{theorem}
For any kernel $K : \c{M} \times \c{M} \to \b{R}^+$ linked to a range space $(\c{M}, \c{A})$, an $\eps$-sample $S$ of $(P, \c{A})$ for $S \subseteq \c{M}$ is a $\eps$-sample of $(P,K)$.  
\label{thm:kernel-sample}
\end{theorem}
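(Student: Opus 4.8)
The plan is to reduce the kernel $\eps$-sample condition to the binary $\eps$-sample condition by expressing the quantity $\bar\kappa_P(P,q)$ as an integral of (normalized) counts over super-level sets, and then using the linking hypothesis to convert each such count into a range count that is controlled by $S$ being an $\eps$-sample of $(P,\c A)$. Concretely, fix a query point $q \in \c M$. The key identity is the ``layer-cake'' representation
\[
K(p,q) = \int_0^{K^+} \mathbf{1}[K(p,q) \geq v] \, dv ,
\]
valid since $0 \le K(p,q) \le K^+$. Summing over $p \in P$ and dividing by $\xi(P)$ gives
\[
\bar\kappa_P(P,q) = \frac{1}{\xi(P)} \sum_{p \in P} \int_0^{K^+} \mathbf{1}[K(p,q)\ge v]\, dv = \int_0^{K^+} \bar\xi_P\big(P \cap H_{q,v}\big)\, dv ,
\]
where $H_{q,v} = \{p : K(p,q) \ge v\}$ is the super-level set. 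By the hypothesis that $K$ is linked to $(\c M, \c A)$, each $H_{q,v}$ equals some range $A_{q,v} \in \c A$.

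First I would write the analogous identity for the sample $S$, giving $\bar\kappa_S(S,q) = \int_0^{K^+} \bar\xi_S(S\cap A_{q,v})\, dv$. Then I would subtract the two integral expressions and bound
\[
\big| \bar\kappa_P(P,q) - \bar\kappa_S(S,q) \big|
= \left| \int_0^{K^+} \Big( \bar\xi_P(P \cap A_{q,v}) - \bar\xi_S(S \cap A_{q,v}) \Big) dv \right|
\le \int_0^{K^+} \big| \bar\xi_P(P \cap A_{q,v}) - \bar\xi_S(S \cap A_{q,v}) \big|\, dv .
\]
Since $S$ is an $\eps$-sample of $(P,\c A)$ and each $A_{q,v} \in \c A$, the integrand is at most $\eps$ pointwise, so the integral is at most $\eps K^+$. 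Taking the maximum over $q$ yields exactly $\max_q |\bar\kappa_P(P,q) - \bar\kappa_S(S,q)| \le \eps K^+$, which is the definition of $S$ being an $\eps$-sample of $(P,K)$.

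I expect the main obstacle to be measure-theoretic bookkeeping rather than anything deep: one must check that $v \mapsto \bar\xi_P(P\cap A_{q,v})$ is measurable (it is a monotone step function in $v$ for finite $P$, so this is immediate) and that the order of summation and integration may be exchanged (again trivial for finite sums). A secondary subtlety is that the ``$\ge v$'' versus ``$> v$'' convention in the definition of super-level sets must match whatever the linking hypothesis supplies; since the set of $v$ where this matters is finite (the values $K(p,q)$ for $p \in P \cup S$), it has measure zero and does not affect the integral, so either convention works. I would also remark that the argument uses nothing about $K$ being Gaussian or even symmetric — only that its super-level sets in the first argument lie in $\c A$ — which is why the theorem is stated for general linked kernels.
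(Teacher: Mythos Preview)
Your proof is correct and is genuinely different from the paper's. The paper proceeds by a combinatorial charging argument: it sorts the points of $P$ by their kernel value $K(p,q)$, partitions them into blocks of size $\xi(P)/\xi(S)$ charged against the sorted elements of $S$, and separately bounds the undercount and overcount by locating a single super-level set that captures all the ``leftover'' mass. Your argument instead uses the layer-cake representation $K(p,q)=\int_0^{K^+}\mathbf{1}[K(p,q)\ge v]\,dv$, swaps sum and integral, and applies the $\eps$-sample bound pointwise in $v$.

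It is worth noting that the paper explicitly flags, in italics, a ``flawed attempt'' based on level sets in which error accumulates across layers; your approach superficially resembles that warning but is not the flawed version. The failure the paper describes comes from decomposing into \emph{annular} level sets $\{p:v_i\le K(p,q)<v_{i+1}\}$ (each a difference of two ranges, hence error $2\eps$) and then summing errors weighted by kernel values up to $K^+$, which multiplies by the number of layers. Your layer-cake formula instead integrates indicators of \emph{super}-level sets directly, so the total error is $\int_0^{K^+}\eps\,dv=\eps K^+$ with no accumulation. This is the key distinction, and it makes your proof both shorter and more transparent than the paper's charging scheme; the paper's argument, on the other hand, is more hands-on and exposes explicitly which single range in $\c{A}$ witnesses the bound.
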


\emph{A (flawed) attempt at a proof may proceed by considering a series of approximate level-sets, within which each point has about the same function value.  Since $S$ is an $\eps$-sample of $(P,\c{A})$, we can guarantee the density of $S$ and $P$ in each level set is off by at most $2\eps$.  However, the sum of absolute error over all approximate level-sets is approximately $\eps K^+$ times the number of approximate level sets.  This analysis fails because it allows error to accumulate; however, a more careful application of the $\eps$-sample property shows it cannot.  A correct analysis follows using a charging scheme which prevents the error from accumulating.  }

\begin{proof}
We can sort all $p_i \in P$ in similarity to $q$ so that $p_i < p_j$ (and by notation $i<j$) if $K(p_i,q) > K(p_j,q)$.  Thus any super-level set containing $p_j$ also contains $p_i$ for $i<j$.  We can now consider the one-dimensional problem on this sorted order from $q$.  

We now count the deviation $E(P,S,q) = \bar \kappa_P(P,q) - \bar \kappa_S(S,q)$ from $p_1$ to $p_n$ using a charging scheme.  That is each element $s_j \in S$ is charged to $\xi(P)/\xi(S)$ points in $P$.  For simplicity we will assume that $k = \xi(P)/\xi(S)$ is an integer, otherwise we can allow fractional charges.  
We now construct a partition of $P$ slightly differently, for positive and negative $E(P,S,q)$ values, corresponding to undercounts and overcounts, respectively.  

\textbf{Undercount of $\bar \kappa_S(S,q)$.}
For undercounts, we partition $P$ into $2\xi(S)$ (possibly empty) sets $\{P'_1, P_1, P'_2,$ $P_2, \ldots, P'_{\xi(S)}, P_{\xi(S)}\}$ of consecutive points by the sorted order from $q$.  
Starting with $p_1$ (the closest point to $q$) we place points in sets $P'_j$ or $P_j$ following their sorted order.  Recursively on $j$ and $i$, starting at $j=1$ and $i=1$, we place each $p_i$ in $P'_j$ as long as $K(p_i,q) > K(s_j,q)$ (this may be empty).  Then we place the next $k$ points $p_i$ into $P_j$.  After $k$ points are placed in $P_j$, we begin with $P'_{j+1}$, until all of $P$ has been placed in some set.  Let $t \leq \xi(S)$ be the index of the last set $P_j$ such that $\xi(P_j) = k$.  
Note that for all $p_i \in P_j$ (for $j \leq t$) we have $K(s_j,q) \geq K(p_i,q)$, thus $\bar \kappa_S(\{s_j\},q) \geq \bar \kappa_P(P_j,q)$. 

We can now bound the undercount as 
%\begin{align*}
\[
E(P,S,q) = 
\sum_{j=1}^{\xi(S)} \left(\bar \kappa_P(P_j,q) - \bar \kappa_S(\{s_j\},q) \right)
%+
%\sum_{j=t+1}^{\xi(S)} \left( \bar \kappa_P(P_j,q) - \bar \kappa_S(\{s_j\},q) \right)
+
\sum_{j=1}^{\xi(S)} \bar \kappa_P(P'_j,q)
 \leq 
\sum_{j=1}^{t+1} \bar \kappa_P(P'_j,q)
\]
%\end{align*}
since the first term is at most $0$ and since $\xi(P'_j) = 0$ for $j > t+1$.  
Now consider a super-level set $H \in \c{A}$ containing all points before $s_{t+1}$;  $H$ is the smallest range that contains every non-empty $P'_j$.  
Because (for $j \leq t$) each set $P_j$ can be charged to $s_j$, then $\sum_{j=1}^t \xi(P_j \cap H) = k \cdot \xi(S \cap H)$.  %; even though each $p \in P_j$ has $K(p,q) < K(s_j,q)$, each $P_j$ for $j\leq t$ is completely contained in $H$ since the points $p \in P$ with smallest $K(p,q)$ value is in a set $P_{t+1}'$.  
And because $S$ is an $\eps$-sample of $(P,\c{A})$, then 
$\sum_{j=1}^{t+1} \xi(P'_j) = \left( \sum_{j=1}^{t+1} \xi(P'_j) + \sum_{j=1}^t \xi(P_j \cap H)\right) - k \cdot \xi(S \cap H) \leq \eps \xi(P)$.   
We can now bound
\[
E(P,S,q) 
\leq 
\sum_{j=1}^{t+1} \bar \kappa_P(P_j',q)
=
\sum_{j=1}^{t+1} \sum_{p \in P_j'} \frac{K(p,q)}{\xi(P)}
\leq 
\frac{1}{\xi(P)} \sum_{j=1}^{t+1} \xi(P_j') K^+
\leq
\frac{1}{\xi(P)} (\eps \xi(P)) K^+
=
\eps K^+.
\]

\textbf{Overcount of $\bar \kappa_S(S,q)$:}
The analysis for overcounts is similar to undercounts, but we construct the partition in reverse and the leftover after the charging is not quite as clean to analyze.  
For overcounts, we partition $P$ into $2\xi(S)$ (possibly empty) sets $\{P_1, P'_1, P_2, P'_2, \ldots, P_{\xi(S)},$ $P'_{\xi(S)}\}$ of consecutive points by the sorted order from $q$.  
Starting with $p_n$ (the furthest point from $q$) we place points in sets $P'_j$ or $P_j$ following their reverse-sorted order.  Recursively on $j$ and $i$, starting at $j=\xi(S)$ and $i=n$, we place each $p_i$ in $P'_j$ as long as $K(p_i,q) < K(s_j,q)$ (this may be empty).  Then we place the next $k$ points $p_i$ into $P_j$.  After $k$ points are placed in $P_j$, we begin with $P'_{j-1}$, until all of $P$ has been placed in some set.  
Let $t \leq \xi(S)$ be the index of the last set $P_j$ such that $\xi(P_j) = k$ (the smallest such $j$).  
%Assume for convenience that all $\xi(P_j) = \{0,k\}$.  
Note that for all $p_i \in P_j$ (for $j \geq t$) we have $K(s_j,q) \leq K(p_i,q)$, thus $\bar \kappa_S(\{s_j\},q) \leq \bar \kappa_P(P_j,q)$. 

We can now bound the (negative) undercount as 
\begin{align*}
E(P,S,q) = &
\sum_{j=\xi(S)}^t \left(\bar \kappa_P(P_j,q) - \bar \kappa_S(\{s_j\},q) \right)
+
\sum_{j=t-1}^1 \left(\bar \kappa_P(P_j,q) - \bar \kappa_S(\{s_j\},q) \right)
+
\sum_{j=1}^{\xi(S)} \bar \kappa_P(P'_j,q)
\\ \geq &
\left(\bar \kappa_P(P_{t-1},q) - \bar \kappa_S(\{s_{t-1}\}, q) \right) - \sum_{j=t-2}^1 \bar \kappa_S(\{s_j\},q),
\end{align*}
since the first full term is at least $0$, as is each $\bar \kappa_P(P_j,q)$ and $\bar \kappa_P(P'_j,q)$ term in the second and third terms.  We will need the one term $\bar \kappa_P(P_{t-1},q)$ related to $P$ in the case when $1 \leq \xi(P_{t-1}) < k$.  
%\footnote{We can construct a data set where this inequality is tight.  We can have $\xi(P_j) = 0$ for $j < t$, have $\xi(P'_j) = 0$ for $j \neq \xi(S)$, and set each $p_i \in P_{\xi(S)}$ with arbitrarily small $K(p_i,q)$ values.  In particular, this can describe the case where the closest $t$ points of $P$ are in $S$.}    

Now, using that $S$ is an $\eps$-sample of $(P,\c{A})$, we will derive a bound on $t$, and more importantly $(t-2)$.   We consider the maximal super-level set $H \in \c{A}$ such that no points $H \cap P$ are in $P'_j$ for any $j$.  This is the largest set where each point $p \in P$ can be charged to a point $s \in S$ such that $K(p,q) > K(s,q)$, and thus presents the smallest (negative) undercount.  
In this case, $H \cap P = \cup_{j=1}^s P_j$ for some $s$ and $H \cap S = \cup_{j=1}^s \{s_j\}$.  Since $t \leq s$, then $\xi(H \cap P) = (s-t+1) k +\xi(P_{t-1})= (s-t+1) \xi(P)/\xi(S) + \xi(P_{t-1})$ and $\xi(H \cap S) = s$.  Thus
\[
\eps 
\geq
\bar \xi_S(H \cap S) - \bar \xi_P(H \cap P)
=
\frac{s}{\xi(S)} - \frac{(s-t+1) \xi(P)/\xi(S)}{\xi(P)} - \frac{\xi(P_{t-1})}{\xi(P)}
\geq
\frac{t-1}{\xi(S)} - \frac{\xi(P_{t-1})}{\xi(P)}.
\]
%(Notice the bounds holds even if $\xi(P_{t-1}) \neq 0$.)  
Thus $(t -2) \leq \eps \xi(S)+\xi(P_{t-1}) (\xi(S)/\xi(P)) - 1$.  Letting $p_i = \min_{i' \in P_{t-1}} K(p_{i'},q)$ (note $K(p_i,q) \geq K(s_{t-1},q)$)
\begin{align*}
E(P,S,q) 
&\geq 
\frac{\kappa(P_{t-1},q)}{\xi(P)} - \frac{K(s_{t-1},q)}{\xi(S)} - \left(\eps \xi(S)+ \xi(P_{t-1}) \frac{\xi(S)}{\xi(P)} - 1\right) \frac{K^+}{\xi(S)} 
\\ &= 
- \eps K^+ + K^+ \left(\frac{k - \xi(P_{t-1})}{\xi(P)}\right) - \frac{k \cdot K(s_{t-1},q) - \kappa(P_{t-1},q) }{\xi(P)}
\\ &\geq
- \eps K^+ + K^+ \left(\frac{k - \xi(P_{t-1})}{\xi(P)}\right) - K(p_i,q) \left(\frac{k  - \xi(P_{t-1})}{\xi(P)}\right)
\geq
-\eps K^+.  \qedhere
\end{align*}
%Since $\xi(S) \geq 1/\eps$ since it is an $\eps$-sample, then $E(P,S,q) \geq - 2\eps K^+$.  
\end{proof}

\begin{corollary}\label{cor:Gaussian-coreset}
For a Gaussian kernel, any $\eps$-sample $S$ for $(P, \c{B})$  (or for $(P, \c{E})$ if we consider covariance) guarantees that for any query $q \in \b{R}^d$ that 
$
\left| \bar\kappa_P(P,q) - \bar\kappa_S(S,q) \right| \leq \eps K^+. 
$
\end{corollary}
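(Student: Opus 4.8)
The plan is to observe that the corollary is an immediate consequence of Theorem~\ref{thm:kernel-sample} once we confirm that the Gaussian kernel is \emph{linked} to the relevant geometric range space. So the only real work is identifying the range space.

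First I would recall the linkage. For the standard Gaussian kernel $K(p,q) = e^{-\|p-q\|^2/h}$, any query $q \in \b{R}^d$, and any threshold $v \in \b{R}^+$, the super-level set $\{p \in \b{R}^d : K(p,q) \ge v\}$ is exactly $\{p : \|p-q\|^2 \le h \ln(1/v)\}$, which is the ball centered at $q$ of radius $\sqrt{h \ln(1/v)}$ when $v \le 1$, and is empty when $v > 1$. Hence every super-level set of $K(\cdot,q)$ belongs to $\c{B}$, i.e. $K$ is linked to $(\b{R}^d, \c{B})$. If instead $K$ is a Gaussian with a fixed nontrivial covariance $\Sigma$, the same computation shows the super-level sets are translated copies of a single fixed ellipsoid determined by $\Sigma$, so $K$ is linked to $(\b{R}^d, \c{E})$.

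Next I would apply Theorem~\ref{thm:kernel-sample} with $\c{M} = \b{R}^d$ and $\c{A} = \c{B}$ (respectively $\c{A} = \c{E}$): since $S$ is by hypothesis an $\eps$-sample of $(P,\c{B})$ (respectively $(P,\c{E})$), the theorem yields that $S$ is an $\eps$-sample of $(P,K)$. Unwinding the definition of an $\eps$-sample of $(P,K)$ gives $\max_q \left| \bar\kappa_P(P,q) - \bar\kappa_S(S,q)\right| \le \eps K^+$, and in particular $\left|\bar\kappa_P(P,q) - \bar\kappa_S(S,q)\right| \le \eps K^+$ holds for every individual $q \in \b{R}^d$, which is precisely the claim.

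There is no substantive obstacle: essentially all the content is carried by Theorem~\ref{thm:kernel-sample}, and the remaining step is the routine observation that Gaussian super-level sets are balls or (with covariance) ellipsoids. The only point requiring a moment of care is the degenerate thresholds, namely $v > K^+ = 1$ producing an empty super-level set and the boundary case $v = K^+$; these are handled trivially because both the empty set and individual balls (resp. ellipsoids) lie in $\c{B}$ (resp. $\c{E}$).
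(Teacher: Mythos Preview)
Your proposal is correct and matches the paper's approach exactly: the paper notes just before Theorem~\ref{thm:kernel-sample} that Gaussian super-level sets are balls (or ellipses under covariance), so the corollary follows immediately from the theorem and the definition of an $\eps$-sample of $(P,K)$, with no separate proof given.
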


%%%%%%%%%%%%%%%%%%%%%%%%%%%%%%%%%%%%%%%%%%%%%%%%%%%%%%%%%%
\paragraph{Coreset-based computation of kernel distance.}

For convenience here we assume that our kernel has been normalized so $K^+ = 1$.  
Let $P$ be an $\eps$-sample of $(\c{P}, K)$, and all points $p \in P$ have uniform weights so $\xi(P) = \xi(\c{P}) = W$.  
Then for any $q \in \b{R}^d$ 
\[
\eps 
\geq 
| \bar \kappa_P(P,q) - \bar \kappa_{\c{P}}(\c{P},q) |  
= 
\left| \frac{\kappa(P,q)}{\xi(P)} - \frac{\kappa(\c{P},q)}{\xi(\c{P})} \right|.
\] 
and hence 
\[
\left| \kappa(P,q) - \kappa(\c{P},q)\right|
\leq
\eps \xi(\c{P}) = \eps W.
\]
%$| \kappa(P,q) - \kappa(\c{P},q) | \leq \eps W.$ 
It follows that if $Q$ is also an $\eps$-sample for $(\c{Q},K)$, then 
$\| \kappa(P,Q) - \kappa(\c{P},\c{Q}) \| \leq 2\eps W^2.$
Hence, via known constructions of $\eps$-samples randomized~\cite{VC71,Tal94} or deterministic~\cite{Mat91,CM96,MWW93,Mat95,STZ04,Phi08} (which can be applied to weighted point sets to create unweighted ones~\cite{Mat91}) and Theorem \ref{thm:kernel-sample} we have the following theorems.  The first shows how to construct a coreset with respect to $D_K$.  

\begin{theorem}\label{thm:random-coreset}
Consider any kernel $K$ linked with $(\b{R}^d,\c{B})$ and objects $\c{P},\c{Q} \subset \c{M} \subset \b{R}^d$, each with total weight $W$, and for constant $d$.  We can construct sets $P \subset \c{P}$ and $Q \subset \c{Q}$ such that $|D_K(\c{P},\c{Q}) - D_K(P,Q)| \leq \eps W^2$ of size:
\begin{itemize} \denselist
\item $O((1/\eps^{2 - 1/(d+1)}) \log^{2-1/d+1}(1/\eps))$, via Lemma \ref{lem:e-samp-ball}; or %deterministic $\eps$-samples~\cite{Mat99,CM96}.
\item $O(1/\eps^2)(d+\log(1/\delta)))$ via random sampling (correct with probability at least $(1-\delta)$).
\end{itemize}
%
%Random samples $P \subset \c{P}$ and $Q \subset \c{Q}$, of size $O((1/\eps^2)(d+\log(1/\delta)))$ each, allow for the approximation $|D_K(\c{P},\c{Q}) - D_K(P,Q)| \leq \eps W^2$ with probability at least $1-\delta$.  
\end{theorem}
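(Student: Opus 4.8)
The plan is to chain together three facts already established in the excerpt: (1) Theorem~\ref{thm:kernel-sample}, which says that an $\eps$-sample of the linked binary range space $(\c{P},\c{B})$ is automatically an $\eps$-sample of the kernel range space $(\c{P},K)$; (2) the known sample-complexity and deterministic-construction bounds for $\eps$-samples of $(X,\c{B})$, namely Lemma~\ref{lem:e-samp-ball} for the deterministic bound and the Talagrand/VC random-sampling bound $O((1/\eps^2)(d+\log(1/\delta)))$ for constant $d$; and (3) the computation in the ``Coreset-based computation of kernel distance'' paragraph, which turns the $\eps$-sample guarantee on $(\c{P},K)$ into the bound $|\kappa(P,q)-\kappa(\c{P},q)| \le \eps W$ for every query $q$, and then into $|\kappa(P,Q)-\kappa(\c{P},\c{Q})| \le 2\eps W^2$ when both $P$ and $Q$ are $\eps$-samples.

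Concretely, I would proceed as follows. First, apply the chosen $\eps$-sample construction to $\c{P}$ (viewed as a weighted point set, using the reduction to unweighted point sets of~\cite{Mat91}) with respect to the range space $(\b{R}^d,\c{B})$, obtaining $P \subseteq \c{P}$ of the claimed size; by Theorem~\ref{thm:kernel-sample}, $P$ is an $\eps$-sample of $(\c{P},K)$. Do the same for $\c{Q}$ to obtain $Q$. Second, invoke the displayed chain of inequalities in the coreset-computation paragraph: for $K^+=1$ and uniform weights with $\xi(P)=\xi(\c{P})=W$, the $\eps$-sample property gives $|\kappa(P,q)-\kappa(\c{P},q)| \le \eps W$ for all $q$, and summing/averaging over $Q$ (respectively over $\c{Q}$) yields $|\kappa(P,Q)-\kappa(\c{P},\c{Q})| \le 2\eps W^2$, and likewise $|\kappa(P,P)-\kappa(\c{P},\c{P})| \le 2\eps W^2$, $|\kappa(Q,Q)-\kappa(\c{Q},\c{Q})| \le 2\eps W^2$. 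Third, expand $D_K^2(\c{P},\c{Q}) = \kappa(\c{P},\c{P})+\kappa(\c{Q},\c{Q})-2\kappa(\c{P},\c{Q})$ and the analogous expression for $D_K^2(P,Q)$, so that $|D_K^2(\c{P},\c{Q}) - D_K^2(P,Q)| \le 8\eps W^2$; rescaling $\eps$ by a constant absorbs the factor. Finally, pass from the squared distance to the distance: since $|a^2-b^2| = |a-b|\,|a+b|$ and both $D_K$ values are at most $2W$ (as $\kappa(\c{P},\c{P}) \le W^2$ etc.), one either gets the bound on $|D_K(\c{P},\c{Q}) - D_K(P,Q)|$ directly, or — more cleanly matching the stated $\eps W^2$ — one observes that the theorem as stated is really a bound on the squared quantity up to constants, and adjust $\eps$ accordingly.

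The main technical wrinkle — not really an obstacle, but the point requiring the most care — is the bookkeeping in moving from the per-query bound $|\kappa(P,q)-\kappa(\c{P},q)| \le \eps W$ to the pairwise bound, and then handling the weighted-to-unweighted reduction so that the size bounds of Lemma~\ref{lem:e-samp-ball} and the random-sampling result genuinely apply; one must check that $W$ appears linearly (not, say, quadratically) in the intermediate step and that the three cross-terms in the expansion of $D_K^2$ each contribute only $O(\eps W^2)$. The other minor point is the square-root step: the stated error $\eps W^2$ is dimensionally the error on $D_K^2$, so I would either restate the conclusion for $D_K^2$ or note that since $D_K \le 2W$ we have $|D_K(\c{P},\c{Q}) - D_K(P,Q)| \le |D_K^2(\c{P},\c{Q}) - D_K^2(P,Q)| / (D_K(\c{P},\c{Q}) + D_K(P,Q))$, which is the place where a genuine loss could occur if the two distances are both tiny — but in that regime the additive $\eps W^2$ bound is anyway the natural guarantee, so no real difficulty arises. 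Everything else is a direct citation of the preceding results.
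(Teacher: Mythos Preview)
Your proposal is correct and follows essentially the same route as the paper: the paper's argument is exactly the paragraph ``Coreset-based computation of kernel distance'' immediately preceding the theorem, which chains Theorem~\ref{thm:kernel-sample} with the cited $\eps$-sample constructions and the per-query bound $|\kappa(P,q)-\kappa(\c{P},q)|\le \eps W$ to obtain $|\kappa(P,Q)-\kappa(\c{P},\c{Q})|\le 2\eps W^2$, then states the theorem as an immediate consequence. If anything, you are more careful than the paper about combining the three cross-terms of $D_K^2$ and about the square-root passage; the paper leaves both implicit.
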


We present an alternative sampling condition to Theorem \ref{thm:random-coreset} in Appendix \ref{sec:coreset}.  It has larger dependence on $\eps$, and also has either dependence on $\Delta$ or on $\log n$ (and is independent of $K^+$).  
Also in Appendix \ref{sec:coreset} we show that it is NP-hard to optimize $\eps$ with a fixed subset size $k$. 

%Theorem \ref{thm:random-coreset} provides a random coreset of $\c{P}$ with respect to $K$.  We can also construct deterministic coresets~\cite{Mat91,CM96,MWW93,Mat95,STZ04,Phi08} that are guaranteed to be correct and have slightly smaller size; for instance they have no dependence on $\delta$.
%These two types of constructions for $\eps$-samples leads to the following algorithmic theorem.  
The associated runtimes are captured in the following algorithmic theorem. 

\begin{theorem}\label{thm:aprox-KD}
When $K$ is linked to $(\b{R}^d,\c{B})$, 
%Consider two point sets $\c{P},\c{Q} \subset \b{R}^d$ (for constant $d$) of total size $n$, and a kernel $K$ linked to $(\b{R}^d, \c{B})$.  
we can compute a number $\tilde D$ such that $|D_K(\c{P},\c{Q}) - \tilde D| \leq \eps$ in time:
\begin{itemize} \denselist
\item $O(n \cdot (1/\eps^{2d+2}) \log^{d+1}(1/\eps))$; or
\item $O(n + (\log n) \cdot ((1/\eps^2) \log(1/\delta)) + (1/\eps^4)\log^2(1/\delta))$ that is correct with probability at least $1-\delta$.
\end{itemize}
\end{theorem}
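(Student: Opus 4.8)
The plan is to use the coreset construction of Theorem~\ref{thm:random-coreset} to shrink $\c{P}$ and $\c{Q}$ to point sets of size $\poly{1/\eps}$, and then simply evaluate the kernel distance on the shrunken sets by brute force. Throughout, normalize so that $W=1$, so the target is a value $\tilde D$ with $|\tilde D - D_K(\c{P},\c{Q})| \le \eps$. The hypothesis that $K$ is linked to $(\b{R}^d,\c{B})$ is used only to invoke Theorem~\ref{thm:kernel-sample} (equivalently Corollary~\ref{cor:Gaussian-coreset}), which says an $\eps$-sample of the binary range space $(\c{P},\c{B})$ is already an $\eps$-sample of the kernel range space $(\c{P},K)$, hence a legitimate input to the coreset theorem.

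First I would build coresets $P\subseteq\c{P}$ and $Q\subseteq\c{Q}$ by applying Theorem~\ref{thm:random-coreset} with error parameter $\eps$, so that $|D_K(\c{P},\c{Q})-D_K(P,Q)|\le\eps$. For the deterministic bound, use the deterministic $\eps$-sample construction of Lemma~\ref{lem:e-samp-ball}: this costs $O(n\cdot\poly{1/\eps})$ time and returns $P,Q$ of size $m=O((1/\eps)^{2-1/(d+1)}\log^{2-1/(d+1)}(1/\eps))$, with no dependence on $n$. For the randomized bound, instead take a uniform random sample (first replacing weights by multiplicities, via the cited reduction, so the sample is unweighted): setting up the sampler costs $O(n)$, each of the $m=O((1/\eps^2)(d+\log(1/\delta)))$ draws costs $O(\log n)$, and the guarantee holds with probability at least $1-\delta$.

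Then I would compute $D_K^2(P,Q)=\kappa(P,P)+\kappa(Q,Q)-2\kappa(P,Q)$ directly on the coreset, using $O(m^2)$ kernel evaluations, and output $\tilde D=\sqrt{D_K^2(P,Q)}$ (the radicand is nonnegative since it equals $\|\Phi(P)-\Phi(Q)\|_\c{H}^2$). Since this evaluation is exact, $\tilde D=D_K(P,Q)$, so the coreset bound gives $|\tilde D-D_K(\c{P},\c{Q})|\le\eps$ immediately; there is no second approximation to compound, and in the randomized case there is only a single random event, so no union bound is needed. For the running times: in the deterministic case the cost is $O(n\cdot\poly{1/\eps})+O(m^2)$, and since $m^2=\poly{1/\eps}$ is easily dominated by the coreset term for every $d\ge1$, the total is $O(n\cdot\poly{1/\eps})$, which becomes $O(n\cdot(1/\eps^{2d+2})\log^{d+1}(1/\eps))$ once the explicit polynomial of Lemma~\ref{lem:e-samp-ball} is substituted. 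In the randomized case the cost is $O(n)+O((\log n)\cdot m)+O(m^2)$, which for constant $d$ and $m=O((1/\eps^2)\log(1/\delta))$ gives $O(n+(\log n)\cdot((1/\eps^2)\log(1/\delta))+(1/\eps^4)\log^2(1/\delta))$.

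The content here is mostly assembly on top of Theorems~\ref{thm:kernel-sample} and~\ref{thm:random-coreset}, so I do not expect a serious obstacle; the one thing that needs care is keeping the errors additive in $D_K$ itself rather than in $D_K^2$ or in $\kappa$. Running the coreset step with parameter $\eps$ and then evaluating $D_K(P,Q)$ \emph{exactly} sidesteps the square-root loss completely. By contrast, if one replaced the brute-force step by the WSPD estimator of Theorem~\ref{thm:WSPD} or the feature-map estimator of Theorem~\ref{thm:fast-CN} (both of which only control $D_K^2$ up to an additive term), one would be forced to use internal accuracy $\Theta(\eps^2)$ there, because $|\sqrt a-\sqrt b|\le\sqrt{|a-b|}$ is the only bound available when $D_K$ can be arbitrarily small; this would inflate the $\eps$-exponents, and it is avoided precisely because $m$ is already polynomial in $1/\eps$, making the $O(m^2)$ exact computation affordable. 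A minor secondary check is that the size and construction time in Lemma~\ref{lem:e-samp-ball} are small enough that $m^2$ never overtakes the $O(n\cdot\poly{1/\eps})$ term, which holds for all $d\ge1$.
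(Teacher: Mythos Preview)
Your proposal is correct and is exactly the (implicit) argument the paper intends: build coresets $P,Q$ via Theorem~\ref{thm:random-coreset}, then evaluate $D_K(P,Q)$ by brute force in $O(m^2)$ time; since Theorem~\ref{thm:random-coreset} already controls $|D_K(\c{P},\c{Q})-D_K(P,Q)|$ directly, no square-root loss arises. One small attribution point: the explicit $O(n\cdot(1/\eps)^{2d+2}\log^{d+1}(1/\eps))$ in the deterministic bound does not come from Lemma~\ref{lem:e-samp-ball} (which only states $\poly{1/\eps}$) but from the Chazelle--Matou\v{s}ek construction quoted just above it, whose running time is $O(d)^{3d}n(1/\eps)^{2\nu}\log^{\nu}(1/\eps)$ with $\nu=d+1$ for balls.
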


Notice that these results automatically work for any kernel linked with $(\b{R}^d, \c{B})$ (or more generally with $(\b{R}^d, \c{E})$) with no extra work;  this includes not only Gaussians (with non-trivial covariance), but any other standard kernel such as triangle, ball or Epanechnikov.

%%%%%%%%%%%%%%%%%%%%%%%%%%%%%%%%%%%%%%%%%%%%%%%%%%%%%%%%%%%%
%%%%%%%%%%%%%%%%%%%%%%%%%%%%%%%%%%%%%%%%%%%%%%%%%%%%%%%%%%%%%

\section{Minimizing the Kernel Distance under Translation and Rotation}
\label{sec:an-fptas-minimizing}

We attack the problem of minimizing the kernel distance between $\c{P}$ and $\c{Q}$ under a set of transformations: translations or translations and rotations.  
A \emph{translation} $T \in \b{R}^d$ is a vector so $\c{Q} \oplus T = \{q+T \mid q \in Q\}$.  The translation 
$
T^* = \arg \min_{T \in \b{R}^d} D_K(\c{P}, \c{Q} \oplus T),
$
applied to $\c{Q}$ minimizes the kernel norm.  
A \emph{rotation} $R \in \SO{d}$ can be represented as a special orthogonal matrix.  
We can write $R \circ \c{Q} = \{R(q) \mid q \in Q\}$, where $R(q)$ rotates $q$ about the origin, preserving its norm.  The set of a translation and rotation
$
(T^\star, R^\star) = \arg \min_{(T,R) \in \b{R}^d \times \SO{d}} D_K(\c{P}, R \circ (\c{Q} \oplus T))
$
applied to $\c{Q}$ minimizes the kernel norm.  

%%%%%%%%%%%%%%%%%%%%
\paragraph{Decomposition.}
In minimizing $D_K(\c{P}, R \circ (\c{Q} \oplus T))$ under all translations and rotations, we can reduce this to a simpler problem.  The first term $\kappa(\c{P}, \c{P}) = \sum_{p_1 \in \c{P}} \sum_{p_2 \in \c{P}} \mu(p_1) \mu(p_2) K(p_1, p_2)$ has no dependence on $T$ or $R$, so it can be ignored.  And the second term $\kappa(\c{Q}, \c{Q}) = \sum_{q_1 \in \c{Q}} \sum_{q_2 \in \c{Q}} \nu(q_1) \nu(q_2) K(R(q_1 + T), R(q_2+T))$ can also be ignored because it is invariant under the choice of $T$ and $R$.  Each subterm $K(R(q_1 +T), R(q_2 + T))$ only depends on $||R(q_1 + T) - R(q_2 + T)|| = ||q_1 - q_2||$, which is also independent of $T$ and $R$.  Thus we can rephrase the objective as finding
\\
$\displaystyle{\hspace{0.5in}
(T^\star,R^\star) = \arg \max_{(T,R) \in \b{R}^d \times \SO{d}} \sum_{p \in \c{P}} \sum_{q \in \c{Q}} \mu(p) \nu(q) K(p, R(q + T)) = \arg \max_{(T,R) \in \b{R}^d \times \SO{d}} \kappa(\c{P},R\circ (\c{Q} \oplus T)).
}$

We start by providing an approximately optimal translation.  Then we adapt this algorithm to handle both translations and rotations.

%%%%%%%%%%%%%%%%%%%%%%%%%%%%%%%%%%%
\subsection{Approximately Optimal Translations}
We describe, for any parameter $\eps > 0$, an algorithm for a translation $\hat{T}$ such that $D_K^2(\c{P}, \c{Q} \oplus \hat{T}) - D_K^2(\c{P}, \c{Q} \oplus T^*) \leq \eps W^2$.
We begin with a key lemma providing analytic structure to our problem.

\begin{lemma}
$\kappa(\c{P}, \c{Q} \oplus T^*) \geq W^2/n^2$.
\label{lem:lb1}
\end{lemma}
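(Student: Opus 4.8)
The plan is to use that $T^*$ is, by definition, a \emph{maximizer} of $\kappa(\c{P},\c{Q}\oplus T)$ over all $T\in\b{R}^d$ (recall the decomposition rephrased the minimization of $D_K$ as maximization of $\kappa$). Hence it suffices to exhibit \emph{one} translation $\hat T$ for which $\kappa(\c{P},\c{Q}\oplus\hat T)\geq W^2/n^2$, and then $\kappa(\c{P},\c{Q}\oplus T^*)\geq\kappa(\c{P},\c{Q}\oplus\hat T)\geq W^2/n^2$ follows immediately.

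First I would normalize so that $\sum_{p\in\c{P}}\mu(p)=\sum_{q\in\c{Q}}\nu(q)=W$ (recall we may take $W=1$), and select the heaviest points $p^\star=\arg\max_{p\in\c{P}}\mu(p)$ and $q^\star=\arg\max_{q\in\c{Q}}\nu(q)$. A simple averaging argument gives $\mu(p^\star)\geq W/n$ and $\nu(q^\star)\geq W/n$. Then I take the translation $\hat T=p^\star-q^\star$, which places $q^\star$ exactly on top of $p^\star$, so that $K(p^\star,q^\star+\hat T)=K(p^\star,p^\star)=1$.

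The key observation is nonnegativity: the Gaussian kernel satisfies $K(p,q)>0$ everywhere and the weights are nonnegative, so every term in
\[
\kappa(\c{P},\c{Q}\oplus\hat T)=\sum_{p\in\c{P}}\sum_{q\in\c{Q}}\mu(p)\,\nu(q)\,K(p,q+\hat T)
\]
is nonnegative. Keeping only the $(p^\star,q^\star)$ term yields $\kappa(\c{P},\c{Q}\oplus\hat T)\geq\mu(p^\star)\nu(q^\star)\geq W^2/n^2$, and combining with optimality of $T^*$ finishes the argument.

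I do not expect a genuine obstacle here; the statement is essentially a normalization fact. The only points that require a little care are (i) that dropping all but one summand is legitimate, which rests on the strict positivity of the Gaussian kernel and nonnegativity of $\mu,\nu$ (in the signed-weight setting arising from the orientation reduction one would instead argue on an appropriate nonnegative sub-instance, or simply note the claim is used only where weights are nonnegative), and (ii) the averaging step $\max_p\mu(p)\geq W/n$, which uses the normalization $\sum_p\mu(p)=W$.
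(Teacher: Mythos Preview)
Your proposal is correct and follows essentially the same argument as the paper: pick the heaviest $p^\star\in\c{P}$ and $q^\star\in\c{Q}$ (each of weight at least $W/n$ by averaging), translate so they coincide (making that kernel term equal to $1$), and use nonnegativity of all remaining terms to conclude $\kappa(\c{P},\c{Q}\oplus\hat T)\geq W^2/n^2$, which then transfers to $T^*$ by optimality. The paper's proof is terser but identical in substance.
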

\begin{proof} 
When $T \in \b{R}^d$ aligns $q \in Q$ so $q + T = p$ for $p \in P$ it ensures that $K(p,q) = 1$.  We can choose the points $p$ and $q$ such that $\mu(p)$ and $\nu(q)$ are as large as possible.  They must each be at least $W/n$, so $K(p,q) \mu(p) \nu(q) \geq W^2/n^2$.  All other subterms in $\kappa(\c{P},\c{Q}\oplus T)$ are at least $0$.   Thus $\kappa(\c{P},\c{Q} \oplus T) \geq W^2/n^2$.  
\end{proof}

Thus, if $\kappa(\c{P},\c{Q} \oplus T^*) \geq W^2/n^2$, then for some pair of points $p \in \c{P}$ and $q \in \c{Q}$ we must have $\mu(p) \nu(q) K(p,q+T^*) \geq \mu(p)\nu(q)/n^2$, i.e. $K(p,q+T^*) \geq 1/n^2$.  Otherwise, if all $n^2$ pairs $(p,q)$ satisfy $\mu(p)\nu(q)K(p,q+T^*) < \mu(p)\nu(q)/n^2$, then 
\[
\kappa(\c{P},\c{Q} \oplus T^*) 
= 
\sum_{p \in \c{P}}\sum_{q \in \c{Q}} \mu(p)\nu(q) K(p,q+T^*) 
< 
\sum_{p \in \c{P}}\sum_{q \in \c{Q}} \mu(p)\nu(q) /n^2
=
W^2/n^2.
\]  
Thus some pair $p \in \c{P}$ and $q \in \c{Q}$ must satisfy $||p-(q+T^*)|| \leq \sqrt{\ln (n^2)}$, via Lemma \ref{lem:G-dist} with $\gamma = 1/(n^2)$.

Let $G_\eps$ be a grid on $\b{R}^d$ so that when any point $p \in \b{R}^d$ is snapped to the nearest grid point $g \in G_\eps$, we guarantee that $||g-p|| \leq \eps$.  
We can define an orthogonal grid $G_\eps = \{(\eps/\sqrt{d}) z \mid z \in \b{Z}^d \}$, where $\b{Z}^d$ is the $d$-dimensional lattice of integers.  
Let $\c{G}[\eps,p,\Lambda]$ represent the subset of the grid $G_\eps$ that is within a distance $\Lambda$ of the point $p$.  
In other words, $\c{G}[\eps, p, \Lambda] = \{g \in G_\eps \mid ||g - p|| \leq \Lambda\}$.  

%%%%%%%%%%%%%%%%%
\paragraph{Algorithm.}
These results imply the following algorithm.  
For each point $p \in \c{P}$, for each $q \in \c{Q}$, and for each $g \in \c{G}[\eps/2, p, \sqrt{\ln(n^2)}]$ we consider the translation $T_{p,q,g}$ such that $q + T_{p,q,g} = g$.  We return the translation $T_{p,q,g}$ which maximizes $\kappa(\c{P}, \c{Q} \oplus T_{p,q,g})$, by evaluating $\kappa$ at each such translation of $\c{Q}$.  

\begin{theorem}
The above algorithm runs in time $O((1/\eps)^d n^{4} \log^{d/2} n)$, 
for fixed $d$, and is guaranteed to find a translation $\hat{T}$
such that $D_K^2(\c{P}, \c{Q} \oplus \hat{T}) - D_K^2(\c{P}, \c{Q} \oplus T^*) \leq \eps W^2$.  
\label{lem:trans}
\end{theorem}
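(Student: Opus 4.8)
The plan is to establish two things: correctness of the returned translation, and the running time bound. Both follow fairly directly from the structural lemmas already in place, so the proof is mostly an accounting exercise.

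\textbf{Correctness.} First I would argue that the grid search examines a translation close to $T^*$. By Lemma~\ref{lem:lb1} and the discussion following it, there is a pair $p \in \c{P}$, $q \in \c{Q}$ with $\|p - (q + T^*)\| \leq \sqrt{\ln(n^2)}$. Then the point $p$ lies within distance $\sqrt{\ln(n^2)}$ of $q + T^*$, so the grid piece $\c{G}[\eps/2, p, \sqrt{\ln(n^2)}]$ contains some grid point $g$ with $\|g - (q + T^*)\| \leq \eps/2$ (using that $G_{\eps/2}$ snaps any point to within $\eps/2$, and that $q+T^*$ is itself within $\sqrt{\ln(n^2)}$ of $p$, possibly needing a slightly enlarged radius $\sqrt{\ln(n^2)} + \eps/2$, which I would absorb into the radius constant). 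The algorithm considers the translation $T_{p,q,g}$ with $q + T_{p,q,g} = g$, so $\|T_{p,q,g} - T^*\| = \|g - (q + T^*)\| \leq \eps/2$. Next I would bound $|\kappa(\c{P}, \c{Q} \oplus T_{p,q,g}) - \kappa(\c{P}, \c{Q} \oplus T^*)|$: each of the $n^2$ subterms $\mu(p')\nu(q')K(p', q' + T)$ changes by at most $\mu(p')\nu(q') \cdot (\eps/2)/\sqrt{h}$ when $T$ moves by $\eps/2$, by Lemma~\ref{lem:grid-eps}; summing gives total change at most $(\eps/2) W^2 / \sqrt{h}$. (With $h$ normalized to $1$, or by folding $1/\sqrt h$ into $\eps$, this is $(\eps/2) W^2$.) Since the algorithm returns the translation maximizing $\kappa$ over all candidates, the returned $\hat T$ satisfies $\kappa(\c{P}, \c{Q}\oplus \hat T) \geq \kappa(\c{P}, \c{Q}\oplus T_{p,q,g}) \geq \kappa(\c{P}, \c{Q}\oplus T^*) - (\eps/2)W^2$. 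Finally, by the decomposition $D_K^2(\c{P}, \c{Q}\oplus T) = \kappa(\c{P},\c{P}) + \kappa(\c{Q},\c{Q}) - 2\kappa(\c{P},\c{Q}\oplus T)$ where the first two terms are independent of $T$, we get $D_K^2(\c{P},\c{Q}\oplus\hat T) - D_K^2(\c{P},\c{Q}\oplus T^*) = 2(\kappa(\c{P},\c{Q}\oplus T^*) - \kappa(\c{P},\c{Q}\oplus\hat T)) \leq \eps W^2$, as claimed.

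\textbf{Running time.} Building the grid candidates: for each of the $n$ points $p$ and each of the $n$ points $q$, the set $\c{G}[\eps/2, p, \sqrt{\ln(n^2)}]$ has size $O((\sqrt{\ln(n^2)}/(\eps/\sqrt d))^d) = O((1/\eps)^d \log^{d/2} n)$ for fixed $d$, giving $O((1/\eps)^d n^2 \log^{d/2} n)$ candidate translations total. For each candidate translation we evaluate $\kappa(\c{P}, \c{Q}\oplus T)$ by brute force in $O(n^2)$ time. Multiplying, the total is $O((1/\eps)^d n^4 \log^{d/2} n)$, matching the statement.

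\textbf{Main obstacle.} There is no deep obstacle; the one point needing care is the interaction of the two error sources — the $1/n^2$ threshold from Lemma~\ref{lem:lb1} (controlling how far the relevant pair can be, hence the search radius $\sqrt{\ln(n^2)}$) and the grid resolution $\eps/2$ (controlling the Lipschitz error). I would make sure the radius used in $\c{G}[\cdot]$ is large enough to still contain a grid point near $q + T^*$ after the $\eps/2$ slack, and that the factor $1/\sqrt h$ from Lemma~\ref{lem:grid-eps} is handled consistently with the paper's convention of normalizing (or is simply stated to only affect constants). A secondary subtlety is confirming that only translations of the stated form need to be considered — this is exactly the content of the pre-lemma argument that some pair must be within $\sqrt{\ln(n^2)}$ under $T^*$, so I would just invoke it.
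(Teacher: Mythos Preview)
Your proposal is correct and follows essentially the same approach as the paper: both use Lemma~\ref{lem:lb1} to bound the search radius, find a grid translation within $\eps/2$ of $T^*$, apply Lemma~\ref{lem:grid-eps} termwise to get a $(\eps/2)W^2$ error on $\kappa$, and then pass to $D_K^2$ via the decomposition; the runtime accounting is identical. You are slightly more explicit than the paper about the factor of $2$ from the $-2\kappa$ term and the $1/\sqrt{h}$ from Lemma~\ref{lem:grid-eps}, but these are presentational differences only.
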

\begin{proof}
We know that the optimal translation $T^*$ must result in some pair of points $p \in \c{P}$ and $q \in \c{Q}$ such that $||p - (q+T^*)|| \leq \sqrt{\ln (n^2)}$ by Lemma \ref{lem:lb1}.  So checking all pairs $p \in \c{P}$ and $q \in \c{Q}$, one must have $||p - q|| \leq \sqrt{\ln(n^2)}$.  
Assuming we have found this closest pair, $p \in \c{P}$ and $q \in \c{Q}$, we only need to search in the neighborhood of translations $T = p - q$.  

Furthermore, for some translation $T_{p,q,g} = g-q$ we can claim that $\kappa(\c{P},\c{Q} \oplus T^*) - \kappa(\c{P},\c{Q} \oplus T_{p,q,g}) \leq \eps$.  Since $||T^* - T_{p,q,g}|| \leq \eps/2$, we have the bound on subterm $| K(p,q+T^*) - K(p,q+T_{p,q,g})| \leq \eps/2$, by Lemma \ref{lem:grid-eps}.  In fact, for every other pair $p' \in \c{P}$ and $q' \in \c{Q}$, we also know $| K(p',q'+T^*) - K(p',q'+T_{p,q,g})| \leq \eps/2$.  Thus the sum of these subterms has error at most $(\eps/2) \sum_{p \in \c{P}} \sum_{q \in \c{Q}} \mu(p) \nu(q) = (\eps/2)W^2$.  

Since, the first two terms of $D_K^2(\c{P},\c{Q} \oplus T)$ are unaffected by the choice of $T$, this provides an $\eps$-approximation for $D_K^2(\c{P},\c{Q} \oplus T)$ because all error is in the $(-2)\kappa(\c{P},\c{Q} \oplus T)$ term.  

For the runtime we need to bound the number of pairs from $\c{P}$ and $\c{Q}$ (i.e. $O(n^2)$), the time to calculate $\kappa(\c{P},\c{Q} \oplus T)$ (i.e. $O(n^2)$), and finally the number of grid points in $\c{G}[\eps/2, p, \sqrt{\ln (n^2)}]$.  The last term requires $O((1/\eps)^d)$ points per unit volume, and a ball of radius $\sqrt{\ln (n^2)}$ has volume $O(\log^{d/2} n)$, resulting in $O((1/\eps)^d \log^{d/2} n)$ points.  This product produces a total runtime of $O((1/\eps)^d n^4 \log^{d/2} n)$.  
\end{proof}

For a constant dimension $d$, using Theorem \ref{thm:random-coreset} to construct a coreset, we can first set $n = O((1/\eps^2)\log (1/\delta))$ and now 
the time to calculate $\kappa(\c{P},\c{Q} \oplus T)$ is $O((1/\eps^4) \log^2 (1/\delta))$ after spending $O(n + (1/\eps^2) \log (1/\delta) \log n)$ time to construct the coresets. 
Hence the total runtime is 
\[
O(n + \log n (1/\eps^2)( \log(1/\delta)) + (1/\eps^{d+8}) \cdot \log^{d/2} ((1/\eps) \log(1/\delta))\log^4(1/\delta)),
\]
and is correct with probability at least $1-\delta$.  

\begin{theorem}
For fixed $d$, in 
\[
O(n + \log n (1/\eps^2)( \log(1/\delta)) + (1/\eps^{d+8}) \cdot \log^{d/2} ((1/\eps) \log(1/\delta))\log^4(1/\delta))
\]
time we can find a translation $\hat T$ such that 
$D_K^2(\c{P}, \c{Q} \oplus \hat{T}) - D_K^2(\c{P}, \c{Q} \oplus T^*) \leq \eps W^2$, 
with probability at least $1-\delta$.  
\label{thm:trans}
\end{theorem}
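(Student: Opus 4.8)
The plan is to compose the exact grid-search algorithm of Theorem~\ref{lem:trans} with the coreset reduction of Theorem~\ref{thm:random-coreset}, splitting the error budget $\eps$ into a constant number of pieces. First I would invoke the random-sampling bullet of Theorem~\ref{thm:random-coreset} with accuracy $\eps' = \eps/16$ and failure probability $\delta/2$, producing $P \subseteq \c{P}$ and $Q \subseteq \c{Q}$ of size $n' = O((1/\eps^2)(d+\log(1/\delta))) = O((1/\eps^2)\log(1/\delta))$ for fixed $d$. The point that needs care here is that I must use the stronger per-query guarantee of Theorem~\ref{thm:kernel-sample} / Corollary~\ref{cor:Gaussian-coreset}, namely that an $\eps$-sample of $(\c{P},K)$ approximates $\bar\kappa$ against \emph{every} query point in $\b{R}^d$, not merely the untranslated bound literally stated in Theorem~\ref{thm:random-coreset}. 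Since the Gaussian kernel is shift-invariant and the super-level-set structure is translation-equivariant, translating $Q$ and $\c{Q}$ by the same vector preserves the coreset property, and one obtains $|\kappa(P,Q\oplus T) - \kappa(\c{P},\c{Q}\oplus T)| \le 2\eps' W^2$ simultaneously for all translations $T$ (summing the bound $|\kappa(P,q+T)-\kappa(\c{P},q+T)|\le \eps' W$ over $q$, then once more to pass from $\c{P}$ to $\c{Q}$). Then I would run the algorithm of Theorem~\ref{lem:trans} on $(P,Q)$ with parameter $\eps'' = \eps/2$, obtaining $\hat T$ with $D_K^2(P,Q\oplus\hat T) - D_K^2(P,Q\oplus T^*_{P,Q}) \le \eps'' W^2$, where $T^*_{P,Q}$ is optimal for the coreset instance.

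The error analysis is a short chain of four inequalities. Because the $\kappa(\c{P},\c{P})$ and $\kappa(\c{Q},\c{Q})$ terms of $D_K^2$ are translation-invariant, bounding $D_K^2(\c{P},\c{Q}\oplus\hat T) - D_K^2(\c{P},\c{Q}\oplus T^*)$ is the same as bounding $2\big(\kappa(\c{P},\c{Q}\oplus T^*) - \kappa(\c{P},\c{Q}\oplus\hat T)\big)$. Applying the uniform coreset bound to $\hat T$, then the Theorem~\ref{lem:trans} guarantee on $(P,Q)$ (which says $\kappa(P,Q\oplus\hat T) \ge \kappa(P,Q\oplus T^*_{P,Q}) - \tfrac12\eps'' W^2$), then optimality of $T^*_{P,Q}$ over $T^*$ for the coreset instance, and finally the uniform coreset bound at $T^*$, yields $\kappa(\c{P},\c{Q}\oplus\hat T) \ge \kappa(\c{P},\c{Q}\oplus T^*) - (4\eps' + \tfrac12\eps'')W^2$, hence a total error of at most $(8\eps' + \eps'')W^2 = \eps W^2$ with the chosen constants. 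A union bound over the two coreset constructions gives overall success probability at least $1-\delta$.

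For the running time I would account for $O(n)$ to set up weighted prefix sums plus $O(n'\log n) = O((1/\eps^2)\log(1/\delta)\log n)$ to draw the $n'$ samples, and then the cost of Theorem~\ref{lem:trans} on inputs of size $n'$ with parameter $\eps''$, which is $O\big((1/\eps'')^d (n')^4 \log^{d/2} n'\big) = O\big((1/\eps^{d+8})\,\log^{d/2}((1/\eps)\log(1/\delta))\,\log^4(1/\delta)\big)$ for fixed $d$. Summing the three contributions reproduces the claimed bound.

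I do not expect a genuine obstacle; the one place that is easy to get wrong — and therefore the step I would spell out most carefully — is the justification that the coreset error is uniform over all translations, which forces the appeal to Theorem~\ref{thm:kernel-sample}/Corollary~\ref{cor:Gaussian-coreset} and the translation-equivariance remark rather than a direct citation of Theorem~\ref{thm:random-coreset}. The remaining work (tracking the constants $4\eps'$ and $\tfrac12\eps''$ through the $\kappa$-to-$D_K^2$ conversion, and rescaling $\delta$) is routine bookkeeping.
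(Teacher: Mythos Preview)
Your proposal is correct and follows essentially the same route as the paper: construct random-sample coresets $P,Q$ of size $O((1/\eps^2)\log(1/\delta))$ via Theorem~\ref{thm:random-coreset}, then run the grid-search algorithm of Theorem~\ref{lem:trans} on $(P,Q)$ and substitute $n' = O((1/\eps^2)\log(1/\delta))$ into its runtime. Your explicit appeal to Theorem~\ref{thm:kernel-sample}/Corollary~\ref{cor:Gaussian-coreset} to obtain coreset error uniform over all translations is a point the paper leaves implicit, so your write-up is in fact more careful there.
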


%%%%%%%%%%%%%%%%%%%%%%%%%%%%%%%%%%%%%%%%%%%%%%%%%%%%%%%%%%%%%
%\paragraph{Translations and rotations.}
%Using a similar, but more involved, approach (detailed in Appendix \ref{sec:rigid}) we can find the approximately optimal translation and rotation $(\hat T, \hat R)$ to minimize $D_K$ in $O(n+ \textrm{poly}(1/\eps, \log n)^d)$.  %Specifically:

%\begin{theorem} 
%For $\c{P},\c{Q} \subset \b{R}^d$ of total size $n$, for fixed $d$, in time
%\[
%%$\displaystyle{%\hspace{0.5in}
%O(n + (1/\eps)^{(1/2)d^2 - (7/2)d + 5} \log^{(1/4)d^2 - (7/4)d + 1/2} (1/\eps \delta) + (1/\eps^2)\log(1/\delta)\log n),
%\]
%%}$
%%\\
%we can find a translation and rotation pair $(\hat T, \hat R)$, such that 
%%\\
%%$\displaystyle{%\hspace{0.5in}
%\[
%D_K^2(\c{P}, \hat R \circ (\c{Q} \oplus \hat T)) - D_K^2(\c{P}, R^\star \circ \c{Q} \oplus T^\star) \leq \eps W^2.
%\]
%%}$  
%\label{thm:rot}
%\end{theorem}

%%%%%%%%%%%%%%%%%%%%%%%%%%%%%%%%%%%%%%%%%%%%%%%%%%%%%%%%%%%%
\subsection{Approximately Optimal Translations and Rotations}

For any parameter $\eps > 0$, we describe an algorithm to find a translation $\hat T$ and a rotation $\hat R$ such that 
\[
D_K^2(\c{P}, \hat R \circ (\c{Q} \oplus \hat T)) - D_K^2(\c{P}, R^\star \circ (\c{Q} \oplus T^\star)) \leq \eps W^2.
\]
We first find a translation to align a pair of points $p \in \c{P}$ and $q \in \c{Q}$ within some tolerance (using a method similar to above, and using Lemma~\ref{lem:G-dist} to ignore far-away pairs) and then rotate $\c{Q}$ around $q$.  This deviates from our restriction above that $\hat R \in \SO d$ rotates about the origin, but can be easily overcome by performing the same rotation about the origin, and then translating $\c{Q}$ again so $q$ is at the desired location.  Thus, after choosing a $q \in \c{Q}$ (we will in turn choose each $q' \in \c{Q}$) we let all rotations be about $q$ and ignore the extra modifications needed to $\hat T$ and $\hat R$ to ensure $\hat R$ is about the origin.  

% Using Lemma \ref{lem:G-dist} we can ignore pairs of points $(p,q) \in \c{P} \times \c{Q}$ with $||p-q|| > \sqrt{\ln(1/\eps)}$ because they contribute less than $\eps \cdot \mu(p) \nu(q)$ to $\kappa(\c{P},\c{Q})$, so the sum of all such contributions is less than $\eps W^2$.  

Given a subset $S \subset \c{Q}$ of fewer than $d$ points and a pair $(p,q) \in \c{P} \times \c{Q}$ where $q \notin S$, we can define a rotational grid around $p$, with respect to $q$, so that $S$ is fixed.  Let $\c{R}_{d,S}$ be the subset of rotations in $d$-space under which the set $S$ is invariant.  That is for any $R \in \c{R}_{d,S}$ and any $s \in S$ we have $R(s) = s$.  Let $\tau = d - |S|$.  Then (topologically) $\c{R}_{d,S} = \SO{\tau}$.  
Let $R_{S,p,q} = \min_{R \in \c{R}_{d,S}} ||R(q) - p||$ and let $\hat q = R_{S,p,q}(q)$.  
Let $\c{H}[p,q,S,\eps,\Lambda] \subset \c{R}_{d,S}$ be a set of rotations under which $S$ is invariant with the following property:  for any point $q'$ such that there exists a rotation $R' \in \c{R}_{d,S}$ where $R'(q) = q'$ and where $||q' - \hat q|| \leq \Lambda$, then there exists a rotation $R \in \c{H}[p,q,S,\eps,\Lambda]$ such that $||R(q) - q'|| \leq \eps$.  
For the sanity of the reader, we will not give a technical construction, but just note that it is possible to construct $\c{H}[p,q,S,\eps,\Lambda]$ of size $O((\Lambda/\eps)^{\tau})$.

%%%%%%%%%%%%%%%%%
\paragraph{Algorithm.}
For each pair of ordered sets of $d$ points $(p_1, p_2, \ldots, p_d) \subset \c{P}$ and $(q_1, q_2, \ldots, q_d) \subset \c{Q}$ consider the following set of translations and rotations.  Points in $\c{P}$ may be repeated.  
For each $g \in \c{G}[\eps/d,p_1,\sqrt{\ln(\max\{1/\eps,n^2\})}]$ consider translation $T_{p_1, q_1, g}$ such that $q_1 + T_{p_1, q_1, g} = g$.  We now consider rotations of the set $(\c{Q} \oplus T_{p_1, q_1, g})$.  
Let $S = \{q_1\}$ and consider the rotational grid $\c{H}[p_2, q_2+T_{p_1,q_1,g}, S, \eps/d, \sqrt{\ln(1/\eps)}]$.  For each rotation $R_2 \in \c{H}[p_2, q_2+T_{p_1,q_1,g}, S, \eps/d, \sqrt{\ln(1/\eps)}]$ we recurse as follows.  Apply $R_2(\c{Q} \oplus T_{p_1, q_1, g})$ and place $R_2(q_2 + T_{p_1, q_1, g})$ in $S$.  Then in the $i$th stage consider the rotational grid $\c{H}[p_i, R_{i-1}(\ldots R_2(q_2 + T_{p_1, q_1, g}) \ldots ), S, \eps/d, \sqrt{\ln(1/\eps)}]$.  Where $R_i$ is some rotation we consider from the $i$th level rotational grid, let $\bar{R} = R_d \circ R_{d-1} \circ \ldots \circ R_2$.  Let $(\hat T, \hat R)$ be the pair $(T_{p, q, g}, \bar{R})$ that maximize $\kappa(\c{P},\bar R \circ (\c{Q} \oplus T_{p,q,g}))$.  

\begin{theorem} 
The above algorithm runs in time 
\[
O(n^{2d+2} (1/\eps)^{(d^2-d+2)/2} \log^{(d^2-3d+2)/4} (1/\eps) \log^{d/2} (\max\{n^2, 1/\eps\})),
\]
for a fixed $d$, and is guaranteed to find a translation and rotation pair $(\hat T, \hat R)$, such that 
\[
D_K^2(\c{P}, \hat R \circ (\c{Q} \oplus \hat T)) - D_K^2(\c{P}, R^\star \circ \c{Q} \oplus T^\star) \leq \eps W^2.
\]
\label{lem:rot}
\end{theorem}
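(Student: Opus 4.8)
\emph{Plan.} The plan is to follow the template of Theorem~\ref{lem:trans}, replacing the single $d$-dimensional grid search by a cascade of one translational and several rotational grid searches — one ``level'' per point-correspondence needed to pin down an element of $\b{R}^d\times\SO d$. By the Decomposition paragraph, $\kappa(\c P,\c P)$ and $\kappa(\c Q,\c Q)$ are invariant under $(T,R)$, so it suffices to output $(\hat T,\hat R)$ with $\kappa(\c P,\hat R\circ(\c Q\oplus\hat T))\ge\kappa(\c P,R^\star\circ(\c Q\oplus T^\star))-(\eps/2)W^2$; and since the algorithm reports the candidate of largest $\kappa$, whose value is automatically at most the optimum, it is enough to exhibit one candidate in the search that is within $(\eps/2)W^2$ of optimal. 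For the anchor, I argue exactly as in Lemma~\ref{lem:lb1} that $\kappa(\c P,R^\star\circ(\c Q\oplus T^\star))\ge W^2/n^2$ (some $q$ can be sent exactly onto some $p$, contributing $\ge W^2/n^2$, while all other subterms are nonnegative and the optimum does at least this well), so Lemma~\ref{lem:G-dist} with $\gamma=1/n^2$ yields a pair $(p_1,q_1)$ with $\|p_1-R^\star(q_1+T^\star)\|\le\sqrt{\ln n^2}$; using the paper's normalization that $R^\star$ rotates about $q_1+T^\star$ this says $\|p_1-(q_1+T^\star)\|\le\sqrt{\ln n^2}$. Hence, in the iteration that guesses $(p_1,q_1)$, the grid $\c G[\eps/d,p_1,\sqrt{\ln(\max\{1/\eps,n^2\})}]$ contains a point $g$ with $\|g-(q_1+T^\star)\|\le\eps/d$; set $\hat T=T_{p_1,q_1,g}$, so $\|\hat T-T^\star\|\le\eps/d$.

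Next I would recover the rotation level by level. A rotation of $\SO d$ about a fixed centre is determined by the images of $d-1$ points in general position, so the algorithm guesses $q_2,\dots,q_d\in\c Q$ and $p_2,\dots,p_d\in\c P$; I choose these so that at each level $i$ the orbit of $q_i+T^\star$ under the residual group $\c R_{d,S}$ passes within $\sqrt{\ln(1/\eps)}$ of $p_i$ whenever some unused point of $\c Q$ has that property. Inductively, if $\bar R_{i-1}=R_{i-1}\circ\cdots\circ R_2$ already moves $q_j+\hat T$ to within $(j-1)\eps/d$ of $R^\star(q_j+T^\star)$ for $j<i$, then $R^\star(q_i+T^\star)$ lies — up to an $O(\eps)$ error accrued from $\hat T\ne T^\star$ and from $\bar R_{i-1}$ fixing $S$ only approximately — on the orbit searched by $\c H[p_i,\bar R_{i-1}(q_i+\hat T),S,\eps/d,\sqrt{\ln(1/\eps)}]$, so its defining property supplies an $R_i$ with $\|R_i\bar R_{i-1}(q_i+\hat T)-R^\star(q_i+T^\star)\|\le i\eps/d$; the one exception is when that target is more than $\sqrt{\ln(1/\eps)}$ beyond the orbit's closest approach to $p_i$, but then Lemma~\ref{lem:G-dist} forces $K(p_i,R^\star(q_i+T^\star))<\eps$ and the pair is negligible. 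Setting $\hat R=\bar R_d$ completes the construction.

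For the error bound I would write $\kappa^\star-\kappa(\c P,\hat R\circ(\c Q\oplus\hat T))=\sum_{p,q}\mu(p)\nu(q)\big(K(p,R^\star(q+T^\star))-K(p,\hat R(q+\hat T))\big)$ and split the pairs: if $K(p,R^\star(q+T^\star))<\eps$ the bracket is $<\eps$ since $K\ge 0$; otherwise the rotation-recovery step keeps $\|\hat R(q+\hat T)-R^\star(q+T^\star)\|=O(\eps)$, so Lemma~\ref{lem:grid-eps} bounds the bracket by $O(\eps)$. Summing gives a total of $O(\eps)W^2$, which after rescaling $\eps$ yields the stated $D_K^2$ bound. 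The running time is the product of the $O(n^{2d})$ guesses of the two ordered $d$-tuples, the $O((1/\eps)^d\log^{d/2}(\max\{n^2,1/\eps\}))$ size of the translational grid, the total size $O\big((1/\eps)^{(d^2-3d+2)/2}\log^{(d^2-3d+2)/4}(1/\eps)\big)$ of the nested rotational grids, and the $O(n^2)$ cost of evaluating $\kappa$ at each candidate transformation.

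The main obstacle is precisely the rotation-recovery step. One must supply the construction of $\c H[p,q,S,\eps,\Lambda]$ that the paper defers, and — more importantly — verify that matching just $O(d)$ correspondences constrains the rigid motion tightly enough that the per-pair kernel error stays $O(\eps)$ \emph{uniformly}, without the conditioning of the chosen point configuration or the data diameter $\Delta$ leaking into the bound (the claimed runtime is $\Delta$-free). Making that accounting airtight — in particular arguing that the residual rotational freedom left after the $d$ matches cannot move any pair whose kernel value exceeds $\eps$, so that the ``controlled'' case in the error split really does cover every non-negligible pair — is the heart of the proof; everything else is the same grid-and-telescope bookkeeping as in Theorem~\ref{lem:trans}.
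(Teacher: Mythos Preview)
Your architecture matches the paper's exactly: anchor via Lemma~\ref{lem:lb1}, translational grid $\c{G}$, then a cascade of rotational grids $\c{H}$, one per level of the recursion, with the same runtime bookkeeping. The gap you flag at the end is real, and the paper resolves it with one idea you have not hit on.

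The missing ingredient is the paper's assumption (A2): at level $i$, with $S=\{q_1,\dots,q_{i-1}\}$ already fixed, the point $q_i$ is taken to be the point of $\c{Q}$ \emph{furthest from} $S$ (equivalently, from the affine span/axis fixed by $\c{R}_{d,S}$) among those $q$ for which some $p\in\c{P}$ satisfies $\|p-R^\star(q+T^\star)\|\le\sqrt{\ln(1/\eps)}$. Because the algorithm enumerates all ordered $d$-tuples from $\c{P}$ and $\c{Q}$, one enumerated tuple satisfies (A2). The point of (A2) is geometric: any residual rotation $R\in\c{R}_{d,S}$ displaces a point by an amount monotone in its distance to the fixed axis, so $q_i$ is the point most sensitive to the choice of $R_i$. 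Hence once the rotational grid pins $q_i$ to within $\eps/d$ of its target, every other $q\in\c{Q}$ that is not already negligible (i.e.\ every $q$ with some $p$ at distance $\le\sqrt{\ln(1/\eps)}$) is automatically pinned to within $\eps/d$ as well. This is exactly the uniform control you were worried about, and it comes for free from the ``furthest point'' choice rather than from any conditioning or $\Delta$-dependent argument. With (A2) in hand your error split and the Lipschitz bound via Lemma~\ref{lem:grid-eps} go through as you wrote them, and your runtime computation is already correct.
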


\begin{proof}
We compare our solution $(\hat T, \hat R)$ to the optimal solution $(T^\star, R^\star)$.  Note that only pairs of points $(p,q) \in \c{P} \times \c{Q}$ such that $||p - R^\star(q + T^\star)|| < \sqrt{\ln(1/\eps)}$ need to be considered.  

We first assume that for the ordered sets of $d$ points we consider $(p_1, p_2, \ldots, p_d) \subset \c{P}$ and $(q_1, q_2, \ldots, q_d) \subset \c{Q}$ we have 
(A1) $||p_i - R^\star(q_i + T^\star)|| \leq \sqrt{\ln(1/\eps)}$, and 
(A2) for $S = \{q_1, \ldots, q_{i-1}\}$, let $q_i \in \c{Q}$ be the furthest point from $S$ such that $||p_i - (q_i + T^\star)|| \leq \sqrt{\ln(1/\eps)}$.  
Note that (A2) implies that for any rotation $R \in \c{R}_{d,S}$ that $||q_i - R(q_i)|| > ||q' - R(q')||$ for all $q' \in \c{Q}$ that can be within the distance threshold under $(T^\star, R^\star)$.  
In the case that fewer than $d$ pairs of points are within our threshold distance, then as long as these are the first pairs in the ordered sequence, the algorithm works the same up to that level of recursion, and the rest does not matter.  Finally, by Lemma \ref{lem:lb1} we can argue that at least one pair must be within the distance threshold for our transition grid.  

For each point $q \in \c{Q}$ we can show there exists some pair $(T,R)$ considered by the algorithm such that $||R^\star(q + T^\star) - R(q + T)|| \leq \eps.$  
First, there must be some translation $T  = T_{p_1, q_1, g}$ in our grid that is within a distance of $\eps/d$ of $T^\star$.  This follows from Lemma \ref{lem:grid-eps} and similar arguments to the proof for translations.  

For each $q_i$ we can now show that for some $R_i \in \c{H}$ (the rotational grid) we have $||R_i(R_{i-1}(\ldots R_2(q_i + T_{p_1, q_1, g}) \ldots )) - R^\star(q_i + T^\star)|| \leq \eps$.  
By our assumptions, the transformed $q_i$ must lie within the extents of $\c{H}$.  Furthermore, there is a rotation $R_j'$ that can replace each $R_j$ for $j \in [2,i]$ that moves $q_i$ by at most $\eps/d$ such that $R'_i(R'_{i-1}( \ldots R'_2(q_i) \ldots )) = R^\star(q_i)$.  Hence, the composition of these rotations affects $q_i$ by at most $\eps/(i-1)$, and the sum effect of rotation and translation errors is at most $\eps$.  

Since each $q_i$ is invariant to each subsequent rotation in the recursion, we have shown that there is a pair $(T, R)$ considered so $||R(q_i + T) - R^\star(q_i + T^\star)|| \leq \eps$ for $q_i$ in the ordered set $(q_1, q_2, \ldots, q_d)$.
We can now use our second assumption (A2) that shows that at each stage of the recursion $q_i$ is the point affected most by the rotation.  This indicates that we can use the above bound for all points $q \in \c{Q}$, not just those in our ordered set.  

Finally, we can use Lemma \ref{lem:grid-eps} to complete the proof of correctness.  
Since if each $K(p,q)$ has error at most $\eps$, then 
\[
\left|\sum_{p \in \c{P}} \sum_{q \in \c{Q}} \mu(p) \nu(q) K(p, \hat R(q + \hat T)) - \sum_{p \in \c{P}} \sum_{q \in \c{Q}} \mu(p) \nu(q) K(p, R^*(q + T^*))\right| 
\leq
\sum_{p \in \c{P}} \sum_{q \in \c{Q}} \mu(p) \nu(q)  \eps
= 
\eps W^2.
\]

We can bound the runtime as follows.  We consider all $d! {n \choose d} = O(n^d)$ ordered sets of points in $\c{Q}$ and all $n^d$ ordered sets of points from $\c{P}$.  This gives the leading $O(n^{2d})$ term.  
We then investigate all combinations of grid points from each grid in the recursion.  The translation grid has size $O((\Delta/\eps)^d) = O((1/\eps)^d \log^{d/2} (\max\{1/\eps, n^2\}))$.  The size of the $i$th rotational grid is $O((\sqrt{\log(1/\eps)}/\eps)^{d-i}$, starting at $i=2$.  The product of all the rotational grids is the base to the sum of their powers $\sum_{i=1}^{d-1} (d-i) = \sum_{i=1}^{d-1} i = (d-1)(d-2)/2 = (d^2 - 3d +2)/2$, that is $O((1/\eps)^{(d^2 - 3d +2)/2} \log^{(d^2 - 3d +2)/4} (1/\eps))$.  Multiplying by the size of the translational grid we get $O((1/\eps)^{(d^2 - d + 2)/2} \log^{(d^2 - 3d +2)/4} (1/\eps) \log^{d/2} (\max\{n^2, 1/\eps\}))$.  
Then for each rotation and translation we must evaluate $\kappa(\c{P}, R \circ (\c{Q} \oplus T))$ in $O(n^2)$ time.  
Multiplying these three components gives the final bound of
\[
O(n^{2d+2} (1/\eps)^{(d^2 -d +2)/2} \log^{(d^2 -3d +2)/4} (1/\eps) \log^{d/2} (\max\{n^2 , 1/\eps\})).  \qedhere
\]  
\end{proof}

%The proof of Theorem \ref{lem:rot} proceeds along similar lines to the proof of Theorem \ref{lem:trans}.  We defer it to Appendix \ref{sec:rigid}.  Note that the runtime can be further reduced by first the computing a coreset via Lemma \ref{thm:apx-S} and evalutaing $\kappa$ faster using the kernel thick in Theorem \ref{thm:fast-CN}, yielding a total runtime of $O(n + \text{poly}(1/\eps, \log n)^d)$.

The runtime can again be reduced by first computing a coreset of size $O((1/\eps^2) \log (1/\delta))$ and using this value as $n$.  
After simplifying some logarithmic terms we reach the following result.

\begin{theorem} 
For fixed $d$, in  
\[
O(n + \log n (1/\eps^2)( \log(1/\delta)) + (1/\eps)^{(d^2 + 7d + 6)/2} (\log(1/\eps\delta))^{(d^2 + 7d +10)/4}),
\] 
time we can find a translation and rotation pair $(\hat T, \hat R)$, such that 
\[
D_K^2(\c{P}, \hat R \circ (\c{Q} \oplus \hat T)) - D_K^2(\c{P}, R^\star \circ \c{Q} \oplus T^\star) \leq \eps W^2,
\] 
with probability at least $1-\delta$.
\label{thm:rot}
\end{theorem}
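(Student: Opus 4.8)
The plan is to obtain Theorem~\ref{thm:rot} as a direct corollary of Theorem~\ref{lem:rot} (the ``brute-force'' translation-plus-rotation algorithm running in time roughly $O(n^{2d+2}(1/\eps)^{(d^2-d+2)/2}\cdot\operatorname{polylog})$) by first replacing the input point sets $\c{P},\c{Q}$ with coresets, exactly as was done to pass from Theorem~\ref{lem:rot}'s predecessor, Theorem~\ref{lem:trans}, to Theorem~\ref{thm:trans}. The key structural fact enabling this is that both $D_K(\c{P},\c{Q})$ and its value under any rigid motion are invariant (up to the triangle inequality) when $\c{P},\c{Q}$ are replaced by $\eps$-samples: by Theorem~\ref{thm:random-coreset}, for constant $d$ we can build $P\subset\c{P}$ and $Q\subset\c{Q}$ of size $m = O((1/\eps^2)\log(1/\delta))$ with $|D_K^2(\c{P},\c{Q}) - D_K^2(P,Q)|\le \eps W^2$; and crucially the coreset property of Theorem~\ref{thm:kernel-sample} is stated for \emph{every} query point $q\in\b{R}^d$, so it is preserved under applying any translation and rotation to one of the sets. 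Hence $|D_K^2(\c{P},\hat R\circ(\c{Q}\oplus \hat T)) - D_K^2(P,\hat R\circ(Q\oplus \hat T))|\le O(\eps)W^2$ uniformly over $(\hat T,\hat R)$, and likewise at the optimum.

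The steps, in order: (1) Run the coreset construction of Theorem~\ref{thm:random-coreset} on $\c{P}$ and on $\c{Q}$, spending $O(n)$ time plus the sampling/sketching overhead; the resulting sets have size $m = O((1/\eps^2)\log(1/\delta))$, and a further $O((\log n)\cdot m)$-type term accounts for building them (matching the $\log n(1/\eps^2)\log(1/\delta)$ summand in the claimed runtime). (2) Observe, via the uniform-over-queries guarantee, that for \emph{any} rigid motion the kernel distance between the coresets is within $O(\eps)W^2$ of that between the originals; in particular the optimal motion for the coreset pair is $O(\eps)W^2$-optimal for the original pair, and conversely. (3) Run the algorithm of Theorem~\ref{lem:rot} on $P,Q$ in place of $\c{P},\c{Q}$: its runtime bound has ``$n$'' replaced by ``$m$'', giving $O(m^{2d+2}(1/\eps)^{(d^2-d+2)/2}\log^{(d^2-3d+2)/4}(1/\eps)\log^{d/2}(\max\{m^2,1/\eps\}))$. (4) Substitute $m = O((1/\eps^2)\log(1/\delta))$ and simplify: the $m^{2d+2}$ factor contributes $(1/\eps)^{4d+4}(\log(1/\delta))^{2d+2}$, which combines with $(1/\eps)^{(d^2-d+2)/2}$ to give exponent $(d^2 + 7d + 6)/2$ on $1/\eps$, and folding the $\log(1/\delta)$ and $\log(1/\eps)$ factors together (absorbing $\log m$ and $\log(\max\{m^2,1/\eps\})$ terms, which are $O(\log(1/\eps\delta))$) yields the stated $(\log(1/\eps\delta))^{(d^2+7d+10)/4}$. (5) Rescale $\eps$ by a constant to absorb the $O(\eps)W^2$ slack from the coreset step, and collect the failure probabilities (only the randomized coreset step can fail) into a single $1-\delta$ via adjusting constants.

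The only genuinely delicate point is step (2): one must check that the $\eps$-sample property really does survive an arbitrary rigid motion applied to one of the sets. This is immediate from the way Theorem~\ref{thm:kernel-sample} / Corollary~\ref{cor:Gaussian-coreset} are phrased --- the bound $|\bar\kappa_P(P,q) - \bar\kappa_S(S,q)|\le \eps K^+$ holds for \emph{all} $q\in\b{R}^d$ simultaneously --- so applying $R\circ(\cdot\oplus T)$ to $\c{Q}$ just re-indexes which queries $q$ the points of $\c{P}$ see; combined with the decomposition $D_K^2 = \kappa(\c{P},\c{P}) + \kappa(\c{Q},\c{Q}) - 2\kappa(\c{P},\c{Q})$ and the rotation/translation invariance of the first two terms (already noted in the Decomposition paragraph), the error on $D_K^2$ under any motion is $O(\eps W^2)$. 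The remaining work --- steps (3)--(5) --- is purely the arithmetic of substituting the coreset size into the polynomial-in-$n$ running time of Theorem~\ref{lem:rot} and bookkeeping the logarithmic factors, entirely parallel to the translation-only case preceding Theorem~\ref{thm:trans}; I expect no obstacle there beyond care with the exponents.
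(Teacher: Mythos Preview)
Your proposal is correct and matches the paper's approach exactly: the paper derives Theorem~\ref{thm:rot} from Theorem~\ref{lem:rot} by first building random-sample coresets of size $O((1/\eps^2)\log(1/\delta))$ for $\c{P}$ and $\c{Q}$, substituting this for $n$ in the runtime of Theorem~\ref{lem:rot}, and simplifying the logarithmic factors. Your step~(2) --- verifying that the $\eps$-sample guarantee is preserved under an arbitrary rigid motion via the uniform-over-$q$ statement of Corollary~\ref{cor:Gaussian-coreset} and the shift-invariance of $K$ --- is a detail the paper leaves implicit, so your write-up is if anything more careful than the original.
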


\newpage
\bibliography{refs}

\begin{thebibliography}{10}

\bibitem{ABCH97}
Noga Alon, Shai Ben-David, Nocol\`o Cesa-Bianchi, and David Haussler.
\newblock Scale-sensitive dimensions, uniform convergence, and learnability.
\newblock {\em Journal of ACM}, 44:615--631, 1997.

\bibitem{ADIW09}
Alexandr Andoni, Khanh~Do Ba, Piotr Indyk, and David Woodruff.
\newblock Efficient sketches for {E}arth-mover distance, with applications.
\newblock In {\em Proceedings 50th Annual IEEE Foundations of Computer
  Science}, 2009.

\bibitem{DBLP:conf/focs/AndoniI06}
Alexandr Andoni and Piotr Indyk.
\newblock Near-optimal hashing algorithms for approximate nearest neighbor in
  high dimensions.
\newblock In {\em FOCS}, pages 459--468. IEEE Computer Society, 2006.

\bibitem{DBLP:conf/soda/AndoniIK08}
Alexandr Andoni, Piotr Indyk, and Robert Krauthgamer.
\newblock Earth mover distance over high-dimensional spaces.
\newblock In {\em SODA}, 2008.

\bibitem{Aronszajn1950}
N.~Aronszajn.
\newblock Theory of reproducing kernels.
\newblock {\em Trans. AMS}, 68:337--404, 1950.

\bibitem{Ban10}
Nikhil Bansal.
\newblock Constructive algorithms for discrepancy minimization.
\newblock In {\em Proceedings 51st Annual IEEE Symposium on Foundations of
  Computer Science}, 2010.

\bibitem{BLW96}
Peter~L. Bartlett, Philip~M. Long, and Robert~C. Williamson.
\newblock Fat-shattering and the learnability of real-valued functions.
\newblock {\em Journal of Computer and System Sciences}, 52(3):434--452, 1996.

\bibitem{knauer}
Sergio Cabello, Panos Giannopoulos, Christian Knauer, and G{\"u}nter Rote.
\newblock Matching point sets with respect to the earth mover's distance.
\newblock {\em Comput. Geom.}, 39(2):118--133, 2008.

\bibitem{CK95}
Paul~B. Callahan and S.~Rao Kosaraju.
\newblock A decomposition of multidimensional point sets with applicatrions to
  $k$-nearest neighbors and $n$-body potential fields.
\newblock {\em J ACM}, 42:67--90, 1995.

\bibitem{CNN11}
Moses Charikar, Alantha Newman, and Aleksandar Nikolov.
\newblock Tight hardness results for minimizing discrepancy.
\newblock In {\em Proceedings 22nd Annual ACM-SIAM Symposium on Discrete
  Algorithms}, 2011.

\bibitem{Cha00}
Bernard Chazelle.
\newblock {\em The Discrepancy Method}.
\newblock Cambridge University Press, 2000.

\bibitem{CM96}
Bernard Chazelle and Jiri Matousek.
\newblock On linear-time deterministic algorithms for optimization problems in
  fixed dimensions.
\newblock {\em Journal of Algorithms}, 21:579--597, 1996.

\bibitem{CW89}
Bernard Chazelle and Emo Welzl.
\newblock Quasi-optimal range searching in spaces of finite {VC}-dimension.
\newblock {\em Discrete and Computational Geometry}, 4:467--489, 1989.

\bibitem{Cla83}
Kenneth~L. Clarkson.
\newblock Fast algorithms for the all nearest neighbors problem.
\newblock In {\em FOCS}, 1983.

\bibitem{DGL96}
Luc Devroye, L\'aszl\'o Gy\"orfi, and G\'abor Lugosi.
\newblock {\em A Probabilistic Theory of Pattern Recognition}.
\newblock Springer-Verlag, 1996.

\bibitem{deza}
Michel~Marie Deza and Monique Laurent.
\newblock {\em Geometry of Cuts and Metrics}.
\newblock Springer, 1997.

\bibitem{DBLP:conf/miccai/DurrlemanPTA08}
Stanley Durrleman, Xavier Pennec, Alain Trouv{\'e}, and Nicholas Ayache.
\newblock Sparse approximation of currents for statistics on curves and
  surfaces.
\newblock In {\em 11th International Conference on Medical Image Computing and
  Computer Assisted Intervention}, 2008.

\bibitem{glaunesthesis}
Joan Glaun\`{e}s.
\newblock {\em Transport par diff\'{e}omorphismes de points, de mesures et de
  courants pour la comparaison de formes et l'anatomie num\'{e}rique.}
\newblock PhD thesis, Universit\'{e} Paris 13, 2005.

\bibitem{GlaunesJoshi:MFCA:06}
Joan Glaun\`{e}s and Sarang Joshi.
\newblock Template estimation form unlabeled point set data and surfaces for
  computational anatomy.
\newblock In {\em Math. Found. Comp. Anatomy}, 2006.

\bibitem{Greengard1991}
Leslie Greengard and John Strain.
\newblock The fast {G}auss transform.
\newblock {\em J. Sci. Stat. Computing}, 12, 1991.

\bibitem{HP}
Sariel Har-Peled.
\newblock {\em Approximation Algorithms in Geometry}.
\newblock http://valis.cs.uiuc.edu/\~{}sariel/teach/notes/aprx/.

\bibitem{Hau95}
David Haussler.
\newblock Sphere packing numbers for subsets of the boolean $n$-cube with
  bounded {V}apnik-{C}hervonenkis dimension.
\newblock {\em Journal of Combinatorial Theory, Series A}, 69:217--232, 1995.

\bibitem{HB05}
Matrial Hein and Olivier Bousquet.
\newblock Hilbertian metrics and positive definite kernels on probability
  measures.
\newblock In {\em Proceedings 10th International Workshop on Artificial
  Intelligence and Statistics}, 2005.

\bibitem{JL84}
William~B. Johnson and Joram Lindenstrauss.
\newblock Extensions of {L}ipschitz maps into a {H}ilbert space.
\newblock {\em Contemporary Mathematics}, 26:189--206, 1984.

\bibitem{KS94}
Michael Kerns and Robert~E. Shapire.
\newblock Efficient distribution-free learning of probabilistic concepts.
\newblock {\em Journal of Computer and System Sciences}, 48:464--497, 1994.

\bibitem{LLS01}
Yi~Li, Philip~M. Long, and Aravind Srinivasan.
\newblock Improved bounds on the samples complexity of learning.
\newblock {\em Journal of Computer ans System Science}, 62:516--527, 2001.

\bibitem{LP09}
Maarten L{\"o}ffler and Jeff~M. Phillips.
\newblock Shape fitting on point sets with probability distributions.
\newblock In {\em Proceedings 17th Annual European Symposium on Algorithms},
  2009.

\bibitem{Mat91}
Jiri Matou\v{s}ek.
\newblock Approximations and optimal geometric divide-and-conquer.
\newblock In {\em Proceedings 23rd Symposium on Theory of Computing}, pages
  505--511, 1991.

\bibitem{Mat95}
Jiri Matou\v{s}ek.
\newblock Tight upper bounds for the discrepancy of halfspaces.
\newblock {\em Discrete and Computational Geometry}, 13:593--601, 1995.

\bibitem{Mat99}
Jiri Matou\v{s}ek.
\newblock {\em Geometric Discrepancy}.
\newblock Springer, 1999.

\bibitem{MWW93}
Jiri Matou\v{s}ek, Emo Welzl, and Lorenz Wernisch.
\newblock Discrepancy and approximations for bounded {VC}-dimension.
\newblock {\em Combinatorica}, 13:455--466, 1993.

\bibitem{muller97}
Alfred M\"uller.
\newblock Integral probability metrics and their generating classes of
  functions.
\newblock {\em Advances in Applied Probability}, 29:429--443, 1997.

\bibitem{N11}
Aleksandar Nikolov.
\newblock Personal communication, 2011.

\bibitem{Phi08}
Jeff~M. Phillips.
\newblock Algorithms for $\eps$-approximations of terrains.
\newblock In {\em Proceedings 35th International Colloquium on Automata,
  Languages, and Programming}, 2008.

\bibitem{GI}
Jeff~M. Phillips and Suresh Venkatasubramanian.
\newblock A gentle introduction to the kernel distance.
\newblock Technical report, arXiv:1103.1625, 2011.

\bibitem{Rahimi2007}
Ali Rahimi and Benjamin Recht.
\newblock Random features for large-scale kernel machines.
\newblock In {\em Neural Informations Processing Systems}, 2007.

\bibitem{bochner}
W.~Rudin.
\newblock {\em {Fourier analysis on groups}}.
\newblock Wiley-Interscience, 1962.

\bibitem{SJ08}
Sameer Shirdhonkar and David~W. Jacobs.
\newblock Approximate {E}arth mover's distance in linear time.
\newblock In {\em Proceedings IEEE Conference on Compuer Vision and Pattern
  Recognition}, 2008.

\bibitem{smola}
Alex~J. Smola, Arthur Gretton, Le~Song, and Bernhard Sch\"olkopf.
\newblock A {H}ilbert space embedding for distributions.
\newblock In {\em ICALT}, 2007.

\bibitem{SBSGS10}
Le~Song, Byron Boots, Sajid Siddiqi, Geoffrey Gordon, and Alex~J. Smola.
\newblock Hilbert space embeddings of hidden markov models.
\newblock In {\em ICML}, 2010.

\bibitem{hilbert}
Bharath~K. Sriperumbudur, Arthur Gretton, Kenji Fukumizu, Bernhard Sch\"olkopf,
  and Gert R.~G. Lanckriet.
\newblock Hilbert space embeddings and metrics on probability measures.
\newblock {\em Journal of Machine Learning Research}, 11:1517--1561, 2010.

\bibitem{Suq95}
Ch. Suquet.
\newblock Distances {E}uclidiennes sur les mesures sign\'{e}es et application
  \`{a} des th\'{e}or\`{e}mes de {B}erry-{E}ss\'{e}en.
\newblock {\em Bulletin Belgium Mathetics Society}, 2:161--181, 1995.

\bibitem{STZ04}
Subhash Suri, Csaba~D. T\'{o}th, and Yunhong Zhou.
\newblock Range counting over multidimensional data streams.
\newblock In {\em Proceedings 20th Symposium on Computational Geometry}, pages
  160--169, 2004.

\bibitem{Tal94}
M.~Talagrand.
\newblock Sharper bounds for {G}aussian and emperical processes.
\newblock {\em Annals of Probability}, 22:76, 1994.

\bibitem{Vaillant2005}
Marc Vaillant and Joan Glaun{\`e}s.
\newblock Surface matching via currents.
\newblock In {\em Proceedings Information Processing in Medical Imaging},
  volume~19, pages 381--92, 2005.

\bibitem{Vap89}
Vladimir Vapnik.
\newblock Inductive principles of the search for emperical dependencies.
\newblock In {\em Proceedings of the Second Annual Workshop on Computational
  Learning Theory}, pages 3--21, 1989.

\bibitem{VC71}
Vladimir Vapnik and Alexey Chervonenkis.
\newblock On the uniform convergence of relative frequencies of events to their
  probabilities.
\newblock {\em Theory of Probability and its Applications}, 16:264--280, 1971.

\bibitem{Yang2003}
C.~Yang, R.~Duraiswami, N.A. Gumerov, and L.~Davis.
\newblock Improved fast {G}auss transform and efficient kernel density
  estimation.
\newblock In {\em Proceedings 9th International Conference on Computer Vision},
  pages 664--671, 2003.

\end{thebibliography}
\bibliographystyle{plain}

%%%%%%%%%%%%%%%%%%%%%%%%%%%%%%%%%%%%%%%%%%%%%%%%%%%%%%%%%%%
%%%%%%%%%%%%%%%%%%%%%%%%%%%%%%%%%%%%%%%%%%%%%%%%%%%%%%%%%%%
%%%%%%%%%%%%%%%%%%%%%%%%%%%%%%%%%%%%%%%%%%%%%%%%%%%%%%%%%%%
\appendix

%\begin{lemma}
%  $D_K(\c{P}, \c{Q}) = \| \Phi(\c{P}) - \Phi(\c{Q}) \|_{\c{H}}$.
%\label{lem:cent-norm}
%\end{lemma}
%\begin{proof}
%From the definition of the kernel distance, we have 
%\begin{align*}
%D_K^{2}(\c{P}, \c{Q}) 
%&= 
%\kappa(\c{P},\c{P}) + \kappa(\c{Q},\c{Q}) - 2\kappa(\c{P},\c{Q})
%\\ &= 
%\sum_{p \in \c{P}} \sum_{p' \in \c{P}} \mu(p)\mu(p')K(p,p') + \sum_{q \in \c{Q}} \sum_{q' \in \c{Q}} \nu(q)\nu(q') K(q,q') - 2\sum_{p \in \c{P}}\sum_{q \in \c{Q}} \mu(p)\nu(q)K(p, q) 
%\\ &= 
%\sum_{p \in \c{P}}  \sum_{p' \in \c{P}} \langle \mu(p)\phi(p), \mu(p')\phi(p') \rangle + \sum_{q \in \c{Q}} \sum_{q' \in \c{Q}} \langle \nu(q)\phi(q), \nu(q')\phi(q') \rangle -  2\sum_{p \in \c{P}}\sum_{q \in \c{Q}} \mu(p)\nu(q)\langle \phi(p), \phi(q)\rangle
%\\ &= 
%\|\sum_{p \in \c{P}} \mu(p)\phi(p_i)\|^2 + \|\sum_{q \in \c{Q}} \nu(q)\phi(q)\|^2  - 2\langle \sum_{p \in \c{P}} \mu(p)\phi(p), \sum_{q \in \c{Q}} \nu(q)\phi(q) \rangle 
%\\ &= 
%\langle \Phi(\c{P}), \Phi(\c{P}) \rangle + \langle \Phi(\c{Q}), \Phi(\c{Q}) \rangle - 2\langle \Phi(\c{P}), \Phi(\c{Q}) \rangle 
%\\ &=
%\| \Phi(\c{P}) - \Phi(\c{Q}) \|_{\c{H}}^2.  \qedhere
%\end{align*}
%\end{proof}

%%%%%%%%%%%%%%%%%%%%%%%%%%%%%%%%%%%%%%%%%%%%%%%%%%%%%%%%%%%%
\section{Coresets for the Kernel Distance}
\label{sec:coreset}

%%%%%%%%%%%%%%%%%%%%%%%%%%%%%%%%
\subsection{Alternative Coreset Proof}
In Section \ref{sec:core-set-current} we presented a construction for a coreset for the kernel distance, that depended only on $1/\eps$ for a fixed kernel.  This assumed that $K^+ = \max_{p,q} K(p,q)$ was bounded and constant.  Here we present an alternative proof (also by random sampling) which can be made independent of $K^+$.  This proof is also interesting because it relies on $\rho$ the dimension of the feature map.  

%The kernel norm (and distance) can be approximated in near-linear time; however, this may be excessive for large data sets.  Rather, we extract a small subset (a coreset) $\c{S}$ from the input $\c{P}$.  
Again,  we sample $k$ points uniformly from $P$ and reweight $\eta(p) = W/k$. 
%
%We construct a coreset $\c{S}$ of $\c{P}$ by sampling $k$ points uniformly from $P$ and place each point $p$ into $S$ with weight $\eta(p) = W/k$.

\begin{lemma}
By constructing $S$ with size $k = O((1/\eps^3) \log(n/\delta) \log((1/\eps \delta)\log n))$ we guarantee $D_K^2(\c{P}, \c{S}) \leq \eps W^2$, with probability at least $1-\delta$.  
\label{thm:apx-S}
\end{lemma}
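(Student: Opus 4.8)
The plan is to lift the problem into the finite-dimensional feature space of Section~\ref{sec:rahimi-recht-appr}, so that bounding $D_K^2(\c{P},\c{S})$ reduces to asking how well a uniform sample of $k$ points estimates the mean of a set of norm-bounded vectors in $\reals^\rho$. First I would fix the random-projection feature map $\tilde\phi:\reals^d\to\reals^\rho$ of Lemma~\ref{lem:rand-feat}, instantiated with failure probability $\delta/2$ and a small enough constant multiple of $\eps$, so that $\rho=O((1/\eps^2)\log(n/\delta))$ and, simultaneously over all pairs of point sets drawn from $\c{P}$ (in particular for $\kappa(\c{P},\c{P})$, $\kappa(\c{P},\c{S})$ and $\kappa(\c{S},\c{S})$, legitimate since $\c{S}\subseteq\c{P}$), the feature inner products approximate the corresponding $\kappa$ values within $O(\eps W^2)$. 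Writing $\tilde\Phi(\c{P})=\sum_{p\in\c{P}}\tilde\phi(p)$ and $\tilde\Phi(\c{S})=\sum_{s\in\c{S}}\tilde\phi(s)$ (with the reweighting $\eta(s)=W/k$ baked into $\tilde\phi$ on $\c{S}$) and expanding both $D_K^2$ and $\|\tilde\Phi(\c{P})-\tilde\Phi(\c{S})\|^2$ into their three $\kappa$-like terms, this gives $D_K^2(\c{P},\c{S})\le\|\tilde\Phi(\c{P})-\tilde\Phi(\c{S})\|^2+O(\eps W^2)$. Hence, after rescaling $\eps$ by a constant, it suffices to guarantee $\|\tilde\Phi(\c{P})-\tilde\Phi(\c{S})\|^2\le\eps W^2$ with probability at least $1-\delta/2$.

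For that I would use that $\tilde\Phi(\c{S})$ is an unbiased estimator of $\tilde\Phi(\c{P})$: with uniform weights $\mu(p)=W/n$ on $\c{P}$, we have $\tilde\Phi(\c{S})=(W/k)\sum_{i=1}^k\hat\phi(s_i)$ for i.i.d.\ uniform samples $s_1,\dots,s_k$ from $\c{P}$ (where $\hat\phi$ is the unweighted feature vector), and $E[\tilde\Phi(\c{S})]=(W/n)\sum_{p\in\c{P}}\hat\phi(p)=\tilde\Phi(\c{P})$. I would then bound the deviation coordinate by coordinate: for each $j\in[\rho]$, $\tilde\Phi(\c{S})_j$ is a sum of $k$ independent terms, each of magnitude at most $\eta(s)=W/k$ by the crude but valid bound $|\tilde\phi(s)_j|\le\|\tilde\phi(s)\|\le\eta(s)$, so the Chernoff--Hoeffding inequality quoted in the proof of Lemma~\ref{lem:rand-feat} gives $\Pr[|\tilde\Phi(\c{S})_j-\tilde\Phi(\c{P})_j|\ge\alpha]\le 2e^{-\Omega(\alpha^2 k/W^2)}$. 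Choosing $\alpha=W\sqrt{\eps/\rho}$ makes $\sum_j(\tilde\Phi(\c{S})_j-\tilde\Phi(\c{P})_j)^2\le\eps W^2$ whenever the event holds in every coordinate, and the union bound over the $\rho$ coordinates demands $\rho\cdot 2e^{-\Omega(\eps k/\rho)}\le\delta/2$. Solving for $k$ yields $k=O((\rho/\eps)\log(\rho/\delta))$, and substituting $\rho=O((1/\eps^2)\log(n/\delta))$ produces $k=O((1/\eps^3)\log(n/\delta)\log((1/\eps\delta)\log n))$, as claimed. A final union bound over the two $\delta/2$ failure events (feature map, sampling) finishes the argument.

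The main obstacle is not any single estimate but the feedback between the two parameters: the per-coordinate error budget is forced to be $\eps W^2/\rho$, yet $\rho$ itself grows like $(1/\eps^2)\log(n/\delta)$, so that budget shrinks while the Hoeffding tail must still survive a union bound over $\rho=\mathrm{poly}(1/\eps,\log n)$ coordinates; this is exactly what inflates the sample size from the $1/\eps^2$ of Section~\ref{sec:core-set-current} to $1/\eps^3$ and introduces the nested logarithm $\log((1/\eps\delta)\log n)=O(\log(\rho/\delta))$. I would also be careful to apply the two-set feature-map guarantee of Lemma~\ref{lem:rand-feat} correctly to the three self- and cross-terms on the common ground set $\c{P}$, and to keep all weights uniform so that $\tilde\Phi(\c{S})$ is genuinely an i.i.d.\ average. (One could sidestep the feature map and apply a Hilbert-space Bernstein inequality to $\Phi(\c{S})=\sum_{s}\eta(s)\phi(s)$ directly, obtaining a smaller $k$ with no dependence on $\rho$; but the point of this alternative argument is precisely to exhibit the cost of the feature-map route and to give a bound stated in terms of $\rho$ that is independent of $K^+$.)
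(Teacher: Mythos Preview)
Your proposal is correct and follows essentially the same route as the paper: reduce to the finite-dimensional feature space of Lemma~\ref{lem:rand-feat} (incurring $O(\eps W^2)$ error with probability $1-\delta/2$), then use that $\tilde\Phi(\c{S})$ is an unbiased estimator of $\tilde\Phi(\c{P})$ and apply Hoeffding coordinate-wise with the bound $\|\tilde\phi(s)\|\le W/k$, union-bounding over $\rho$ coordinates to get $k=O((\rho/\eps)\log(\rho/\delta))$. Your treatment is arguably cleaner in that you explicitly expand $D_K^2$ and $\|\tilde\Phi(\c{P})-\tilde\Phi(\c{S})\|^2$ into their three $\kappa$-terms and track the uniform-weight assumption needed for unbiasedness.
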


\begin{proof}
The error in our approximation will come from two places: (1) the size of the sample, and (2) the dimension $\rho$ of the feature space we perform the analysis in.  

Let $\phi : \c{P} \times \b{R}^+ \to \c{H}$ describe the true feature map from a point $p \in \c{P}$, with weight $\mu(p)$, to an infinite dimensional feature space.   As before, set $\Phi(\c{P})  = \sum_{p \in \c{P}} \mu(p)\phi(p)$, and recall that $D_K(\c{P},\c{Q}) = \| \Phi(\c{P}) - \Phi(\c{Q})\|_{\c{H}}$, for any pair of shapes $\c{P}$ and $\c{Q}$.  

By the results in the previous section, we can construct  $\tilde \phi : \c{P} \times \b{R}^+ \to \b{R}^\rho$ (such as $\phi_\Upsilon$ defined for Lemma \ref{lem:rand-feat}) such that $\tilde \Phi(\c{P}) = \sum_{p \in \c{P}} \tilde \phi(p)$ and for any pair of shapes $\c{P}$ and $\c{S}$ with weights $W = \sum_{p \in \c{P}} \mu(p) = \sum_{p \in \c{S}} \eta(p)$, we have 
$
\left| \|\Phi(\c{P}) - \Phi(\c{S})\|_{\c{H}}^2 - \|\tilde \Phi(\c{P}) - \tilde \Phi(\c{S})\|^2 \right| \leq (\eps/2) W^2,
$
with probability at least $1-\delta/2$.  
This bounds the error in the approximation of the feature space.

We now use the low dimension $\rho$ of this approximate feature space to bound the sampling error.
Specifically, we just need to bound 
the probability that $\|\tilde \Phi(\c{P}) - \tilde \Phi(\c{S})\|^2  = \|E[\tilde \Phi(\c{S})] - \tilde \Phi(\c{S})\|^2 \geq (\eps/2) W^2$, since $E[\tilde \Phi(\c{S})] = \tilde \Phi(\c{P})$.  This is always true if for each dimension (or pair of dimensions if we alter bounds by a factor $2$) $m \in [1,\rho ]$ we have $\left|\tilde \Phi(\c{S})_m - E[\tilde \Phi(\c{S})_m] \right| \leq \sqrt{\eps W^2/ (2\rho) }$, so we can reduce to a $1$-dimensional problem.

We can now invoke the following Chernoff-Hoeffding bound.  Given a set $\{X_1, \ldots, X_r\}$ of independent random variables, such that $\left| X_i - E[X_i] \right| \leq \Lambda$, then for $M = \sum_{i=1}^r X_i$ we can bound $\Pr[| M - r E[X_i] | \geq \alpha] \leq 2 e^{-2 \alpha^2 / (r\Lambda^2)}$.  

By letting $\alpha = \sqrt{W^2 \eps/(2\rho)}$ and $X_i = \tilde \phi(p_i)_m$, the $m$th coordinate of $\tilde \phi(p_i)$ for $p_i \in S$,   
\begin{align*}
\Pr\left[\left\|\tilde \Phi(\c{S}) - \tilde \Phi(\c{P})\right\|^2 \geq (\eps/2) W^2 \right]
 = & 
\Pr\left[ \left\|\tilde \Phi(\c{S}) - E\left[\tilde \Phi(\c{S})\right]\right\|^2 \geq (\eps/2) W^2\right]
\\ \leq &
\rho  \Pr\left[\left|\Phi(\c{S})_m - k E\left[\tilde \phi(p_i)_m\right]\right| \geq \sqrt{\eps W^2/(2\rho)} \right] 
\\ \leq &
\rho  2 e^{-2\frac{\eps W^2}{2\rho} /(k\Lambda^2)} 
 \leq  
\rho  2 e^{-\frac{\eps W^2}{\rho k} \frac{k^2}{4 W^2}}
 = 
\rho 2 e^{- k \eps /(4\rho)},
\end{align*}
where the last inequality follows because $\Lambda = \max_{p,q \in S} ||\tilde{\phi}(p) - \tilde{\phi}(q)|| \leq 2 W/k$ since for any $p \in S$ we have $||\tilde{\phi}(p)|| = W/k$.  
By setting $\delta/2 \geq \rho 2 e^{-k \eps / (4 \rho)}$, we can solve for $k = O((\rho/\eps) \log (\rho/\delta))$.   
The final bound follows using $\rho = O((1/\eps^2) \log (n/\delta))$ in Lemma \ref{lem:rand-feat}.  
\end{proof}

Again using the feature map summarized by Lemma \ref{lem:rand-feat} we can compute the norm in feature space in $O(\rho k)$ time, after sampling $k = O((1/\eps^3) \log (n/\delta) \log((1/\eps \delta) \log n))$ points from $\c{P}$ and with $\rho = O((1/\eps^2) \log (n/\delta))$.  

\begin{theorem}
We can compute a vector $\hat S = \tilde \Phi(\c{S}) \in \b{R}^\rho$ in time $O(n + (1/\eps^5)\log^2(n/\delta)\log^2((1/\eps\delta)\log n)))$ such that $\left\| \tilde \Phi(\c{P}) - \hat S \right\|^2 \leq \eps W^2$ with probability at least $1-\delta$.  
\end{theorem}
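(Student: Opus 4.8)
The plan is to read this off from the constructive proof of Lemma~\ref{thm:apx-S} together with the random-projection feature map of Lemma~\ref{lem:rand-feat}, the only genuinely new content being the runtime accounting. First I would make one pass over $\c{P}$ to compute $W = \sum_{p \in \c{P}} \mu(p)$ and to draw the sample $\c{S} \subset \c{P}$ of $k = O((1/\eps^3)\log(n/\delta)\log((1/\eps\delta)\log n))$ points uniformly at random, reweighting each sampled point to $\eta(p) = W/k$; this costs $O(n)$ time. Next I would fix the map $\tilde\phi = \phi_\Upsilon$ of Lemma~\ref{lem:rand-feat} with $\rho = O((1/\eps^2)\log(n/\delta))$ coordinates (drawing the $\rho/2$ frequency vectors $\omega_i$ from the Gaussian $g$ once), and output $\hat S = \tilde\Phi(\c{S}) = \sum_{p \in \c{S}} \tilde\phi(p) \in \b{R}^\rho$.

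For correctness, the key observation is that the target inequality $\|\tilde\Phi(\c{P}) - \hat S\|^2 \leq \eps W^2$ is a statement about the squared $\ell_2$ distance in $\b{R}^\rho$, i.e. about $\|\tilde\Phi(\c{P}) - \tilde\Phi(\c{S})\|^2$, not about the true RKHS distance. But this is exactly the quantity controlled by the second half of the proof of Lemma~\ref{thm:apx-S}: the coordinate-wise Chernoff--Hoeffding argument there (using $\|\tilde\phi(p)\| = W/k$ so that $\Lambda \leq 2W/k$, and $E[\tilde\Phi(\c{S})] = \tilde\Phi(\c{P})$) yields $\Pr[\|\tilde\Phi(\c{S}) - \tilde\Phi(\c{P})\|^2 \geq \eps W^2] \leq \rho \cdot 2 e^{-k\eps/(4\rho)}$, which is at most $\delta$ for the stated $k = O((\rho/\eps)\log(\rho/\delta))$. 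Thus $\|\tilde\Phi(\c{P}) - \hat S\|^2 \leq \eps W^2$ with probability at least $1-\delta$. Note that the feature-approximation error of Lemma~\ref{lem:rand-feat} need not even be invoked for this particular bound (it would only be needed if we wanted to compare against $D_K^2(\c{P},\c{S})$ itself), so no extra triangle-inequality slack is incurred; one can absorb it by a constant rescaling of $\eps$ if desired.

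For the running time, after the $O(n)$-time sampling and preprocessing, the only remaining work is forming $\hat S$: each $\tilde\phi(p)$ has $\rho$ coordinates, each computed in $O(1)$ time for fixed ambient dimension $d$, so assembling $\hat S$ from the $k$ sampled points costs $O(k\rho)$. Substituting $k = O((1/\eps^3)\log(n/\delta)\log((1/\eps\delta)\log n))$ and $\rho = O((1/\eps^2)\log(n/\delta))$ and simplifying the nested logarithmic factors gives $O(k\rho) = O((1/\eps^5)\log^2(n/\delta)\log^2((1/\eps\delta)\log n))$, hence a total of $O(n + (1/\eps^5)\log^2(n/\delta)\log^2((1/\eps\delta)\log n))$ as claimed.

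I do not anticipate a real obstacle; the one point requiring care is recognizing that the claimed error bound lives in $\b{R}^\rho$ and is therefore governed \emph{directly} by the sampling tail bound embedded in the proof of Lemma~\ref{thm:apx-S}, with no further approximation loss. Everything else is the routine arithmetic of multiplying the coreset size by the feature dimension and collapsing logarithms.
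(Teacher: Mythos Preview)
Your proposal is correct and follows exactly the paper's approach: the paper's ``proof'' is the single sentence immediately preceding the theorem, which says to use the feature map of Lemma~\ref{lem:rand-feat} with $\rho = O((1/\eps^2)\log(n/\delta))$ and sample $k = O((1/\eps^3)\log(n/\delta)\log((1/\eps\delta)\log n))$ points as in Lemma~\ref{thm:apx-S}, for a cost of $O(\rho k)$ after the $O(n)$ scan. Your observation that the bound lives purely in $\b{R}^\rho$ and therefore only the sampling half of the argument in Lemma~\ref{thm:apx-S} is needed (not the feature-approximation half) is a correct clarification that the paper leaves implicit.
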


%%%%%%%%%%%%%%%%%%%%%%%%%%%%%%%
\subsection{NP-hardness of Optimal Coreset Construction}

In general, the problem of finding a fixed-size subset that closely approximates the kernel norm is NP-hard.

\begin{defn}[\textsc{Kernel Norm}]
Given a set of points $\c{P} = \{X_i\}_{i=1}^n$, a kernel function $K$, parameter $k$ and a threshold value $t$, determine if there is a subset of points $\c{S} \subset \c{P}$ such that $|\c{S}|=k$ and $D_K(\c{S},\c{P}) \leq t$.  
\end{defn}

\begin{theorem}
\label{thm:np-hard}
\textsc{Kernel Norm} is NP-hard, even in the case where $k=n/2$ and $t =0$.
\end{theorem}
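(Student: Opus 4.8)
The plan is to give a polynomial-time reduction from \textsc{Partition}: given positive integers $a_1,\dots,a_m$ with $\sum_{i=1}^m a_i = 2A$, decide whether there is an $I\subseteq\{1,\dots,m\}$ with $\sum_{i\in I}a_i=A$. Since \textsc{Partition} is NP-hard and the instance produced below has bit-length polynomial in $m$ and $\log(\max_i a_i)$, this suffices; moreover the instance will have $k=n/2$ and $t=0$, matching the claimed special case.

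First I would pin down what $D_K(\c{S},\c{P})=0$ means in a convenient setting. Fix the kernel to be the linear kernel $K(x,y)=\IP{x}{y}$, whose lifting map is the identity $\phi(x)=x$, and use the convention of the coreset constructions that a candidate subset $\c{S}$ of size $k$ is reweighted so that its total weight equals that of $\c{P}$; with unit weights on $\c{P}$ each point of $\c{S}$ then gets weight $W/k=n/k$, which for $k=n/2$ is $2$. Hence $\Phi(\c{S})=2\sum_{p\in\c{S}}p$ while $\Phi(\c{P})=\sum_{p\in\c{P}}p$, so
\[
D_K(\c{S},\c{P}) \;=\; \Big\|\,\textstyle\sum_{p\in\c{S}}p \;-\; \sum_{p\in\c{P}\setminus\c{S}}p\,\Big\| ,
\]
and for $|\c{S}|=n/2$ we get $D_K(\c{S},\c{P})=0$ precisely when $\c{P}$ is split into two equal-cardinality halves with identical coordinate sums.

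Next I would give the construction. From the \textsc{Partition} instance build $n=2m$ points in $\b{R}^2$: for each $i$, $p_i=(a_i,3^i)$ and $p_i'=(-a_i,3^i)$. These are $2m$ distinct points (the $a_i$ are positive and the $3^i$ are pairwise distinct), all of polynomial bit-length. Take the linear kernel, $k=m=n/2$, and $t=0$. The second coordinates form the super-increasing multiset $\{3^1,3^1,3^2,3^2,\dots,3^m,3^m\}$ of total $2\sum_{i=1}^m 3^i$, and a standard super-increasing argument shows that the only size-$m$ sub-multiset summing to $\sum_{i=1}^m 3^i$ is the one taking exactly one of $\{p_i,p_i'\}$ for each $i$. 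For such an $\c{S}$, with $I=\{i:p_i\in\c{S}\}$, the first coordinate of $\sum_{p\in\c{S}}p$ equals $\sum_{i\in I}a_i-\sum_{i\notin I}a_i=2\sum_{i\in I}a_i-2A$, which vanishes iff $\sum_{i\in I}a_i=A$; conversely a \textsc{Partition} solution $I$ yields $\c{S}=\{p_i:i\in I\}\cup\{p_i':i\notin I\}$ of size $m$ with coordinate sum $(0,\sum_i 3^i)$, i.e.\ half of $\sum_{p\in\c{P}}p$, so $D_K(\c{S},\c{P})=0$. Thus the \textsc{Kernel Norm} instance is a yes-instance iff the \textsc{Partition} instance is.

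The step I expect to require the most thought is picking a non-degenerate setting. For a strictly positive definite kernel (e.g.\ the Gaussian) the vectors $\{\phi(p)\}_{p\in\c{P}}$ are linearly independent, so $D_K(\c{S},\c{P})=0$ would force $\c{S}=\c{P}$ and the problem would be trivial; the reduction must therefore use a kernel whose feature map collapses dimensions (the linear kernel being the simplest) together with the reweighting of $\c{S}$, so that ``$D_K=0$'' becomes the balanced-partition condition rather than ``$\c{S}=\c{P}$''. Verifying the super-increasing argument (one point per pair is forced), and that the $2m$ points are distinct and of polynomial size, are the remaining routine details.
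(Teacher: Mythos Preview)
Your argument is correct. Both you and the paper reduce from \textsc{Partition} using the linear kernel $K(x,y)=\IP{x}{y}$ and the reweighting convention $\eta(p)=n/k$, so that $D_K(\c{S},\c{P})=0$ becomes the condition $\sum_{p\in\c{S}}p=\sum_{p\in\c{P}\setminus\c{S}}p$.

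The difference lies in the gadget. The paper works in $\b{R}^1$: it simply centers the input integers, setting $x_i'=x_i-\tfrac{1}{n}\sum_j x_j$, so that $\sum_{p\in\c{P}}p=0$ and the condition becomes $\sum_{i\in\c{S}}x_i=m$. Because $|\c{S}|=n/2$ is fixed, this is really a reduction from the \emph{equal-cardinality} variant of \textsc{Partition}; the paper tacitly relies on that variant being NP-hard. Your construction in $\b{R}^2$ sidesteps this: the super-increasing second coordinates $3^i$ force any feasible $\c{S}$ to contain exactly one of $\{p_i,p_i'\}$ for each $i$ (via the balanced-ternary uniqueness argument you sketch), so the cardinality constraint is absorbed automatically and the first coordinate encodes the \emph{standard} \textsc{Partition} condition $\sum_{i\in I}a_i=A$ with no restriction on $|I|$. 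The cost is a slightly larger instance ($n=2m$ points in $\b{R}^2$ with entries up to $3^m$, still polynomial bit-length); the benefit is that you reduce directly from the textbook problem without appealing to a variant. Your closing remark that a strictly positive definite kernel would trivialize the $t=0$ case is also well-taken and explains why the linear kernel is the right choice here.
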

\begin{proof}
To prove this, we apply a reduction from \textsc{Partition}: given a set $Q = \{x_i\}_{i=1}^n$ of integers with sum to $\sum_{i=1}^n = 2m$, determine if there is a subset adding to exactly $m$. 
Our reduction transforms $Q$ into a set of points $\c{P} = \{x'_i\}_{i=1}^n$ which has subset $\c{S}$ of size $k = n/2$ such that $||\c{S}-\c{P}|| \leq t$ if and only if $Q$ has a partition of two subsets $Q_1$ and $Q_2$ of size $n/2$ such that the sum of integers in each is $m$. 

Let $c = \frac{1}{n} \sum x_i = 2m/n$ and $x'_i=x_{i}-c$ and let $t =0$.  
Let the kernel function $K$ be an identity kernel defined $K(a,b) = \langle a, b \rangle$, where the feature map is defined $\phi(a) = a$.  This defines the reduction.  

Let $s = (n/k) \sum_{x_i' \in \c{S}} x_i'$ and $p = \sum_{x_i' \in \c{P}} x_i'$.  Note that $p=0$ by definition.
Since we have an identity kernel so $\phi(a) = a$, $D_K(\c{S}, \c{P}) = \|s- p\|$.  Thus there exists an $\c{S}$ that satisfies $D_K(\c{S}, \c{P}) \leq 0$ if and only if $s = 0$.  

We now need to show that $s$ can equal $0$ if and only if there exists a subset $Q_1 \subset Q$ of size $n/2$ such that its sum is $m$.  
We can write 
\[
s 
= 
\frac{n}{k}\sum_{x'_i \in \c{S}} x'_i 
= 
2 \sum_{x'_i \in \c{S}} \left(x_i - \frac{2m}{n}\right) 
= 
-2m + 2 \sum_{x_i' \in \c{S}} x_i.
\] 
Thus $s=0$ if and only if $\sum_{x'_i \in \c{S}} x_i = m$.  Since $\c{S}$ must map to a subset $Q_1 \subset Q$, where $x'_i \in \c{S}$ implies $x_i \in Q_1$, then $s=0$ holds if and only if there is a subset $Q_1 \subset Q$ such that $\sum_{x_i \in Q_1} = m$.  This would define a valid partition, and it completes the proof.  
\end{proof}

\end{document}